\begin{document}

\title{Distributed Sequence Memory of Multidimensional Inputs in Recurrent Networks}%

\author{\name Adam S. Charles \email adamsc@princeton.edu \\
       \addr Princeton Neuroscience Institute\\
       Princeton University\\
       Princeton, NJ 08544, USA
       \AND
       \name Dong Yin \email dongyin@berkeley.edu \\
       \addr Department of Electrical Engineering and Computer Sciences\\
       University of California, Berkeley\\
       Berkeley, CA 94720-1776, USA
       \AND
       \name Christopher J. Rozell \email crozell@gatech.edu \\
       \addr School of Electrical and Computer Engineering\\
       Georgia Institute of Technology\\
       Atlanta, GA 30332-0250, USA}

\editor{}
	   
\maketitle


\begin{abstract}%

Recurrent neural networks (RNNs) have drawn interest from machine learning researchers because of their effectiveness at preserving past inputs for time-varying data processing tasks.  To understand the success and limitations of RNNs, it is critical that we advance our analysis of their fundamental memory properties.  We focus on echo state networks (ESNs), which are RNNs with simple memoryless nodes and random connectivity. In most existing analyses, the short-term memory (STM) capacity results conclude that the ESN network size must scale linearly with the input size for unstructured inputs. The main contribution of this paper is to provide general results characterizing the STM capacity for linear ESNs with multidimensional input streams when the inputs have common low-dimensional structure: sparsity in a basis or significant statistical dependence between inputs. In both cases, we show that the number of nodes in the network must scale linearly with the information rate and poly-logarithmically with the input dimension. The analysis relies on advanced applications of random matrix theory and results in explicit non-asymptotic bounds on the recovery error. Taken together, this analysis provides a significant step forward in our understanding of the STM properties in RNNs.
\end{abstract}

\begin{keywords}
         short-term memory, recurrent neural networks, sparse signal recovery, low-rank recovery, restricted isometry property
\end{keywords}

\newpage

\section{Introduction}
\label{sec:intro}

Recurrent neural networks (RNNs) have drawn interest from researchers because of their effectiveness at processing sequences of data~\citep{JAE:2001,lukovsevivcius2012practical,hinaut2014exploring}.  While deep networks have shown remarkable performance improvements at task such as image classification, RNNs have recently been successfully employed as layers in conventional deep neural networks to expand these tools into tasks with time-varying data~\citep{sukhbaatar2015end,gregor2015draw,graves2013speech,bashivan2016learningeeg}. This inclusion is becoming increasingly important as neural networks are being applied to a growing variety of inherently temporal high-dimensional data, such as video~\citep{donahue2015long}, audio~\citep{graves2013speech}, EEG data~\citep{bashivan2016learningeeg}, two-photon calcium imaging ~\citep{apthorpe2016automatic}. Despite the growing use of both deep and recurrent networks, theory characterizing the properties of such networks remain relatively unexplored. For deep neural networks, much of the computational power is often attributed to flexibility in learned representations~\citep{mallat2016understanding,vardan2016convolutional,patel2015probabilistic}. The power of RNNs, however, is tied to the ability of the recurrent network dynamics to act as a distributed memory substrate, preserving information about past inputs to leverage temporal dependencies for data processing tasks such as classification and prediction.  To understand the success and limitations of RNNs, it is critical that we advance our analysis of the fundamental memory properties of these network structures.

There are many types of recurrent network structures that have been employed in machine learning applications, each with varying complexity in the network elements and the training procedures.  In this paper we will focus on RNN structures known as echo state networks (ESNs).  These networks have discrete time continuous-valued nodes $\ensuremath{\bm{{x}}[n]}\xspace\in\ensuremath{\mathbb{R}}\xspace^\ensuremath{M}\xspace$ that evolve at time $n$ in response to the inputs $\ensuremath{\bm{{s}}[n]}\xspace\in\ensuremath{\mathbb{R}}\xspace^\ensuremath{L}\xspace$ according to the dynamics:
\begin{gather}
        \ensuremath{\bm{{x}}[n+1]}\xspace = f(\ensuremath{\bm{W}}\xspace\ensuremath{\bm{{x}}[n]}\xspace + \bm{Z}\ensuremath{\bm{{s}}[n]}\xspace + \ensuremath{\bm{\epsilon}[n]}\xspace),     \label{eqn:genNNdyn} 
\end{gather}
where $\ensuremath{\bm{W}}\xspace\in\ensuremath{\mathbb{R}}\xspace^{\ensuremath{M}\xspace\times\ensuremath{M}\xspace}$ is the connectivity matrix defining the recurrent dynamics, $\bm{Z}\in\ensuremath{\mathbb{R}}\xspace^{\ensuremath{M}\xspace\times\ensuremath{L}\xspace}$ is the weight vector describing how the input drives the network, $f(\cdot): \ensuremath{\mathbb{R}}\xspace^\ensuremath{M}\xspace\rightarrow\ensuremath{\mathbb{R}}\xspace^\ensuremath{M}\xspace$ is an element-wise nonlinearity evaluated at each node and $\ensuremath{\bm{\epsilon}[n]}\xspace\in\ensuremath{\mathbb{R}}\xspace^\ensuremath{M}\xspace$ represents the error due to potential system imperfections~\citep{JAE:2001,wilson1972excitatory,amari1972characteristics,sompolinsky1988chaos,MAA:2002}.  In an ESN, the connectivity matrix $\ensuremath{\bm{W}}\xspace$ is random and untrained, while the simple individual nodes have a single state variable with no memory.  This is in contrast to approaches such as long short-term memory units~\citep{sak2014long,lipton2016lstm,kalchbrenner2016gridlstm} which have individual nodes with complex memory properties.  As with many other recent papers, we will also focus on linear networks where $f(\cdot)$ is the identity function~\citep{JAE:2001,JAE:2004,WHI:2004,SOM:2008,GAN:2010,charles2014short,latham2013}.

The memory capacity of these networks has been studied in both the machine learning and computational neuroscience literature.  In the approach of interest, the short-term memory (STM) of a network is characterized by quantifying the relationship between the transient network activity and the recent history of the exogenous input stream driving the network~\citep{JAE:2004,MAA:2002,GAN:2010,latham2013,verstraeten2007experimental,WHI:2004,lukovsevivcius2009reservoir,buonomano2009state,charles2014short}.  Note that this is in contrast to alternative approaches that characterize long-term memory in RNNs through quantifying the number of distinct network attractors that can be used to stably remember input patterns with the asymptotic network state.  In the vast majority of the existing theoretical analysis of STM, the results conclude that networks with \ensuremath{M}\xspace nodes can only recover inputs of length $\ensuremath{N}\xspace \leq \ensuremath{M}\xspace$~\citep{WHI:2004,latham2013} when the inputs are unstructured. 

However, in any machine-learning problem of interest, the input statistical structure is precisely what we intend to exploit to accomplish meaningful tasks.  For one example, many signals are well-known to admit a sparse representation in a transform dictionary~\citep{ELA:2008,DAV:2006}.  In fact, some classes of deep neural networks have been designed to induce sparsity at higher layers that may serve as inputs into the recurrent layers~\citep{lecun2010convolutional,kavukcuoglu2010learning}.   For another example, a collection of time-varying input streams (e.g., pixels or image features in a video stream) are often heavily correlated.  In the specific case of single input streams ($\ensuremath{L}\xspace=1$) with inputs that are $\ensuremath{K}\xspace$-sparse in a basis, recent work~\citep{charles2014short} has shown that the STM capacity can scale as favorably as $\ensuremath{M}\xspace = \mathcal{O}\left(\ensuremath{K}\xspace\log^{\gamma}(\ensuremath{N}\xspace)\right) \leq \ensuremath{N}\xspace$, where $\gamma \geq 1$ is a constant. In other words, the memory capacity can scale linearly with the information rate in the signal and only logarithmically with the signal dimension, resulting in the potential for recovery of inputs of length $\ensuremath{N}\xspace \gg \ensuremath{M}\xspace$.  Unfortunately, existing analyses~\citep{JAE:2001,WHI:2004,GAN:2010,charles2014short} are generally specific to the restricted case of single time-series inputs ($\ensuremath{L}\xspace=1$) or unstructured inputs~\citep{verstraeten2010memory}.

Conventional wisdom is that structured inputs should lead to much higher STM capacity, though this has never been addressed with strong analysis in the general case of ESNs with multidimensional input streams.  The main contribution of this paper is to provide general results characterizing the STM capacity for linear randomly connected ESNs with multidimensional input streams when the inputs are either sparse in a basis or have significant statistical dependence (with no sparsity assumption).  In both cases, we show that the number of nodes in the network must scale linearly with the information rate and poly-logarithmically with the total input dimension.  The analysis relies on advanced applications of random matrix theory, and results in non-asymptotic analysis of explicit bounds on the recovery error.  Taken together, this analysis provides a significant step forward in our understanding of the STM properties in RNNs.  While this paper is primarily focused on network structures in the context of RNNs in machine learning, these results also provide foundation for the theoretical understanding of recurrent network structures in biological neural networks, as well as the memory properties in other network structures with similar dynamics (e.g., opinion dynamics in social networks). 


\section{Background and Related Work}


\subsection{Short Term Memory in Recurrent Networks}

Many approaches have been used to analyze the STM of randomly connected networks, including nonlinear networks~\citep{sompolinsky1988chaos,massar2013mean,faugeras2009constructive,rajan2010stimulus,galtier2014local,wainrib2015context} and linear networks~\citep{JAE:2001,JAE:2004,WHI:2004,SOM:2008,GAN:2010,charles2014short,latham2013} with both discrete-time and continuous-time dynamics. These methods can be broadly be classified as either correlation-based methods~\citep{WHI:2004,SOM:2008} or uniqueness methods~\citep{JAE:2001,MAA:2002,JAE:2004,charles2014short,legenstein2007edge,busing2010connectivity}. Correlation methods focus on quantifying the correlation between the network state and recent network inputs. In these studies, the STM is defined as the time of the oldest input where the correlation between the network state and that input remains above a given threshold~\citep{WHI:2004,SOM:2008}. These methods have mostly been applied to discrete-time systems, and have resulted in bounds on the STM that scale linearly with the number of nodes (i.e. $\ensuremath{M}\xspace > \ensuremath{N}\xspace$).   

In contrast, uniqueness methods instead aim to show that  different network states correspond to unique input sequences (i.e. the network dynamics are bijective).\footnote{We note that uniqueness-based methods imply recovery-based methods, modulo a recovery algorithm, as often recovery guarantees are based on some semblance of a bijection.} For uniqueness methods, the STM is defined as the longest input length where this input-network state bijection still holds. These methods have been used under the term \emph{separability property} for continuous-time liquid state machines~\citep{MAA:2002,vapnik1971uniform,legenstein2007edge,latham2013,busing2010connectivity} and under the term \emph{echo-state property} for discrete-time ESNs~\citep{JAE:2001,yildiz2012re,buehner2006tighter,manjunath2013echo}.   The echo-state property is the method most related to the approach we take here, and essentially derives the maximum length of the input signal such that the resulting network states remain unique. While this property guarantees a bijection between inputs and network states, it does not take into account input signal structure, does not capture the robustness of the mapping, and does not provide guarantees for stably recovering the input from the network activity.



\subsection{Compressed Sensing}
\label{sec:CS}

The compressed sensing literature and its recent extensions include many tools for studying the effects of random matrices applied to low-dimensional signals.  Specifically, in the basic compressed sensing problem we desire to recover the  signal $\ensuremath{\bm{{s}}}\xspace\in\ensuremath{\mathbb{R}}\xspace^\ensuremath{N}\xspace$ from \ensuremath{M}\xspace measurements\footnote{In the case of RNNs, the network node values act as the measurements of our system, prompting the use of \ensuremath{M}\xspace as the number of measurements in this section} generated from a random linear measurement operator,
\begin{gather}
        \ensuremath{\bm{{x}}}\xspace = \ensuremath{\mathcal{{A}}\left( \ensuremath{\bm{{s}}}\xspace \right)}\xspace + \ensuremath{\bm{\epsilon}}\xspace,  \label{eqn:measEQ}
\end{gather}
where $\ensuremath{\bm{\epsilon}}\xspace\in\ensuremath{\mathbb{R}}\xspace^{\ensuremath{M}\xspace}$ represents the potential measurement errors. Typically, \ensuremath{\bm{{s}}}\xspace is assumed to have low-dimensional structure and recovery is performed via a convex optimization program.  The most common example is a sparsity model where \ensuremath{\bm{{s}}}\xspace can be represented as
\begin{gather*}
	\ensuremath{\bm{{s}}}\xspace = \ensuremath{\bm{\Psi}}\xspace\ensuremath{\bm{{a}}}\xspace \nonumber,
\end{gather*}
where $\ensuremath{\bm{\Psi}}\xspace\in\ensuremath{\mathbb{R}}\xspace^{\ensuremath{N}\xspace\times\ensuremath{N}\xspace}$ is a transform matrix and $\ensuremath{\bm{{a}}}\xspace\in\ensuremath{\mathbb{R}}\xspace^{\ensuremath{N}\xspace}$ is the sparse coefficient representation of \ensuremath{\bm{{s}}}\xspace with $\ensuremath{K}\xspace \ll \ensuremath{N}\xspace$ of its entries non-zero. Under this sparsity assumption, the coefficient representation is recoverable if the linear operator \ensuremath{\mathcal{{A}}}\xspace satisfies the restricted isometry property (RIP) that guarantees uniqueness of the compressed measurements.  Specifically, we say that \ensuremath{\mathcal{{A}}}\xspace satisfies the RIP(2\ensuremath{K}\xspace,\ensuremath{\delta}\xspace) if for every 2\ensuremath{K}\xspace-sparse signal \ensuremath{\bm{{s}}}\xspace, the following condition is satisfied:
\begin{gather*}
	\ensuremath{C}\xspace\left(1-\ensuremath{\delta}\xspace\right) \leq \ensuremath{\left|\left| {\ensuremath{\mathcal{{A}}}\xspace{\ensuremath{\bm{{s}}}\xspace}} \right|\right|}\xspace_2^2/\ensuremath{\left|\left| {\ensuremath{\bm{{s}}}\xspace} \right|\right|}\xspace_2^2 \leq \ensuremath{C}\xspace\left(1+\ensuremath{\delta}\xspace\right), \nonumber 
\end{gather*}
where $0<\ensuremath{\delta}\xspace<1$ and $C>0$ is a positive constant. When \ensuremath{\mathcal{{A}}}\xspace satisfies the RIP(2\ensuremath{K}\xspace,\ensuremath{\delta}\xspace) the sparse coefficients \ensuremath{\bm{{a}}}\xspace can be recovered by solving an $\ell_1$-norm based optimization function
\begin{gather}
	\ensuremath{\bm{{a}}}\xspace = \arg\min_{\ensuremath{\bm{{a}}}\xspace}\ensuremath{\left|\left| {\ensuremath{\bm{{a}}}\xspace} \right|\right|}\xspace_1 \quad \mbox{such that} \quad \ensuremath{\left|\left| {\ensuremath{\bm{{x}}}\xspace - \ensuremath{\mathcal{{A}}\left( \ensuremath{\bm{\Psi}}\xspace\ensuremath{\bm{{a}}}\xspace \right)}\xspace} \right|\right|}\xspace_2\leq \ensuremath{\left|\left| {\ensuremath{\bm{\epsilon}}\xspace} \right|\right|}\xspace_2, \label{eqn:l1min}
\end{gather}
up to a reconstruction error given by
\begin{gather}
	\ensuremath{\left|\left| {\ensuremath{\bm{{s}}}\xspace - \widehat{\ensuremath{\bm{{s}}}\xspace}} \right|\right|}\xspace_2 \leq \alpha \ensuremath{\left|\left| {\ensuremath{\bm{\epsilon}}\xspace} \right|\right|}\xspace_2 + \beta\frac{\ensuremath{\left|\left| {\ensuremath{\bm{\Psi}}\xspace^T \left(\ensuremath{\bm{{s}}}\xspace - \ensuremath{\bm{{s}}}\xspace_{\ensuremath{K}\xspace}\right)} \right|\right|}\xspace_1}{\sqrt{\ensuremath{K}\xspace}}, 	\label{eqn:DecRec}
\end{gather}
where $\alpha$ and $\beta$ are constants~\citep{CAN:2006b}. The first term of this recovery error bound depends on the norm of the measurement error \ensuremath{\bm{\epsilon}}\xspace, while the second term depends on the $\ell_1$ difference between the true signal and the best \ensuremath{K}\xspace-sparse approximation of the true vector ($\ensuremath{\bm{{s}}}\xspace_{\ensuremath{K}\xspace}$). This term essentially measures how closely the signal matches the sparsity model. The $\ell_1$ optimization program in~\eqref{eqn:l1min} required for recovery can be solved by many efficient algorithms, including neurally plausible architectures~\citep{ROZ:2008,BAL:2012,shapero.13,charles2012lca}. 

When the data of interest is a matrix $\ensuremath{\bm{{S}}}\xspace\in\ensuremath{\mathbb{R}}\xspace^{\ensuremath{L}\xspace\times\ensuremath{N}\xspace}$, other low-dimensional models have also been explored. For example, as an alternative to a sparsity assumption, the successful low-rank model assumes that there are correlations between rows and columns such that \ensuremath{\bm{{S}}}\xspace has rank $\ensuremath{R}\xspace < \min\{\ensuremath{L}\xspace,\ensuremath{N}\xspace\}$.  We can then write the decomposition of the matrix as
\begin{gather*}
	\ensuremath{\bm{{S}}}\xspace = \ensuremath{\bm{Q}}\xspace\ensuremath{\bm{V}^{\ast}}\xspace \nonumber, \label{eqn:LRdef}
\end{gather*}
where $\ensuremath{\bm{Q}}\xspace\in\ensuremath{\mathbb{R}}\xspace^{\ensuremath{L}\xspace\times\ensuremath{R}\xspace}$ and $\ensuremath{\bm{V}^{\ast}}\xspace\in\ensuremath{\mathbb{R}}\xspace^{\ensuremath{R}\xspace\times\ensuremath{N}\xspace}$. There is a rich and growing literature dedicated to establishing guarantees for recovering low-rank matrices from incomplete measurements. Due to the difficulty of establishing a general matrix-RIP property for observations of a matrix~\citep{recht2010guaranteed}, the guarantees in this literature more commonly use the optimality conditions for specific optimization procedures to show that the resulting solution has bounded error with high probability. The most common optimization program used for low-rank matrix recovery is the nuclear norm minimization,
\begin{gather}
	\ensuremath{\bm{{S}}}\xspace = \arg\min_{\ensuremath{\bm{{S}}}\xspace}\ensuremath{\left|\left| {\ensuremath{\bm{{S}}}\xspace} \right|\right|}\xspace_{\ast} \quad \mbox{such that} \quad \ensuremath{\left|\left| {\ensuremath{\bm{{x}}}\xspace - \ensuremath{\mathcal{{A}}\left( \ensuremath{\bm{{S}}}\xspace \right)}\xspace} \right|\right|}\xspace_2\leq \ensuremath{\left|\left| {\ensuremath{\bm{\epsilon}}\xspace} \right|\right|}\xspace_2, \label{eqn:nucnorm0}
\end{gather}
where the nuclear norm $\ensuremath{\left|\left| {\ensuremath{\bm{{S}}}\xspace} \right|\right|}\xspace_{\ast}$ is defined as the sum of the singular values of \ensuremath{\bm{{S}}}\xspace~\citep{candes2010power,candes2010matrix,recht2010guaranteed,chen2004recovering,fazel2002thesis,singer2010uniqueness,toh2010accelerated,liu2009interior,jaggi2010simple}. This optimization procedure is similar to the $\ell_1$-regularized optimization of Equation~\eqref{eqn:l1min}, however the nuclear-norm induces sparsity in the singular values rather than the matrix entries directly. 

The solution to Equation~\eqref{eqn:nucnorm0} can be shown to satisfy performance guarantees via the dual-certificate approach~\citep{candes2010matrix,ahmed2013compressive}. This technique is a proof by construction and shows that a dual certificate (i.e., a vector whose projections into and out of the space spanned by the singular vectors of \ensuremath{\bm{{S}}}\xspace are bounded) exists. Showing that such a certificate exists demonstrates that Equation~\eqref{eqn:nucnorm0} converges to a valid solution and is key to deriving accuracy bounds~\citep{candes2010matrix,ahmed2013compressive}. Specifically, if the dual certificate exists, then  the solution to Equation~\eqref{eqn:nucnorm0} satisfies the recovery bound
\begin{gather}
	\ensuremath{\left|\left| {\widehat{\ensuremath{\bm{{S}}}\xspace} - \ensuremath{\bm{{S}}}\xspace } \right|\right|}\xspace_F \leq \left(4\sqrt{\min(\ensuremath{N}\xspace,\ensuremath{L}\xspace)\frac{2\ensuremath{N}\xspace\ensuremath{L}\xspace + \ensuremath{M}\xspace}{\ensuremath{M}\xspace}} + 2\right)\epsilon, \label{eqn:NucNormErr}
\end{gather}
where the Forbenius norm $\|\cdot\|_F^2$ is defined as the sum of the squares of all the matrix entries. This bound demonstrates that perfect recovery is achievable in the case where there is no error ($\epsilon=0$). We note that alternate optimization programs with similar guarantees have been proposed in the literature for inferring low-rank matrices (i.e.~\citealp{ahmed2013compressive}), but we will focus on nuclear norm optimization approaches due to the extensive literature on nuclear-norm solvers and the connections to sparse vector inference.


\subsection{STM Capacity via the RIP}

The ideas and tools from the compressed sensing literature have recently been used to show that \emph{a-priori} knowledge of the input sparsity can lead to improvements recovery-based STM capacity results for ESNs.  For a single input stream under a sparsity assumption, \cite{GAN:2010} analyzed an annealed version of the network dynamics to show that the network memory capacity can be larger than the network size. Building on this observation,~\cite{charles2014short} provided an analysis of the exact network dynamics in an ESN (for the single input case of $L=1$), yielding precise bounds on a network's STM capacity captured in the following theorem:
\begin{theorem}\emph{(Theorem 4.1.1,~\citealp{charles2014short})}
	\label{thm:STMwithZ}
        Suppose $\ensuremath{N}\xspace \ge \ensuremath{M}\xspace$, $\ensuremath{N}\xspace \ge \ensuremath{K}\xspace$, $\ensuremath{N}\xspace \ge O(1)$,\footnote{We use $O(1)$ notation to indicate that a variable is a finite constant.} and $L=1$. Let $\bm{U}$ be any unitary matrix of eigenvectors (containing complex conjugate pairs) of the connectivity matrix \ensuremath{\bm{W}}\xspace and for $\ensuremath{M}\xspace$ an even integer, denote the eigenvalues of \ensuremath{\bm{W}}\xspace by $\{e^{j w_m}\}_{m = 1}^{\ensuremath{M}\xspace}$.  Let the first $\ensuremath{M}\xspace/2$ eigenvalues ($\{e^{j w_m}\}_{m = 1}^{\ensuremath{M}\xspace/2}$) be chosen uniformly at random on the complex unit circle (i.e., $\{w_m\}_{m=1}^{\ensuremath{M}\xspace/2}$ is uniformly distributed over $[0, 2\pi)$) and the other $\ensuremath{M}\xspace/2$ eigenvalues as the complex conjugates of these values. Furthermore, let the entries of the input weights \ensuremath{\bm{{z}}}\xspace be i.i.d.\ zero-mean Gaussian random variables with variance $\frac{1}{\ensuremath{M}\xspace}$. Given RIP conditioning \ensuremath{\delta}\xspace and failure probability $\ensuremath{N}\xspace^{-\log^4 \ensuremath{N}\xspace} \le \eta \le \frac{1}{e}$, if 
\begin{gather*}
	\ensuremath{M}\xspace \geq C\frac{\ensuremath{K}\xspace}{\ensuremath{\delta}\xspace^2}\mu^2\left(\ensuremath{\bm{\Psi}}\xspace\right)\log^{5}\left(\ensuremath{N}\xspace\right) \log(\eta^{-1}), \nonumber 
\end{gather*}
then for a universal constant $C$, with probability $1-\eta$ the mapping of length-$\ensuremath{N}\xspace$ input sequences into $\ensuremath{M}\xspace$ network state variables satisfies the RIP($2\ensuremath{K}\xspace$, $\ensuremath{\delta}\xspace$).
\end{theorem}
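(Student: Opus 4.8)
The plan is to make the linear map from the input history to the network state explicit, recognize it as a structured random matrix built from a randomly sampled Fourier system with random row weights, and then invoke restricted-isometry estimates for bounded orthonormal systems. First I would unroll the noise-free linear dynamics from a zero initial state, obtaining $\bm{x}[N] = \sum_{k=0}^{N-1}\bm{W}^{k}\bm{z}\,s[N-1-k]$, so that the mapping of the length-$N$ input $\bm{s}$ into the network state is the $M\times N$ matrix $\bm{A} = [\,\bm{z},\ \bm{W}\bm{z},\ \dots,\ \bm{W}^{N-1}\bm{z}\,]$. Writing $\bm{W} = \bm{U}\bm{\Lambda}\bm{U}^{\ast}$ with $\bm{\Lambda} = \mathrm{diag}(e^{j w_1},\dots,e^{j w_M})$ and setting $\tilde{\bm{z}} = \bm{U}^{\ast}\bm{z}$, the $(m,k)$ entry of $\bm{U}^{\ast}\bm{A}$ is $\tilde{z}_m\,e^{j w_m k}$, so $\bm{A} = \bm{U}\,\mathrm{diag}(\tilde{\bm{z}})\,\bm{F}$ where $\bm{F}$ is the Vandermonde/Fourier matrix $F_{m,k} = e^{j w_m k}$. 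Because $\bm{U}$ is unitary it preserves $\ell_2$ norms, so verifying $\mathrm{RIP}(2K,\delta)$ for $\bm{A}\bm{\Psi}$ reduces to proving it for $\bm{B} = \mathrm{diag}(\tilde{\bm{z}})\,\bm{F}\bm{\Psi}$.

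The second step is to identify the correct random-matrix ensemble. For a fixed $2K$-sparse coefficient vector $\bm{a}$ with $\bm{s} = \bm{\Psi}\bm{a}$, the $m$-th entry of $\bm{F}\bm{\Psi}\bm{a}$ equals the discrete-time Fourier transform of $\bm{s}$ sampled at the random frequency $w_m$. Since the functions $w\mapsto e^{jwk}$ are orthonormal under the uniform measure on $[0,2\pi)$ and uniformly bounded, the columns of $\bm{F}\bm{\Psi}$ form a bounded orthonormal system whose boundedness constant is governed by the mutual coherence $\mu(\bm{\Psi})$ between the Fourier measurement vectors and the sparsity basis; this is the source of the $\mu^2(\bm{\Psi})$ factor. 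A short computation using $\mathbb{E}|\tilde{z}_m|^2 = 1/M$ (from unitary invariance of the Gaussian weights) together with Parseval's identity shows $\mathbb{E}\,\|\bm{B}\bm{a}\|_2^2 = \|\bm{a}\|_2^2$, fixing the RIP centering constant $C=1$; note this is exactly what the variance $1/M$ of $\bm{z}$ is designed to produce. The conjugate-pairing of eigenvalues that keeps $\bm{W}$ real only reduces the number of independent frequencies to $M/2$ and is handled by bookkeeping that affects constants but not the scaling.

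The core of the argument is then a uniform concentration bound: I would show that $\|\bm{B}\bm{a}\|_2^2 = \sum_{m}|\tilde{z}_m|^2\,|\hat{s}(w_m)|^2$ stays within $(1\pm\delta)\|\bm{a}\|_2^2$ simultaneously over all $2K$-sparse $\bm{a}$. Conditioning on the frequencies $\{w_m\}$, the weights $|\tilde{z}_m|^2$ enter linearly and can be controlled by a Bernstein-type estimate; conditioning on the weights, the randomness in $\{w_m\}$ is exactly the random-sampling-of-a-BOS setting. I would either fold the two independent sources of randomness into a single second-order chaos process and bound its supremum over the sparse set by generic chaining (Dudley-type entropy integrals, in the spirit of Krahmer--Mendelson--Rauhut), or invoke an off-the-shelf RIP theorem for weighted bounded orthonormal systems. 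Tracking the relevant operator norm, radius, and boundedness quantities through the chaining estimate yields $M \gtrsim \delta^{-2}K\,\mu^2(\bm{\Psi})\log^{5}(N)$, with the union bound over a covering net producing the additional $\log(\eta^{-1})$ factor and the stated failure-probability regime.

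The main obstacle I anticipate is precisely this hybrid ensemble: $\bm{B}$ is neither a genuine subgaussian matrix (its columns are deterministic functions of the shared random frequencies and so are strongly dependent) nor a clean bounded orthonormal system (the random Gaussian diagonal breaks the uniform boundedness used in standard BOS proofs). Obtaining sharp control of the supremum of the associated chaos process --- keeping the coherence dependence at $\mu^2$ and the dimension dependence only poly-logarithmic in $N$ while carrying the heavier-tailed contribution of the Gaussian weights $\tilde{z}_m$ through the chaining --- is where the real work lies; the explicit power $\log^{5}(N)$ is a consequence of the (suboptimal but clean) entropy estimates available for this structured ensemble.
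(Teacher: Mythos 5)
Your proposal is correct and takes essentially the same approach as the paper's proof (Appendix A.1, which establishes the multi-input generalization that reduces to this statement at $L=1$): the identical decomposition $\bm{A}=\bm{U}\,\mathrm{diag}(\tilde{\bm{z}})\,\bm{F}\bm{\Psi}$ with the unitary factor discarded, the restricted-operator-norm formulation of the RIP, conjugate-pair bookkeeping by splitting the rows into two conjugate halves, and a chaining/entropy bound for the randomly-sampled Fourier system in which the unbounded Gaussian weights are handled by conditioning, $\chi^2$-type tail bounds on the maximal row, and truncation. The only differences are in which named tools carry the core step --- where you would invoke Krahmer--Mendelson--Rauhut chaining or an off-the-shelf weighted-BOS theorem, the paper applies Rauhut's Lemma 8.2 conditionally together with a Tropp-style concentration inequality --- and one minor attribution: the $\log(\eta^{-1})$ factor arises there from the deviation (tail-bound) step, not from a covering-net union bound.
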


\begin{figure*}[t]
	\centering
	\includegraphics[width=0.95\textwidth]{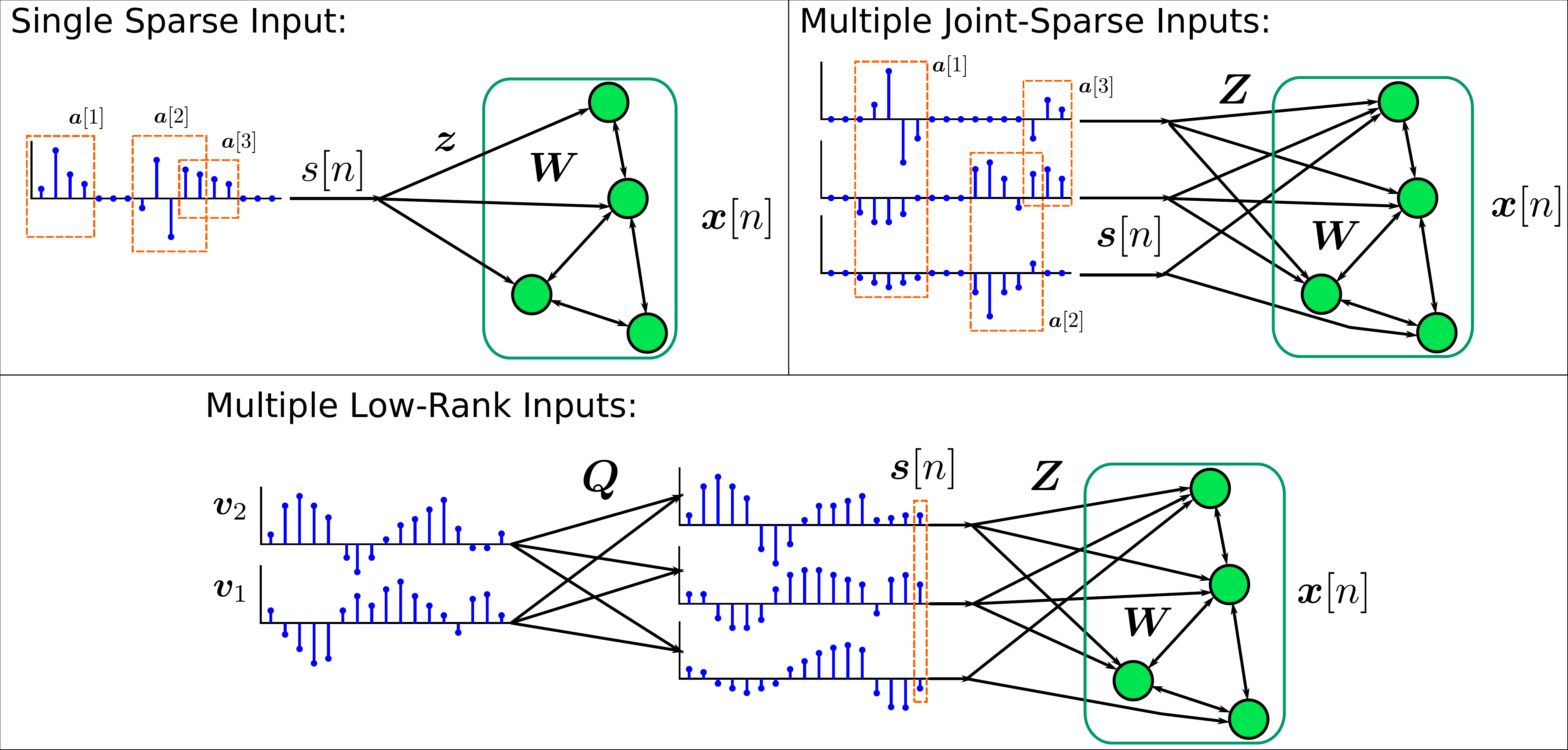}
        \caption{Echo-state networks can efficiently store inputs with a variety of low-dimensional structures. Top-Left: With a single input stream, the coefficients \ensuremath{\bm{{a}}}\xspace represent chunks of activity in the input stream \ensuremath{{s}[n]}\xspace (shown in red boxes). The raw input stream values then drive the network via Equation~\eqref{eqn:genNNdyn}, resulting in a transient network state \ensuremath{\bm{{x}}[N]}\xspace that encodes the input stream. Top-Right: In the case of multiple inputs that are jointly sparse, each coefficient can now represent a chunk of activity both across time and across input streams (as depicted by the red boxes). Bottom: When the multiple input streams are instead low-rank, each input stream is instead described by a linear combination of prototypical vectors $\bm{v}_k$. The matrix $\bm{Q}$ represents how the prototypical vectors are combined in order to obtain the input sequences fed into the network.}
	\label{fig:NNbasic}
\end{figure*}

This theorem proves a rigorous and non-asymptotic bound on the length of the input that can be robustly extracted from the network nodes. By showing the RIP property on the network dynamics, the recovery bound given in Equation~\eqref{eqn:DecRec} establishes the recovery performance for any \ensuremath{N}\xspace-length, \ensuremath{K}\xspace-sparse signal from the resulting network state at time $N$. In short, the number of required nodes scales linearly with the information rate of the signal (i.e., the sparsity level) and poly-logarithmically with the length of the input.  The coherence factor $\mu^2(\bm{\Psi})$, defined as
\begin{gather*}
	\mu\left(\ensuremath{\bm{\Psi}}\xspace\right) = \max_{n=1,\hdots,N}\sup_{t\in[0,2\pi]} \left|\sum_{m = 0}^{N-1}\ensuremath{\bm{\Psi}}\xspace_{m,n}e^{-jtm} \right|, \nonumber
\end{gather*}
expresses the \emph{types} of sparsity that are efficiently stored in the network. Essentially this coherence factor is large (on the order of $\sqrt{\ensuremath{N}\xspace}$) for inputs sparse in the Fourier basis, and is very low (essentially a small constant) for inputs that are sparse in bases different from the Fourier bases (e.g. wavelet transforms).   For the extreme case of Fourier-sparse inputs, the number of nodes must again exceed the number of inputs.  When this coherence is low and $\ensuremath{K}\xspace \ll \ensuremath{M}\xspace$, this bound is a clear improvement over existing results as it allows for $\ensuremath{N}\xspace > \ensuremath{M}\xspace$.  However, this result is restricted to single input streams with one type of low-dimensional structure.  The current paper addresses the much more general problem of multidimensional inputs and other types of low-dimensional structure.


\section{STM for Multi-Input Networks}
\label{sec:stm}

In this work we will use the tools of random matrix theory to establish STM capacity results for recurrent networks under the general conditions of multiple simultaneous input streams and a variety of low-dimensional models. The temporal evolution of the linear network with  multiple inputs is similar to the previous ESN definition, with the main difference being that the input at each time-step $\ensuremath{\bm{{s}}}\xspace[\ensuremath{n}\xspace]\in\ensuremath{\mathbb{R}}\xspace^\ensuremath{L}\xspace$ is a length \ensuremath{L}\xspace vector that drives the network through a feed-forward matrix $\bm{Z}\in\ensuremath{\mathbb{R}}\xspace^{\ensuremath{M}\xspace\times\ensuremath{L}\xspace}$ rather than a feed-forward vector,
\begin{eqnarray}
	\ensuremath{\bm{{x}}[\ensuremath{n}\xspace]}\xspace & = & \ensuremath{\bm{W}}\xspace\ensuremath{\bm{{x}}[\ensuremath{n}\xspace-1]}\xspace + \sum_{l=1}^{\ensuremath{L}\xspace}\ensuremath{\bm{{z}}}\xspace_{l}s_{l}[n] + \widetilde{\bm{\epsilon}}[\ensuremath{n}\xspace] \nonumber \\
        & = & \ensuremath{\bm{W}}\xspace\ensuremath{\bm{{x}}[\ensuremath{n}\xspace-1]}\xspace + \bm{Z}\ensuremath{\bm{{s}}}\xspace[\ensuremath{n}\xspace] + \widetilde{\bm{\epsilon}}[\ensuremath{n}\xspace].  \label{eqn:sysdyn2}
\end{eqnarray}
We denote the columns of $\bm{Z}$ as $\ensuremath{\bm{{z}}}\xspace_l$ to separately notate the vectors mapping each input stream.  

We can write the current network state as a linear function of the inputs by iterating Equation~\eqref{eqn:sysdyn2},
\begin{gather*}
        \ensuremath{\bm{{x}}[\ensuremath{N}\xspace]}\xspace = \sum_{k=1}^{N}{ \bm{W}^{N-k}\bm{Z}\bm{s}[k] } + \bm{\epsilon}, \label{eqn:syssum1} \nonumber
\end{gather*}
where the error term $\bm{\epsilon} = \sum_{k=1}^\ensuremath{N}\xspace \bm{W}^{N-k}\widetilde{\bm{\epsilon}}[k]$ is the accumulated error, and then rewriting sum as a matrix-vector multiply, 
\begin{eqnarray*}
        \ensuremath{\bm{{x}}[\ensuremath{N}\xspace]}\xspace & = & \left[\bm{Z}, \ensuremath{\bm{W}}\xspace\bm{Z}, \cdots, \ensuremath{\bm{W}}\xspace^{\ensuremath{N}\xspace-1}\bm{Z} \right]\left[\ensuremath{\bm{{s}}}\xspace^T[\ensuremath{N}\xspace], \ensuremath{\bm{{s}}}\xspace^T[\ensuremath{N}\xspace-1], \cdots, \ensuremath{\bm{{s}}}\xspace^T[1] \right]^T + \bm{\epsilon} .
\end{eqnarray*}
Depending on the signal statistics in question, we will find it convenient in some cases to express the network dynamics in terms of a linear operator applied to an input matrix, i.e. 
\begin{gather*}
        \ensuremath{\bm{{x}}[\ensuremath{N}\xspace]}\xspace = \ensuremath{\mathcal{{A}}\left( \ensuremath{\bm{{S}}}\xspace \right)}\xspace + \bm{\epsilon}, \nonumber 
\end{gather*}
where $\ensuremath{\bm{{S}}}\xspace = \left[\ensuremath{\bm{{s}}}\xspace^T[\ensuremath{N}\xspace], \ensuremath{\bm{{s}}}\xspace^T[\ensuremath{N}\xspace-1], \cdots, \ensuremath{\bm{{s}}}\xspace^T[1] \right]^T$. In other cases, we find it more convenient to reorganize the columns into an effective measurement matrix applied to a vector of inputs. By defining the eigen-decomposition of $\ensuremath{\bm{W}}\xspace = \bm{U}\bm{D}\bm{U}^{-1}$, we can re-write the dynamics process as 
\begin{eqnarray*}
        \ensuremath{\bm{{x}}[\ensuremath{N}\xspace]}\xspace & = & \bm{U}\left[\bm{D}^0\bm{U}^{-1}\bm{Z}, \bm{D}\bm{U}^{-1}\bm{Z}, \cdots, \bm{D}^{\ensuremath{N}\xspace-1}\bm{U}^{-1}\bm{Z} \right]\left[\ensuremath{\bm{{s}}}\xspace^T[\ensuremath{N}\xspace], \ensuremath{\bm{{s}}}\xspace^T[\ensuremath{N}\xspace-1], \cdots, \ensuremath{\bm{{s}}}\xspace^T[1] \right]^T + \bm{\epsilon} . \nonumber 
\end{eqnarray*}
To simplify this expression, we can reorganize the columns of the linear operator (and the rows of the vector of inputs) such that all the inputs corresponding to the $l^{th}$ input vector $\bm{z}_l$ create a single block. The $k^{th}$ row of the $l^{th}$ block of out matrix is now represented by $\bm{D}^{k-1}\bm{U}^{-1}\bm{z}_l$, which can be written as $\ensuremath{\widetilde{\bm{Z}}}\xspace_l\bm{d}_{k-1}$, where $\ensuremath{\widetilde{\bm{Z}}}\xspace_l=\bm{U}^{-1}\bm{z}_l$ and $\bm{d}_{k-1}$ is the vector of the diagonal elements of $\bm{D}$ raised to the $(k-1)$ power. We can more concisely by defining the matrix $\ensuremath{\bm{F}}\xspace$ consisting of the eigenvalues of \ensuremath{\bm{W}}\xspace raised to different powers (i.e. $\ensuremath{\bm{F}}\xspace_{i,j} = d_i^{j-1}$), resulting in the expression
\begin{gather}
        \ensuremath{\bm{{x}}[\ensuremath{N}\xspace]}\xspace = \ensuremath{\bm{U}}\xspace\left[\ensuremath{\widetilde{\bm{Z}}}\xspace_1\ensuremath{\bm{F}}\xspace, \ensuremath{\widetilde{\bm{Z}}}\xspace_2\ensuremath{\bm{F}}\xspace, \cdots, \ensuremath{\widetilde{\bm{Z}}}\xspace_{\ensuremath{L}\xspace}\ensuremath{\bm{F}}\xspace \right]\left[\ensuremath{\bm{{s}}}\xspace_1^T, \ensuremath{\bm{{s}}}\xspace^T_2, \cdots, \ensuremath{\bm{{s}}}\xspace^T_\ensuremath{L}\xspace \right]^T + \bm{\epsilon} = \bm{A}\widetilde{\ensuremath{\bm{{s}}}\xspace} + \bm{\epsilon}.   \label{eqn:matvec2}
\end{gather}
Since the eigenvalues of \ensuremath{\bm{W}}\xspace here are restricted to reside on the unit circle, we note that \ensuremath{\bm{F}}\xspace is a Vandermonde matrix whose rows are Fourier basis vectors. From Equation~\eqref{eqn:matvec2} we see that the current state is simply the sum of \ensuremath{L}\xspace compressed input streams, where the compression for each block essentially performs the same compression as for a single stream, but modulated by the different feed-forward vectors $\ensuremath{\bm{{z}}}\xspace_l$. 

\subsection{Sparse Multiple Inputs}

To begin, we consider the direct extension of previous results based on sparsity models to the multi-input setting.  In this setting we consider the model where the composite of all input signals is sparse in a basis $\ensuremath{\bm{\Psi}}\xspace\in\ensuremath{\mathbb{R}}\xspace^{\ensuremath{N}\xspace\ensuremath{L}\xspace\times \ensuremath{N}\xspace\ensuremath{L}\xspace}$ so that $\widetilde{\ensuremath{\bm{{s}}}\xspace} = \ensuremath{\bm{\Psi}}\xspace\widetilde{\ensuremath{\bm{{a}}}\xspace}$. This means that each signal stream can be written as $\ensuremath{\bm{{s}}}\xspace_l = \sum_{k=1}^\ensuremath{L}\xspace \ensuremath{\bm{\Psi}}\xspace^{l,k}\ensuremath{\bm{{a}}}\xspace_k$ where $\ensuremath{\bm{\Psi}}\xspace^{l,k}$ is the $\{l,k\}^{th}$ $\ensuremath{N}\xspace\times \ensuremath{N}\xspace$ block of $\widetilde{\ensuremath{\bm{\Psi}}\xspace}$. This signal model captures dependencies between input streams because a given coefficient can influence multiple channels. While in many application the basis $\ensuremath{\bm{\Psi}}\xspace$ is pre-specified (i.e. wavelet decomposition in image processing;~\citealp{christopoulos2000jpeg2000}), these bases can also be learned from exemplar data via dictionary learning algorithms~\citep{OLS:1996,ELA:KSVD}. This sparsity model can be a useful model for signals of interest, such as video signals, where similar sparse decompositions have been used for action recognition~\citep{guha2012learning} and video categorization~\citep{chiang2013learning}. With this model, we will use a generalized notion of the coherence parameter used in~\citep{charles2014short}:
\begin{gather}
	\mu_S\left(\ensuremath{\bm{\Psi}}\xspace\right) = \max_{l,k=1,\hdots,\ensuremath{L}\xspace}\max_{n=1,\hdots,\ensuremath{N}\xspace}\sup_{t\in[0,2\pi]} \frac{\left|\sum_{m = 0}^{N-1}\ensuremath{\bm{\Psi}}\xspace^{l,k}_{m,n}e^{-jtm} \right|}{\|\ensuremath{\bm{\Psi}}\xspace_m^{l,k} \|_2}. \label{eqn:multico}
\end{gather}
In this case, each $\ensuremath{N}\xspace\times\ensuremath{N}\xspace$ block must be different from the Fourier basis to achieve high STM capacity. This restriction is reasonable, since if a single sub-block of \ensuremath{\bm{\Psi}}\xspace was coherent with the Fourier basis, then at least one input stream could be sparse in a Fourier-like basis and hence would be unrecoverable. Using this network and signal model, we obtain the following theorem on the stability of the network representation:

\begin{theorem}
	\label{thm:STMmulti}
	Suppose $\ensuremath{N}\xspace\ensuremath{L}\xspace \ge \ensuremath{M}\xspace$, $\ensuremath{N}\xspace \ge \ensuremath{K}\xspace$ and $\ensuremath{N}\xspace \ge O(1)$. Let $\ensuremath{\bm{U}}\xspace$ be any unitary matrix of eigenvectors (containing complex conjugate pairs) and the entries of $\bm{Z}$ be i.i.d. zero-mean Gaussian random variables with variance $\frac{1}{\ensuremath{M}\xspace}$. For $\ensuremath{M}\xspace$ an even integer, denote the eigenvalues of \ensuremath{\bm{W}}\xspace by $\{e^{j w_m}\}_{m = 1}^{\ensuremath{M}\xspace}$.  Let the first $\ensuremath{M}\xspace/2$ eigenvalues ($\{e^{j w_m}\}_{m = 1}^{\ensuremath{M}\xspace/2}$) be chosen uniformly at random on the complex unit circle (i.e., we chose $\{w_m\}_{m=1}^{\ensuremath{M}\xspace/2}$ uniformly at random from $[0, 2\pi)$) and the other $\ensuremath{M}\xspace/2$ eigenvalues as the complex conjugates of these values. For a given RIP conditioning $\ensuremath{\delta}\xspace$, failure probability $\ensuremath{N}\xspace^{-\log^4 \ensuremath{N}\xspace} \le \eta \le \frac{1}{e}$, and coherence $\mu_S\left(\ensuremath{\bm{\Psi}}\xspace\right)$ as defined as in Equation~\eqref{eqn:multico}, if 
\begin{gather*}
	\ensuremath{M}\xspace \geq C\frac{\ensuremath{K}\xspace}{\ensuremath{\delta}\xspace^2}\mu_S^2\left(\ensuremath{\bm{\Psi}}\xspace\right)\log^{5}\left(\ensuremath{N}\xspace\ensuremath{L}\xspace\right) \log(\eta^{-1}), \nonumber
\end{gather*}
then $\ensuremath{\bm{{A}}}\xspace$ satisfies RIP-$(2\ensuremath{K}\xspace,\ensuremath{\delta}\xspace)$ with probability exceeding $1-\eta$ for a universal constant $C$.
\end{theorem}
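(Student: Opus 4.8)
The plan is to generalize the single-stream argument behind Theorem~\ref{thm:STMwithZ} to the concatenated operator of Equation~\eqref{eqn:matvec2}. First I would exploit the unitary left factor in $\bm{A} = \bm{U}[\widetilde{\bm{Z}}_1\bm{F}, \ldots, \widetilde{\bm{Z}}_L\bm{F}]$: since $\bm{U}$ preserves $\ell_2$ norms, establishing RIP-$(2K,\delta)$ for $\bm{A}\bm{\Psi}$ is equivalent to establishing it for $\bm{G}\bm{\Psi}$ with $\bm{G} = [\widetilde{\bm{Z}}_1\bm{F}, \ldots, \widetilde{\bm{Z}}_L\bm{F}]$. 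I would then note that the transformed weights $\widetilde{\bm{z}}_l = \bm{U}^{-1}\bm{z}_l$ remain jointly Gaussian with isotropic covariance $\frac{1}{M}\bm{I}$ (unitary invariance, using $\bm{U}^{-1}=\bm{U}^{H}$) and independent across the streams $l$. This isolates the two sources of randomness cleanly: the random frequencies $\{w_m\}$ that populate the Vandermonde/Fourier matrix $\bm{F}$, and the Gaussian modulation weights $\widetilde{z}_{l,m}$.

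Next I would write the $m$-th coordinate of $\bm{G}\bm{\Psi}\bm{a}$ as $\sum_l \widetilde{z}_{l,m}\,\widehat{s}_l(w_m)$, where $\widehat{s}_l(w) = \sum_n e^{jw(n-1)}s_{l,n}$ is the discrete-time Fourier transform of the reconstructed stream $\bm{s}_l = \sum_k \bm{\Psi}^{l,k}\bm{a}_k$ evaluated at the random frequency $w_m$. A short computation — averaging first over the Gaussians and then over the uniform frequencies via Parseval — shows that $\mathbb{E}\|\bm{G}\bm{\Psi}\bm{a}\|_2^2$ is a fixed multiple of $\|\bm{a}\|_2^2$, so that $\bm{G}\bm{\Psi}$ is an isometry in expectation and RIP is a concentration statement around this mean. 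Recognizing the rows of $\bm{G}\bm{\Psi}$ as independent (up to the conjugate-pair coupling of indices $m$ and $m+M/2$ forced by realness of $\ensuremath{\bm{W}}\xspace$) samples of a Gaussian-modulated, randomly-subsampled bounded system, I would then invoke the same chaos-process / bounded-orthonormal-system machinery used for Theorem~\ref{thm:STMwithZ}.

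The generalized coherence $\mu_S(\bm{\Psi})$ of Equation~\eqref{eqn:multico} enters as the boundedness constant of this system: it is precisely the maximal normalized Fourier-transform magnitude over every $\ensuremath{N}\xspace\times\ensuremath{N}\xspace$ block $\bm{\Psi}^{l,k}$ and every column, and therefore controls the row-functionals $\widehat{s}_l(w_m)$ uniformly over streams, shifts, and frequencies. This is the multi-stream analogue of the single-stream coherence and is what forces each sub-block to be incoherent with the Fourier basis. Feeding the boundedness $\mu_S(\bm{\Psi})$ and the ambient column count $\ensuremath{N}\xspace\ensuremath{L}\xspace$ into the structured-matrix RIP bound yields $\ensuremath{M}\xspace \gtrsim \delta^{-2} \ensuremath{K}\xspace\, \mu_S^2(\bm{\Psi}) \log^5(\ensuremath{N}\xspace\ensuremath{L}\xspace)\log(\eta^{-1})$, with the total input dimension $\ensuremath{N}\xspace\ensuremath{L}\xspace$ appearing only inside the logarithm, exactly as claimed.

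I expect the main obstacle to be controlling the complexity of the composite system uniformly across all $\ensuremath{L}\xspace$ blocks while keeping the $\ensuremath{N}\xspace\ensuremath{L}\xspace$ dependence logarithmic. Unlike the $\ensuremath{L}\xspace=1$ case, each row now mixes $\ensuremath{L}\xspace$ independent Gaussian weights against $\ensuremath{L}\xspace$ Fourier evaluations at a \emph{shared} random frequency $w_m$, so the governing chaos process couples the two randomness sources across streams; bounding its expected supremum (equivalently, the $d_2$ and $d_F$ radii and the associated metric-entropy integrals) requires showing that $\mu_S(\bm{\Psi})$ simultaneously bounds all $\ensuremath{L}\xspace$ contributions and that the shared frequency inflates the complexity by at most a $\log \ensuremath{L}\xspace$ factor. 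Handling the conjugate-pair dependence so the rows behave as an effectively independent sample, and verifying that the column-norm normalization in Equation~\eqref{eqn:multico} supplies the correct boundedness constant for the concatenated operator, are the delicate points; once they are in place, the remainder reduces to the same concentration estimate as in the single-input analysis.
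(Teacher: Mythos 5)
Your proposal follows essentially the same route as the paper's proof: reduce via the unitary factor $\ensuremath{\bm{U}}\xspace$ to the concatenated operator $[\ensuremath{\widetilde{\bm{Z}}}\xspace_1\ensuremath{\bm{F}}\xspace,\ldots,\ensuremath{\widetilde{\bm{Z}}}\xspace_L\ensuremath{\bm{F}}\xspace]\ensuremath{\bm{\Psi}}\xspace$, view each row as Gaussian weights modulating Fourier evaluations of the $\ensuremath{\bm{\Psi}}\xspace$-blocks at a shared random frequency, let $\mu_S(\ensuremath{\bm{\Psi}}\xspace)$ serve as the boundedness constant (the paper bounds $V_{\max}$ exactly this way via a conditional $\chi^2$ tail lemma), split the Gram matrix into conjugate halves $\bm{B}_1+\bm{B}_2$ to handle the conjugate-pair coupling, and then apply the bounded-orthonormal-system machinery (symmetrization plus Rauhut's entropy lemma for the expectation, truncation plus Tropp's concentration result for the tail). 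The delicate points you flag --- uniform control over the $L$ blocks, the conjugate-pair dependence, and the normalization in the coherence definition --- are precisely the ones the paper resolves, so your plan is a faithful outline of its argument.
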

The proof of Theorem~\ref{thm:STMmulti} is provided in Appendix~\ref{app:STMmulti}. Note that when \ensuremath{L}\xspace = 1, Theorem~\ref{thm:STMmulti} reduces to Theorem~\ref{thm:STMwithZ}. In this result we see that that the number of nodes relies only linearly on the underlying dimensionality (\ensuremath{K}\xspace) and poly-logarithmically on the total size of the input $(\ensuremath{N}\xspace\ensuremath{L}\xspace)$. This means that under favorable coherence and sparsity conditions on the input, the network can again have STM capacities that are higher than the number of nodes in the network. Specifically, showing that \ensuremath{\bm{{A}}}\xspace satisfies the RIP property, Theorem~\ref{thm:STMmulti} ensures that standard recovery guarantees from the sparse inference literature hold. In particular, any \ensuremath{K}\xspace-sparse input is recoverable from the network state at time $N$ up to the error bound of Equation~\eqref{eqn:DecRec} by solving the $\ell_1$-regularized least-squares optimization of Equation~\eqref{eqn:l1min}. 


\subsection{Low Rank Multiple Inputs}

Next we consider the case of a very different type of low-dimensional structure where the input signals are correlated but not necessarily sparse.  Specifically,  in this setting we assume that the inputs arise from a process where $\ensuremath{R}\xspace$ prototypical signals combine linearly to form the various input streams. Such a signal structure could arise, for instance, due to correlations between input streams at spatially neighboring locations. A number of interesting applications display such correlations, including important measurement modalities in neuroscience (e.g. two-photon calcium imaging;~\citealp{RN2,maruyama2014detecting} and neural electrophysiological recordings;~\citealp{berenyi2014large,ahmed2013compressive}), and remote sensing applications (e.g. hyperspectral imagery;~\citealp{zhang2014hyperspectral,veganzones2016hyperspectral}). The applicability of RNNs and machine learning methods to data well described by this low-rank model is also increasingly relevant as there is increasing interest in applying neural network techniques to such data, either for detection~\citep{apthorpe2016automatic}, classification~\citep{chen2016deep,chen2014deep}, or as samplers via variational auto-encoders~\citep{gao2016linear}. In this case, we can write out the input matrix in the reduced form $\ensuremath{\bm{{S}}}\xspace = \ensuremath{\bm{Q}}\xspace\ensuremath{\bm{V}^{\ast}}\xspace$, where $\ensuremath{\bm{V}^{\ast}}\xspace\in\ensuremath{\mathbb{R}}\xspace^{\ensuremath{R}\xspace\times\ensuremath{N}\xspace}$ is the matrix whose rows may represent environmental causes generating the data and $\ensuremath{\bm{Q}}\xspace\in\ensuremath{\mathbb{R}}\xspace^{\ensuremath{L}\xspace\times\ensuremath{R}\xspace}$ represents the mixing matrix that defines the input stream. We will assume both $\ensuremath{L}\xspace \geq \ensuremath{R}\xspace$ and $\ensuremath{N}\xspace \geq \ensuremath{R}\xspace$, meaning that $\ensuremath{\bm{{S}}}\xspace$ is low-rank.  With this model we use a definition of coherence given by:
\begin{gather}
        \mu_L^2 = \ensuremath{R}\xspace^{-1}\sup_{ \omega\in[0,2\pi] }\ensuremath{\left|\left| {\ensuremath{\bm{V}^{\ast}}\xspace\ensuremath{\bm{{f}}}\xspace_{\omega}} \right|\right|}\xspace_2^2. \label{eqn:lowrankco}
\end{gather}
where $\bm{f}_{\omega} = [1, e^{-j\omega},\cdots e^{-j(N-1)\omega}]^T$ is the Fourier vector with frequency $\omega$. This coherence parameter mirrors the coherence used for the sparse-input case. As $\mu_S$ measured the similarity between the measurement vectors and the sparsity basis \ensuremath{\bm{\Psi}}\xspace, $\mu_L$ measures the similarity between the measurements and the left singular vectors of the measured matrix. The intuition here is that measurements that align with the left singular vectors are unlikely to measure significant information about the \ensuremath{\bm{{S}}}\xspace. 

To analyze the STM of the network dynamics with respect to low-rank signal statistics, we leverage the dual certificate approach~\citep{candes2010matrix,candes2011probabilistic,ahmed2013compressive} to derive the following theorem,
\begin{theorem}
	\label{thm:STMlowrank}
	Suppose $\ensuremath{N}\xspace\ensuremath{L}\xspace \ge \ensuremath{M}\xspace$, $\ensuremath{N}\xspace \ge \ensuremath{R}\xspace$, $\ensuremath{N}\xspace \ge O(1)$ and $\ensuremath{L}\xspace \ge O(1)$. Let \ensuremath{\bm{{z}}}\xspace be i.i.d. zero-mean Gaussian random variables with variance $\frac{1}{\ensuremath{M}\xspace}$. For $\ensuremath{M}\xspace$ an even integer, denote the eigenvalues of \ensuremath{\bm{W}}\xspace by $\{e^{j w_m}\}_{m = 1}^{\ensuremath{M}\xspace}$.  Let the first $\ensuremath{M}\xspace/2$ eigenvalues ($\{e^{j w_m}\}_{m = 1}^{\ensuremath{M}\xspace/2}$) be chosen uniformly at random on the complex unit circle (i.e., we chose $\{w_m\}_{m=1}^{\ensuremath{M}\xspace/2}$ uniformly at random from $[0, 2\pi)$) and the other $\ensuremath{M}\xspace/2$ eigenvalues as the complex conjugates of these values.   For a given coherence $\mu_L$ as defined as in Equation~\eqref{eqn:lowrankco}, if 
		\begin{gather*}
			\ensuremath{M}\xspace \geq c\ensuremath{R}\xspace\left(\ensuremath{N}\xspace + \mu_L^2\ensuremath{L}\xspace\right)\log^3(\ensuremath{L}\xspace\ensuremath{N}\xspace), \nonumber 
		\end{gather*}
                then, with probability at least $1-O((\ensuremath{L}\xspace\ensuremath{N}\xspace)^{1-\beta}$, the minimization in Equation~\eqref{eqn:nucnorm0} recovers the rank-$\ensuremath{R}\xspace$ input matrix \ensuremath{\bm{{S}}}\xspace up to the error bound in Equation~\eqref{eqn:NucNormErr}.  
\end{theorem}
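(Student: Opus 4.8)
The plan is to recognize the network operator $\mathcal{A}$ of Equation~\eqref{eqn:matvec2} as a low-rank matrix-sensing map built from rank-one measurement matrices, and then apply the dual-certificate (golfing) machinery referenced above. First I would discard the unitary factor $\bm{U}$, which alters neither norms nor the feasibility set nor the objective in Equation~\eqref{eqn:nucnorm0}, and write the $m$-th (rotated) network coordinate as $\langle A_m, \bm{S}\rangle$ with $A_m = g_m f_{w_m}^\ast$. Here $g_m\in\mathbb{C}^{L}$ gathers the feed-forward weights of node $m$ across the $L$ streams and $f_{w_m}=[1,e^{-jw_m},\ldots,e^{-j(N-1)w_m}]^\ast$ is the Fourier vector at that node's eigen-frequency. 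Using unitarity of $\bm{U}$ and independence of $\bm{Z}$ from the eigenvalues, I would check that the $g_m$ are i.i.d. Gaussian with covariance $M^{-1}I_L$ (circular up to the imposed conjugate-symmetry of the eigenvalues) and independent of the $f_{w_m}$, whose frequencies $w_m$ are i.i.d. uniform. A direct computation using $\mathbb{E}[g_mg_m^\ast]=M^{-1}I_L$ and $\mathbb{E}_w[f_wf_w^\ast]=I_N$ then yields $\mathbb{E}[\mathcal{A}^\ast\mathcal{A}]=\mathcal{I}$, i.e. the ensemble is isotropic in expectation, placing us in the ``RIPless'' low-rank recovery setting.

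Next I would reduce recovery to the standard sufficient conditions for nuclear-norm minimization. Writing the SVD $\bm{S}=U_S\Sigma_S V_S^\ast$ with tangent space $T$, it suffices to (i) bound the conditioning of $\mathcal{A}$ on $T$, $\|\mathcal{P}_T(\mathcal{A}^\ast\mathcal{A}-\mathcal{I})\mathcal{P}_T\|\le\tfrac12$, and (ii) exhibit an approximate dual certificate $Y=\mathcal{A}^\ast\lambda$ with $\|\mathcal{P}_TY-U_SV_S^\ast\|_F$ small and $\|\mathcal{P}_{T^\perp}Y\|<\tfrac12$. The coherence $\mu_L$ of Equation~\eqref{eqn:lowrankco} enters precisely through the tangent projection: the per-node contribution obeys $\|\mathcal{P}_TA_m\|_F^2\lesssim \|U_S^\ast g_m\|_2^2\,\|f_{w_m}\|_2^2+\|g_m\|_2^2\,\|V_S^\ast f_{w_m}\|_2^2$. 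The first piece pairs the incoherent Gaussian column projection $\|U_S^\ast g_m\|_2^2\approx R/M$ with the full Fourier norm $\|f_{w_m}\|_2^2=N$, contributing $\approx RN/M$ with no coherence penalty; the second pairs the full Gaussian norm $\|g_m\|_2^2\approx L/M$ with the coherent interaction $\|V_S^\ast f_{w_m}\|_2^2\le R\mu_L^2$, contributing $\approx R\mu_L^2L/M$. This asymmetric cross-pairing of the two factors against the column and row singular subspaces is what produces the two additive terms of the bound $M\gtrsim R(N+\mu_L^2 L)\log^3(LN)$.

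To build the certificate I would run the golfing scheme: partition the $M$ nodes into $P=O(\log(LN))$ batches $\Gamma_1,\ldots,\Gamma_P$, set $Y_0=0$, and iterate $Y_p=Y_{p-1}+\mathcal{A}_{\Gamma_p}^\ast\mathcal{A}_{\Gamma_p}(U_SV_S^\ast-\mathcal{P}_TY_{p-1})$. The two workhorse estimates are a restricted near-isometry on $T$, $\|\mathcal{P}_T(\mathcal{A}_{\Gamma}^\ast\mathcal{A}_{\Gamma}-\mathcal{I})\mathcal{P}_T\|\le\tfrac12$, and an operator-norm leakage bound for $\|\mathcal{P}_{T^\perp}\mathcal{A}_{\Gamma}^\ast\mathcal{A}_{\Gamma}(H)\|$ in terms of $\|H\|_F$, each obtained by matrix Bernstein applied to the sum over $m\in\Gamma$ of the rank-one summands $\langle A_m,\cdot\rangle A_m$. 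This is where the main obstacle lies: the summands combine a heavy-tailed Gaussian factor $g_mg_m^\ast$ with a large-norm Fourier factor $f_{w_m}f_{w_m}^\ast$, so before invoking Bernstein I would truncate the Gaussian magnitudes (handling the tail contribution separately) and then drive the per-term norm and variance proxy down to the $R(N+\mu_L^2L)/M$ scaling computed above, keeping only logarithmic overhead. Geometric decay $\|U_SV_S^\ast-\mathcal{P}_TY_p\|_F\le 2^{-p}\sqrt{R}$ over the $P$ batches then gives $\mathcal{P}_TY_P\approx U_SV_S^\ast$ while the leakage bound keeps $\|\mathcal{P}_{T^\perp}Y_P\|<\tfrac12$; a union bound over the $O(\log(LN))$ batches and the truncation events yields the failure probability $O((LN)^{1-\beta})$.

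Finally, with both the $T$-conditioning and the certificate in hand, I would invoke the standard noise-aware optimality argument from the dual-certificate references preceding Equation~\eqref{eqn:NucNormErr} to convert the feasibility constraint $\|\bm{x}[N]-\mathcal{A}(\bm{S})\|_2\le\|\bm{\epsilon}\|_2$ into the Frobenius error bound of Equation~\eqref{eqn:NucNormErr}. I expect the genuinely new effort to be concentrated in the third paragraph --- adapting the Bernstein estimates to this mixed Gaussian/random-Fourier ensemble and carrying $\mu_L$ through with the correct powers of $N$ and $L$ --- while the reductions in the other paragraphs are adaptations of the now-standard low-rank recovery template to the network measurement model.
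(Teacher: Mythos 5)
Your plan follows essentially the same route as the paper: absorb the unitary $\bm{U}$, treat each node as a rank-one measurement $\langle \bm{z}_n\bm{f}_n^{\ast},\cdot\rangle$ with $\bm{z}_n$ Gaussian and $\bm{f}_n$ a random Fourier vector, verify isotropy, use the split $\|\mathcal{P}_T(\bm{z}_n\bm{f}_n^{\ast})\|_F^2 \leq N\|\bm{Q}^{\ast}\bm{z}_n\|_2^2 + \|\bm{z}_n\|_2^2\|\bm{V}^{\ast}\bm{f}_n\|_2^2$ (your ``asymmetric cross-pairing'' is exactly how the $RN$ and $R\mu_L^2L$ terms arise in Lemma~\ref{lem:lemma1}), and build the certificate by golfing over $O(\log(LN))$ disjoint batches, with matrix Bernstein supplying the tangent-space near-isometry and the leakage bound. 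The only stylistic difference is that you propose truncating the Gaussian factors before applying Bernstein, whereas the paper invokes the Orlicz-norm form of matrix Bernstein (Theorem~\ref{thm:matbern} together with Lemmas~\ref{lem:onorm1} and~\ref{lem:onorm2}); both devices are standard and either would serve.

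There is, however, a genuine gap in your third paragraph. You assert a leakage estimate for $\|\mathcal{P}_{T^{\perp}}\mathcal{A}_{\Gamma}^{\ast}\mathcal{A}_{\Gamma}(H)\|$ ``in terms of $\|H\|_F$'' with variance proxy of order $R(N+\mu_L^2L)/M$. No such bound holds for an arbitrary residual $H$: applying Bernstein to the summands $\langle H,\bm{z}_n\bm{f}_n^{\ast}\rangle\,\bm{z}_n\bm{f}_n^{\ast}$, one of the two variance terms is controlled by $\tfrac{L}{M}\sup_{\omega}\|H\bm{f}_{\omega}\|_2^2$, i.e.\ by a Fourier coherence of $H$ itself, and for a generic $H$ this supremum can be as large as $N\|H\|_F^2$ (take a rank-one $H$ aligned with some $\bm{f}_{\omega_0}$), which would inflate the requirement to $M\gtrsim RLN$ and destroy the theorem. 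Membership of $H$ in the tangent space $T$ does not help: elements of $T$ have the form $\bm{Q}\bm{X} + \bm{Y}\bm{V}^{\ast}$ with $\bm{X}$ arbitrary, and the $\bm{Q}\bm{X}$ part can be maximally coherent with the Fourier family. So your scheme needs every golfing residual $\widetilde{\bm{Y}}_{k-1}$ to inherit, up to constants, the coherence $\mu_L$ of $\bm{Q}\bm{V}^{\ast}$ --- and since these residuals are random objects produced by the earlier batches, that is a separate probabilistic claim, not a bookkeeping step. The paper closes exactly this loop with Lemma~\ref{lem:lemma4}: the iterate coherence $\mu_k^2$ of Equation~\eqref{eqn:coherencek} contracts, $\mu_k^2\leq \mu_{k-1}^2/2$, with high probability, proved by a scalar Bernstein argument (built on Lemma~\ref{lem:lemma3}) carried out uniformly over the frequency variable and the $L$ coordinates; Lemma~\ref{lem:lemma2} then explicitly invokes this to replace $\mu_k$ by $\mu_0=\mu_L$ in the leakage bound. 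Without this third workhorse estimate --- or an equivalent device, such as running the golfing induction jointly in the Frobenius norm and the coherence norm --- your induction does not close, and this is where a substantial share of the paper's technical effort actually lies.
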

The proof of Theorem~\ref{thm:STMlowrank} is in Appendix~\ref{app:lowrank} and follows a golfing scheme to find an inexact dual certificate. In fact, we note that since our architecture is extremely similar mathematically to the architecture in~\citep{ahmed2013compressive}, our proof is also very similar. The main difference is that due to the unbounded nature of our distributions (i.e. the feed-forward vectors $\bm{Z}$ are Gaussian random variables) and the fact that our Fourier vectors are on the unit circle (rather than gridded), we can consider our proof as a generalization of the proof in~\citep{ahmed2013compressive}. 

Theorem~\ref{thm:STMlowrank} is qualitatively similar to Theorem~\ref{thm:STMmulti} in the way the STM capacity scales.  In this case, the bound still scales linearly with the information rate as captured by the number of elements in the left and right matrices that compose \ensuremath{\bm{{S}}}\xspace: $\ensuremath{R}\xspace\ensuremath{N}\xspace + \ensuremath{R}\xspace\ensuremath{L}\xspace$. Interestingly, due to the left singular vectors interacting with the measurement operator first, the coherence term only affects the portion of the bound related to the number of elements in \ensuremath{\bm{Q}}\xspace. Additionally, as before, the number of total inputs $\ensuremath{L}\xspace\ensuremath{N}\xspace$ only impacts the bound poly-logarithmically. 


\section{Simulation}
\label{sec:sims}

To empirically verify that these theoretical STM scaling laws are representative of the empirical behavior, we generated a number of random networks and evaluated the recovery of (sparse or low-rank) input sequences in the presence of noise. For each simulation we generate a $\ensuremath{M}\xspace \times \ensuremath{M}\xspace$ random orthogonal connectivity matrix $\bm{W}$\footnote{Orthogonal connectivity matrices were obtained by running an orthogonalization procedure on a random Gaussian matrix.} and a $\ensuremath{M}\xspace\times\ensuremath{L}\xspace$ random Gaussian feed-forward matrix $\bm{Z}$. In both cases we fixed the number of inputs to $\ensuremath{L}\xspace = 40$ and the number of time-steps to $\ensuremath{N}\xspace = 100$ while varying the network size \ensuremath{M}\xspace and underlying dimensionality of the input (i.e., the sparsity level or the input matrix rank). For the sparse input simulations, inputs were chosen with a uniformly random support pattern with random Gaussian values on the support. For low-rank simulations, the right singular vectors were chosen to be Gaussian random vectors, and the left singular values were chosen at random from a number of different basis sets. 

In Figure~\ref{fig:SPsims} we show the relative mean-squared error of the input recovery as a function of the sparsity-to-network size ratio $\rho = \ensuremath{K}\xspace/\ensuremath{M}\xspace$ and the network size-to-input ratio $\gamma = \ensuremath{M}\xspace/\ensuremath{N}\xspace\ensuremath{L}\xspace$. Each pixel value represents the average recovery relative mean-squared error (rMSE), as calculated by 
\begin{gather*}
        \mbox{RMSE} = \frac{\left\|\widehat{\bm{s}} - \bm{s}\right\|_2^2}{\left\|\bm{s}\right\|_2^2}, \nonumber
\end{gather*}
over 20 randomly generated trials with a noise level of $\|\bm{\epsilon}\|_2\approx 0.01$. We show results for recovery of three different types of sparse signals: signals sparse in the canonical basis, signals sparse in a Haar wavelet basis, and signals sparse in a discrete cosine transform (DCT) basis. As our theory predicts, for canonical- and Haar wavelet-sparse signals the network has very favorable STM capacity results.  The fact that the capacity achieves $\ensuremath{M}\xspace < \ensuremath{N}\xspace\ensuremath{L}\xspace$ is demonstrated by the area left of the $\ensuremath{M}\xspace = \ensuremath{N}\xspace\ensuremath{L}\xspace$ point ($\gamma=1$) where the signal is recovered with high accuracy. Likewise, for the DCT-sparse signals we find that the inputs are never recovered well for any $\ensuremath{M}\xspace < \ensuremath{N}\xspace\ensuremath{L}\xspace$.  This behavior is also predicted by our theory because of the unfavorable coherence properties of the DCT basis. 

\begin{figure*}[t]
        \centering
        \subfigure[RMSE: No basis]{%
        \includegraphics[width=0.25\textwidth]{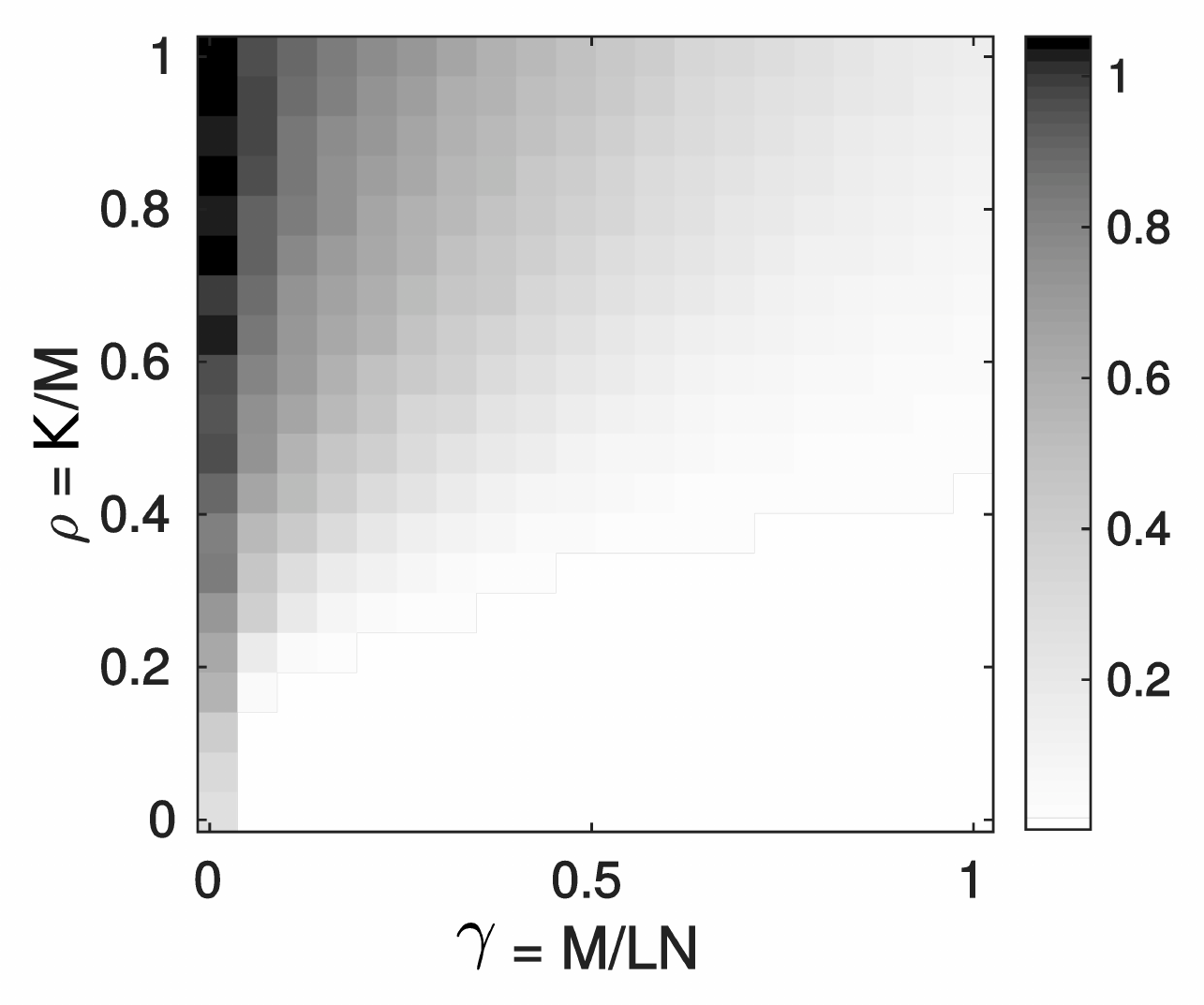}
        \label{fig:SconRMSE}}
        \quad
        \subfigure[RMSE: Haar wavelet basis]{%
        \includegraphics[width=0.25\textwidth]{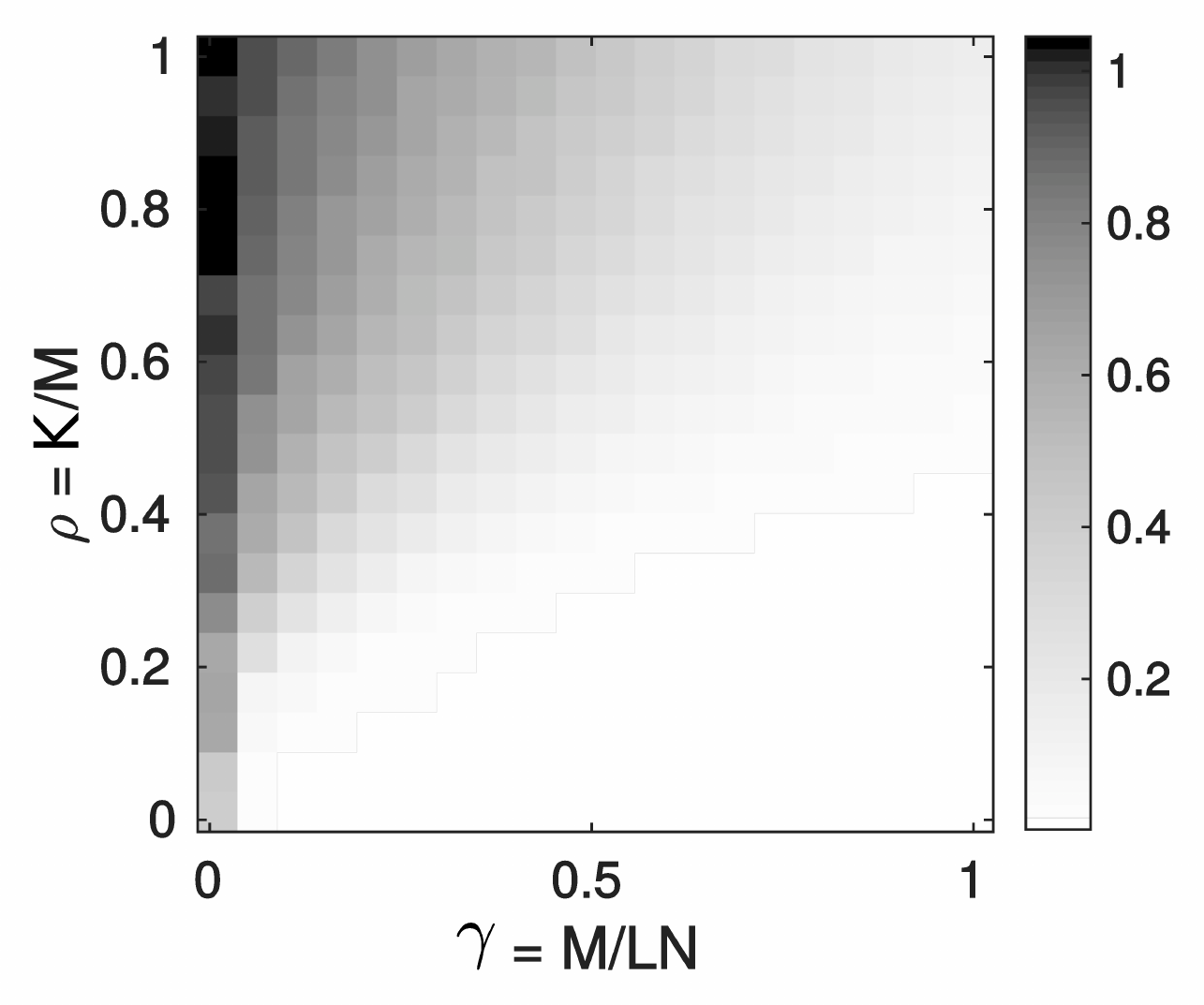}
        \label{fig:ShaarRMSE}}
        \quad
        \subfigure[RMSE: DCT basis]{%
        \includegraphics[width=0.25\textwidth]{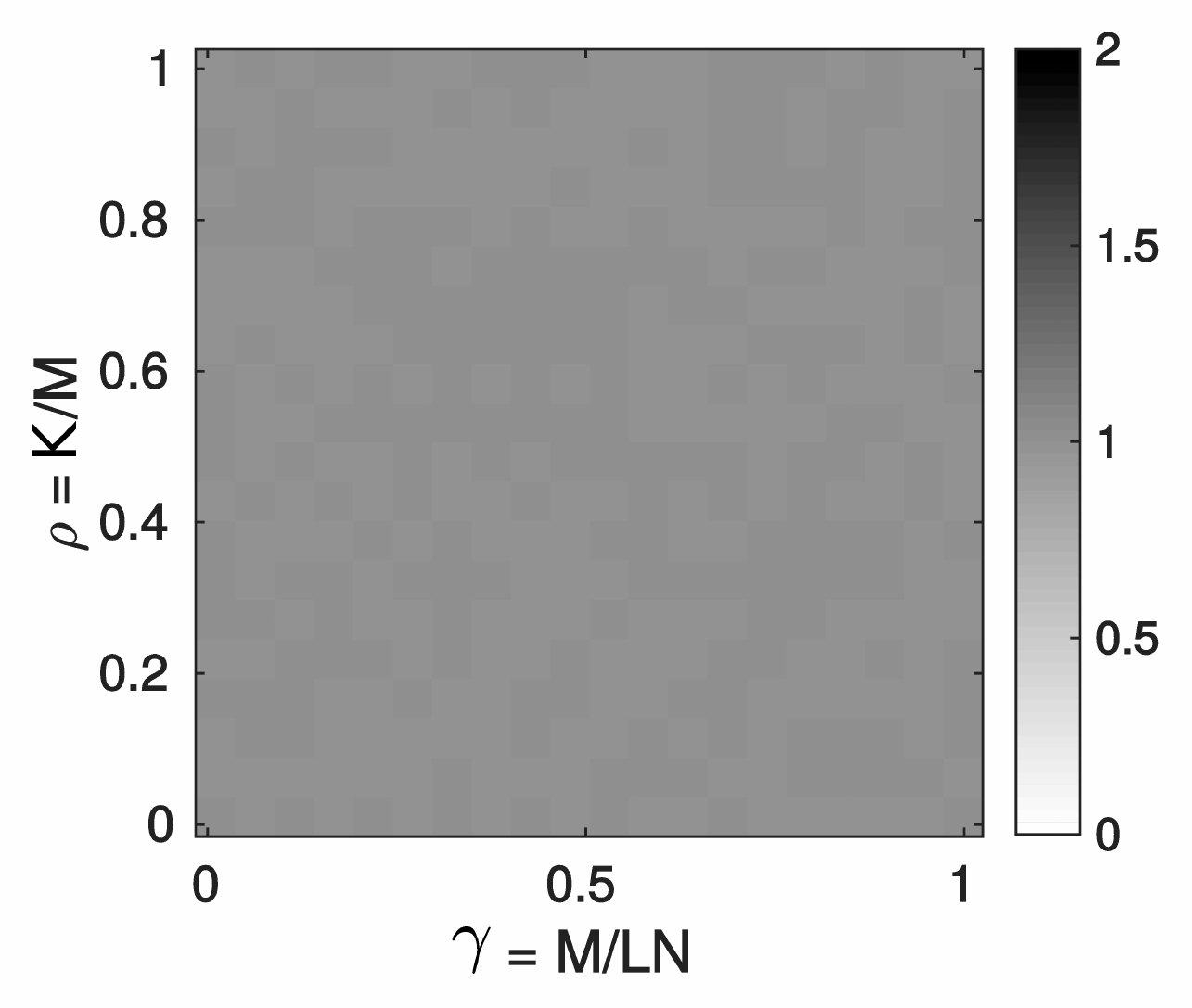}
        \label{fig:SdctRMSE}}
        \caption{ESNs can have high STM capacity for multidimensional sparse inputs. Relative mean-squared error (rMSE) of the recovery for canonical sparse inputs~\subref{fig:SconRMSE} and Haar wavelet-sparse inputs~\subref{fig:ShaarRMSE} is very low for a range of sparsity and network sizes satisfying $\ensuremath{M}\xspace < \ensuremath{L}\xspace\ensuremath{N}\xspace$. The rMSE for DCT-sparse inputs~\subref{fig:SdctRMSE}, as predicted by our theoretical results, remains high (approximately 100\% error). }
        \label{fig:SPsims} 
\end{figure*}

For the low-rank trials we see that recovery of low-rank inputs for a range of $\ensuremath{M}\xspace < \ensuremath{L}\xspace\ensuremath{N}\xspace$ is possible as predicted by the theoretical results.  As with the sparse input case we consider three types of low-rank inputs. Instead of changing the sparsity basis, however, we change the right singular vectors \ensuremath{\bm{V}}\xspace of the low-rank input matrix \ensuremath{\bm{{S}}}\xspace. We explore the cases where the elements of \ensuremath{\bm{V}}\xspace are chosen from the canonical basis, the haar basis and the DCT basis. These results are shown in Figure~\ref{fig:LRsims} with plots similar to those in Figure~\ref{fig:SPsims}, but with only showing the range $\gamma<0.5$ to reduce computational time.  As our theory predicts, the recovery of inputs with canonical- and Haar right singular vectors is more accurate for a larger range of $\rho,\gamma$ pairs than the inputs with DCT right singular vectors. 

\begin{figure*}[t]
        \centering
        \subfigure[RMSE: No basis]{
        \includegraphics[width=0.25\textwidth]{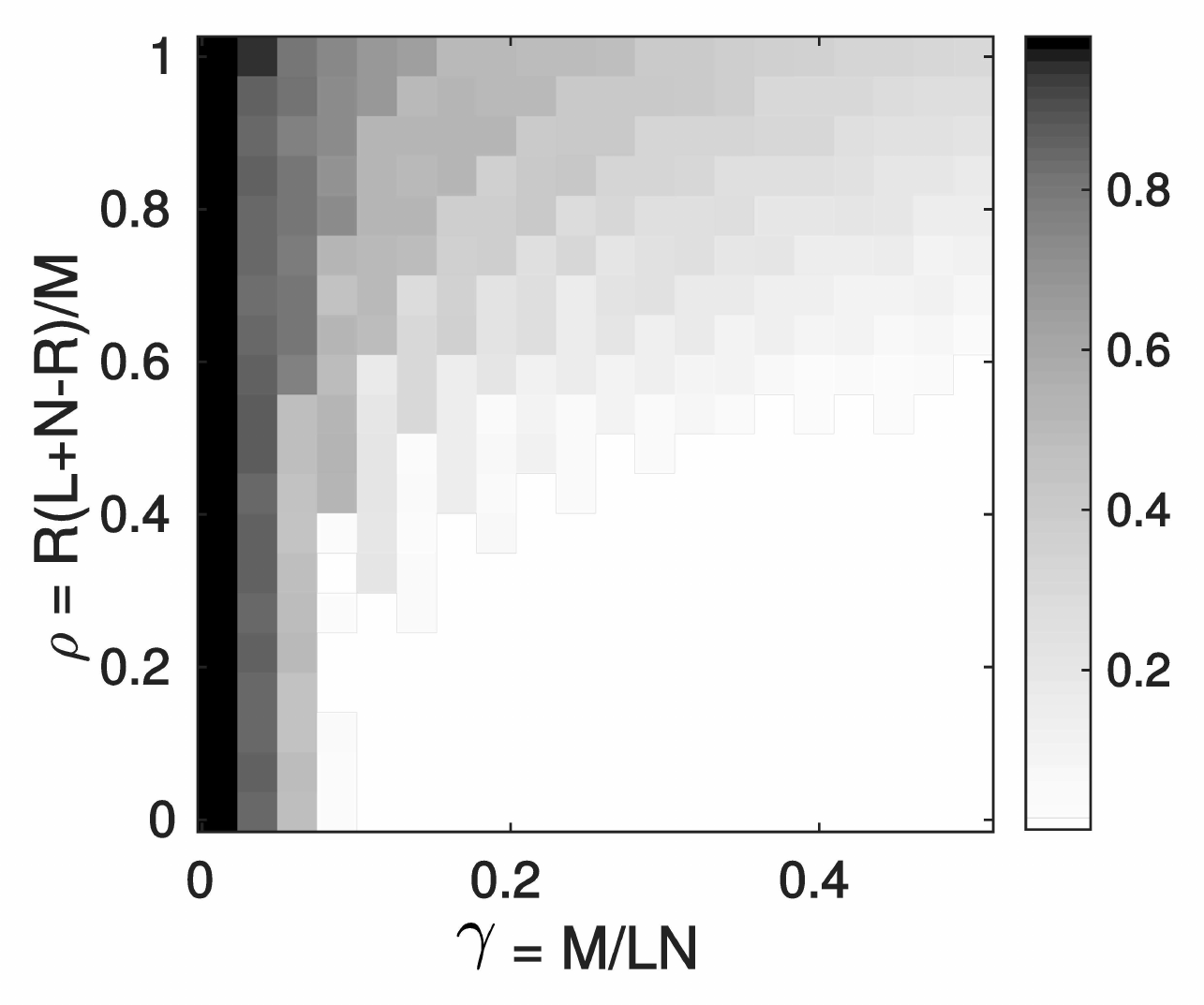}
        \label{fig:LRconRMSE}}
        \quad
        \subfigure[RMSE: Haar wavelet basis]{
                \includegraphics[width=0.25\textwidth]{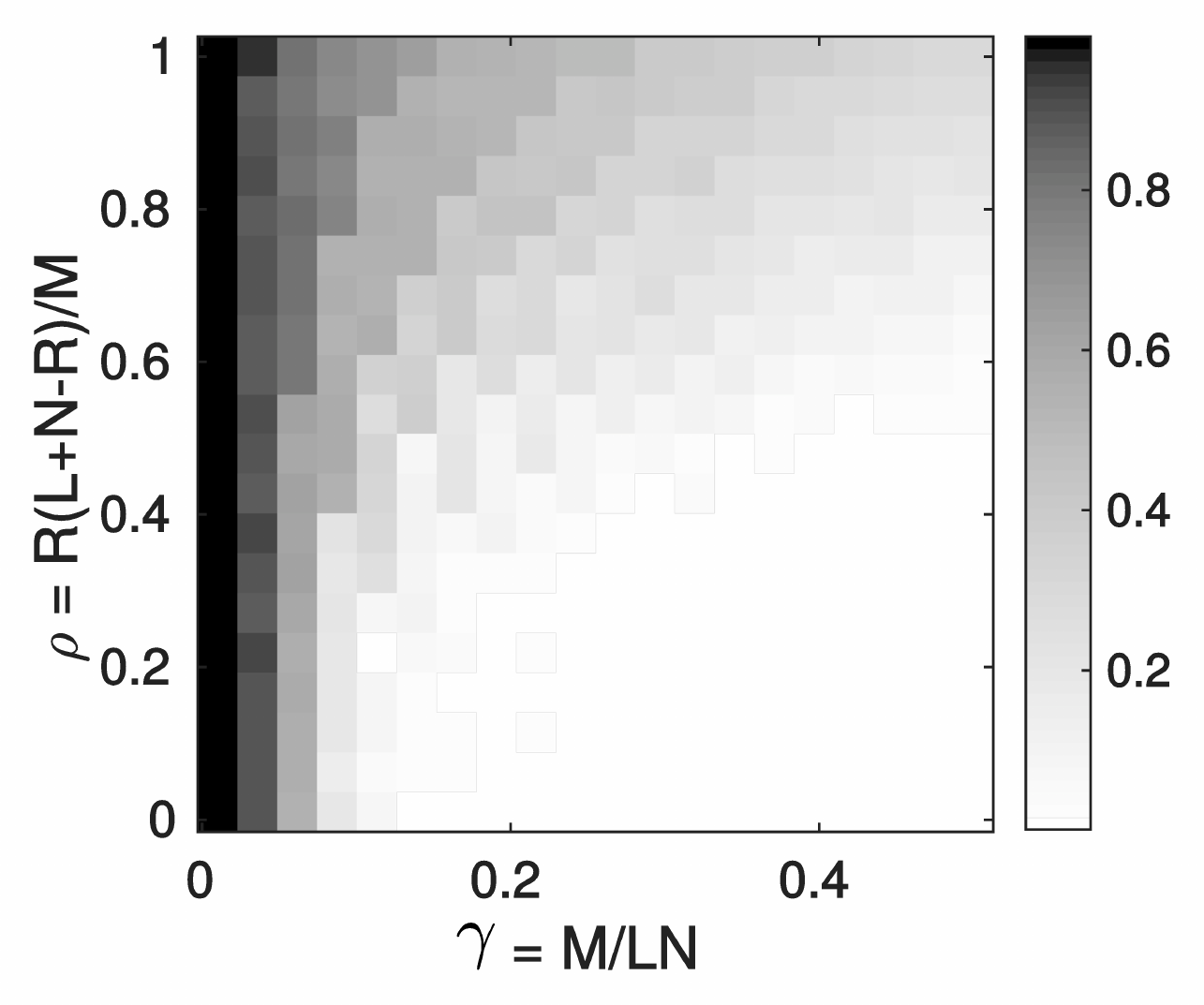}
        \label{fig:LRhaarRMSE}}
        \quad
        \subfigure[RMSE: DCT basis]{
        \includegraphics[width=0.25\textwidth]{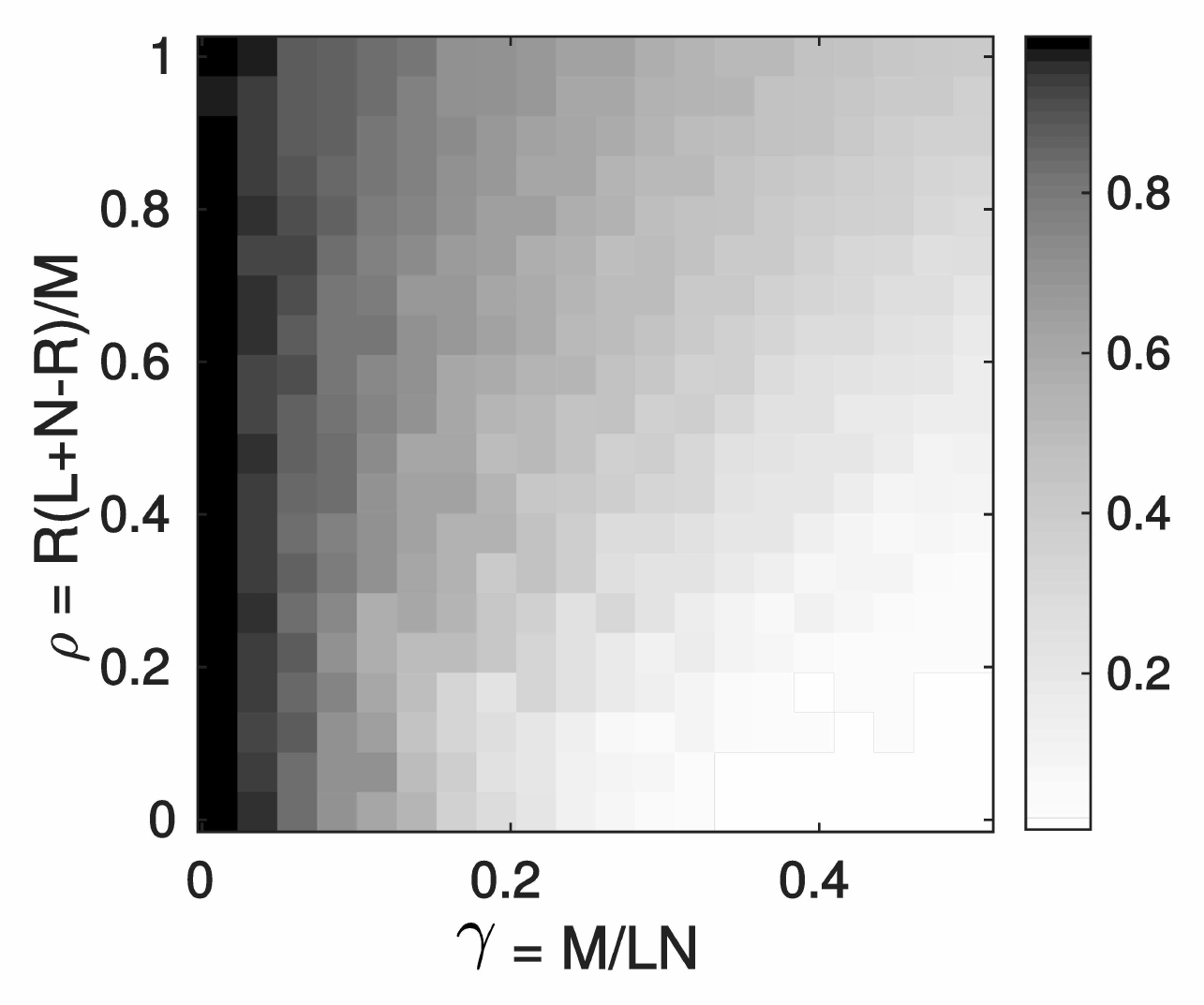}
        \label{fig:LRdctRMSE}}
        \caption{ESNs can have high STM capacity for multidimensional inputs with low-rank structure. The rMSE of the input recovery for inputs with canonical right singular vectors~\subref{fig:LRconRMSE} and Haar wavelets for right singular vectors~\subref{fig:LRhaarRMSE} is very low for a range of input rank and network sizes satisfying $\ensuremath{M}\xspace < \ensuremath{L}\xspace\ensuremath{N}\xspace$. The rMSE for inputs with DCT right singular vectors~\subref{fig:LRdctRMSE} remains high, a behavior predicted by our theoretical results.}
        \label{fig:LRsims}
\end{figure*}

\section{Conclusions}

Determining the fundamental limits of memory in recurrent networks is critical to understanding their behavior in machine learning tasks. In this work we show that randomly connected echo-state networks can exploit the low-dimensional structure in multidimensional input streams to achieve very high short-term memory capacity. Specifically, we show non-asymptotic bounds on recovery error for input sequences that have underlying low-dimensional statistics described by either joint sparsity or  low-rank properties (with no sparsity). For multiple sparse inputs, we find that the network size must be linearly proportional to the input sparsity and only logarithmically dependent on the total input size  (a combination of the input length and number of inputs). For inputs with low-rank structure, we find a similar dependency where the network size depends linearly on the underlying dimension of the inputs (the product of the input rank with the input length and the number of inputs) and logarithmically on the total input size. Both results continue to demonstrate that ESNs can have STM capacities much larger than the network size.

These results are a significant (conceptual and technical) generalization over previous work that provided theoretical guarantees in the case of a single sparse input~\citep{charles2014short}.  While the linear ESN structure is a simplified model, rigorous analysis of these networks has remained elusive due to the recurrence itself.  These results isolate the properties of the transient dynamics due to the recurrent connections, and may provide one foundation for which to explore the analysis of other complex network attributes such as nonlinearities and spiking properties. We also note that knowledge of how well neural networks compresses structured signals could indicate methods to pick the size of recurrent network layers. Specifically, if a task is thought to require a certain time-frame of an input signal, the overall sparsity (or rank) of the signal in that time-frame can be used in conjunction with the length of that time-frame to give a lower bound for the required number of nodes in the recurrent layer of the network. 

While the current paper is restricted to orthogonal connectivity matrices, previous work~\citep{charles2014short} has shown that a number of network structures can satisfy these criteria, including some types of small-world network topologies. Additionally, we explored here low-rank and sparse inputs separately. The methods we have used to prove Theorem~\ref{thm:STMlowrank}, however, have also been used to analyze recovery signals with other related structures (e.g. matrices that can be decomposed into the sum of a sparse and low-rank matrix) from compressive measurements~\citep{candes2011tight}. Our bounds presented here therefore could open up avenues for similar analysis of other low-dimensional signal classes. 

While the context of this paper is focused on the role of recurrent networks as a tool in machine learning tasks, these results may also lead to a better understanding of the STM properties in other networked systems (e.g., social networks, biological networks, distributed computing, etc.).  With respect to the literature relating recurrent ESN and liquid-state machines to working memory in biological systems, the notion of sparsity has much of the same flavor as the concept of \emph{chunking}~\citep{gobet2001chunking}. Chunking is the concept of humans learning to remember items in highly correlated groups rather than remembering items individually as a way of artificially increasing their working memory. Similarly, the use of sparsity bases allow RNNs to `chunk' items according to the basis elements. Thus, each basis counts only as one item (the true underlying sparsity) and the network needs only store these elements rather than storing every input separately.


\subsection*{Acknowledgments}
The authors are grateful to S. Bahmani, A. Ahmed and J. Romberg for valuable discussions related to this work. This work was supported in part by ONR grant N00014-15-1-2731 and NSF grants CCF-0830456 and CCF-1409422.

\renewcommand{\thesection}{A}
\section{Appendix}

\subsection{RIP for Multiple Gaussian Feed-forward Vectors}
\label{app:STMmulti}

In this appendix we prove Theorem~\ref{thm:STMmulti}, showing that the matrix representing the network evolution with \ensuremath{L}\xspace inputs and \emph{i.i.d.} Gaussian feed-forward vectors satisfies the RIP. Recall that we have $\bm{x}[N] = \bm{A}\widetilde{\ensuremath{\bm{{s}}}\xspace}$, where $\bm{A}\in\mathbb{R}^{\ensuremath{M}\xspace\times \ensuremath{N}\xspace\ensuremath{L}\xspace}$ is derived in Section~\ref{sec:stm} and that $\widetilde{\ensuremath{\bm{{s}}}\xspace}$ (the vectorization of $\bm{S}$) is sparse with respect to the basis $\Psi$, meaning that there is a $K$-sparse signal $\bm{a}$ such that $\widetilde{\ensuremath{\bm{{s}}}\xspace} =\Psi \bm{a}$. Similar to~~\citep{charles2014short}, this proof is based on showing conditions on \ensuremath{M}\xspace such that \ensuremath{\bm{{A}}}\xspace satisfies the RIP with respect to \ensuremath{\bm{\Psi}}\xspace, i.e. 
\begin{gather*}
(1-\ensuremath{\delta}\xspace)\ensuremath{\left|\left| {\bm{a}} \right|\right|}\xspace_2^2\le \ensuremath{\left|\left| {\ensuremath{\bm{{A}}}\xspace\ensuremath{\bm{\Psi}}\xspace\bm{a}} \right|\right|}\xspace_2^2 \le (1+\ensuremath{\delta}\xspace)\ensuremath{\left|\left| {\bm{a}} \right|\right|}\xspace_2^2, \nonumber
\end{gather*}
holds with high probability for all \ensuremath{K}\xspace-sparse $\bm{a}$. This is equivalent to bounding the following probability of the event
\begin{gather} 
	\ensuremath{\left|\left| {(\ensuremath{\bm{{A}}}\xspace\ensuremath{\bm{\Psi}}\xspace)^{H}\ensuremath{\bm{{A}}}\xspace\ensuremath{\bm{\Psi}}\xspace - I} \right|\right|}\xspace_\ensuremath{K}\xspace \le \ensuremath{\delta}\xspace. \label{eqn:eqvnt_bound}
\end{gather}
where the norm $\ensuremath{\left|\left| {\ensuremath{\bm{{A}}}\xspace} \right|\right|}\xspace_\ensuremath{K}\xspace$ is defined as 
\[ \ensuremath{\left|\left| {\ensuremath{\bm{{A}}}\xspace} \right|\right|}\xspace_K:=\sup_{\bm{y}\ is\ K-sparse}\frac{\bm{y}^{H}\ensuremath{\bm{{A}}}\xspace\bm{y}}{\ensuremath{\left|\left| {\bm{y}} \right|\right|}\xspace_2^2}.\]

First we bound the expectation of $\ensuremath{\left|\left| {(\ensuremath{\bm{{A}}}\xspace\ensuremath{\bm{\Psi}}\xspace)^{H}\ensuremath{\bm{{A}}}\xspace\ensuremath{\bm{\Psi}}\xspace - I} \right|\right|}\xspace_\ensuremath{K}\xspace$, and use the result to bound the tail probability of the event~\eqref{eqn:eqvnt_bound}.

\begin{proof}
	\subsubsection{Expectation}
First we let
\begin{gather*}
	\widehat{\ensuremath{\bm{{A}}}\xspace} = \left[ \begin{matrix} \ensuremath{\widetilde{\bm{Z}}}\xspace_1\ensuremath{\bm{F}}\xspace & \ensuremath{\widetilde{\bm{Z}}}\xspace_2F & \cdots & \ensuremath{\widetilde{\bm{Z}}}\xspace_\ensuremath{L}\xspace\ensuremath{\bm{F}}\xspace \end{matrix}\right]\ensuremath{\bm{\Psi}}\xspace, \nonumber
\end{gather*}
Since $(\ensuremath{\bm{{A}}}\xspace\ensuremath{\bm{\Psi}}\xspace)^{H}\ensuremath{\bm{{A}}}\xspace\ensuremath{\bm{\Psi}}\xspace=\widehat{\ensuremath{\bm{{A}}}\xspace}^{H}\widehat{\ensuremath{\bm{{A}}}\xspace}$, then for any $\bm{a}\in\ensuremath{\mathbb{R}}\xspace^{\ensuremath{N}\xspace\ensuremath{L}\xspace}$, $\ensuremath{\left|\left| {\ensuremath{\bm{{A}}}\xspace\ensuremath{\bm{\Psi}}\xspace \bm{a}} \right|\right|}\xspace_2=\ensuremath{\left|\left| {\widehat{\ensuremath{\bm{{A}}}\xspace}\bm{a}} \right|\right|}\xspace_2$. Therefore we only need to prove that
\[ \ensuremath{\left|\left| {\widehat{\ensuremath{\bm{{A}}}\xspace}^{H}\widehat{\ensuremath{\bm{{A}}}\xspace}-\bm{I}} \right|\right|}\xspace_\ensuremath{K}\xspace\le \ensuremath{\delta}\xspace, \]
holds with high probability when \ensuremath{M}\xspace is large enough. Let $V_i^{\rm H}$ denote the $i$th row of $\widehat{\ensuremath{\bm{{A}}}\xspace}$, i.e.,
\[ V_i^{\rm H}=\sum_{l=1}^{\ensuremath{L}\xspace}\tilde{\bm{z}}_{i,l}\ensuremath{\bm{F}}\xspace_{i}^{H}\ensuremath{\bm{\Psi}}\xspace_i. \]

Let $\bm{B}_1=\sum_{i=1}^{\ensuremath{M}\xspace/2}V_iV_i^{H}-\bm{I}/2$ and $\bm{B}_2=\sum_{i=\ensuremath{M}\xspace/2+1}^{\ensuremath{M}\xspace}V_iV_i^{H}-\bm{I}/2$; it is easy to check that $\bm{B}_1$ and $\bm{B}_2$ are complex conjugates, and that $\widehat{\ensuremath{\bm{{A}}}\xspace}^{H}\widehat{\ensuremath{\bm{{A}}}\xspace}-\bm{I}=\bm{B}_1+\bm{B}_2$. Therefore $\ensuremath{\left|\left| {\widehat{\ensuremath{\bm{{A}}}\xspace}^{H}\widehat{\ensuremath{\bm{{A}}}\xspace}-\bm{I}} \right|\right|}\xspace_\ensuremath{K}\xspace\le 2\ensuremath{\left|\left| {\bm{B}_1} \right|\right|}\xspace_\ensuremath{K}\xspace$, and we only need to show that $\ensuremath{\left|\left| {\bm{B}_1} \right|\right|}\xspace_\ensuremath{K}\xspace \le \ensuremath{\delta}\xspace$ with high probability when \ensuremath{M}\xspace is large enough. We can easily see that when $i \neq j$, $1 \le i,j\le \ensuremath{M}\xspace/2$, $V_i^{H}$ and $V_j^{H}$ are independent.

First we show that $\ensuremath{\mathbb{E}\left[{\ensuremath{\left|\left| {\bm{B}_1} \right|\right|}\xspace_\ensuremath{K}\xspace}\right]}\xspace$ is small with high probability when \ensuremath{M}\xspace is large enough. We then show that $\ensuremath{\left|\left| {\bm{B}_1} \right|\right|}\xspace_\ensuremath{K}\xspace$ is concentrated around its mean with high probability when \ensuremath{M}\xspace is large enough. By Lemma 6.7 in \cite{RAH:2010}, for a Rademacher sequence $\epsilon_i$, $i=1,...,\ensuremath{M}\xspace/2$,  we have
\begin{gather*}
	\ensuremath{\mathbb{E}\left[{\ensuremath{\left|\left| {\bm{B}_1} \right|\right|}\xspace_\ensuremath{K}\xspace}\right]}\xspace  = \ensuremath{\mathbb{E}\left[{ \ensuremath{\left|\left| {\sum_{i=1}^{\ensuremath{M}\xspace/2}(V_iV_i^{H}-\frac{1}{\ensuremath{M}\xspace}\bm{I})} \right|\right|}\xspace_\ensuremath{K}\xspace}\right]}\xspace \le  2\ensuremath{\mathbb{E}\left[{\ensuremath{\left|\left| {\sum_{i=1}^{\ensuremath{M}\xspace/2}\epsilon_iV_iV_i^{H}} \right|\right|}\xspace_\ensuremath{K}\xspace}\right]}\xspace. \nonumber
\end{gather*}

We can now apply Lemma 8.2 from \cite{RAH:2010}, giving us
\begin{eqnarray}
	\ensuremath{\mathbb{E}\left[{\ensuremath{\left|\left| {\bm{B}_1} \right|\right|}\xspace_\ensuremath{K}\xspace}\right]}\xspace & \leq & 2\ensuremath{\mathbb{E}\left[{\ensuremath{\left|\left| {\sum_{i=1}^{\ensuremath{M}\xspace/2}\epsilon_iV_iV_i^{H}} \right|\right|}\xspace_\ensuremath{K}\xspace}\right]}\xspace \nonumber\\
	&\leq & 2\mathbb{E}\left[\mathbb{E}\left[C_0 V_{\max} \sqrt{K}\log{(100K)}\sqrt{\log{(4NL)}\ln{(5M)}} \right.\right.\nonumber \\
        & & \qquad\qquad\qquad \left.\left.\cdot\sqrt{\left|\left| {\sum_{i=1}^{M/2}V_iV_i^{H}} \right|\right|}_K \ | V_i, i=1,...,M/2 \right]\right] \nonumber\\
	&\leq & \sqrt{C_1\ensuremath{K}\xspace\log^4(\ensuremath{N}\xspace\ensuremath{L}\xspace)}\sqrt{\ensuremath{\mathbb{E}\left[{V_{\max}^2}\right]}\xspace\ensuremath{\mathbb{E}\left[{\ensuremath{\left|\left| {\bm{B}_1} \right|\right|}\xspace_\ensuremath{K}\xspace+\frac{1}{2}}\right]}\xspace}, \label{eqn:Bineq}
\end{eqnarray}
where the last inequality results due to $25\ensuremath{K}\xspace\leq\ensuremath{N}\xspace\ensuremath{L}\xspace$, $5\ensuremath{M}\xspace\leq4\ensuremath{N}\xspace\ensuremath{L}\xspace$, the  Cauthy-Schwarz inequality and the triangle inequality, and
$V_{\max}=\max_{1\le l \le \ensuremath{M}\xspace/2}\ensuremath{\left|\left| {V_l} \right|\right|}\xspace_\infty.$
Note that the $p$th element of $V_l^{H}$ can be written
\[ V_l^{H}(p)=\sum_{i=1}^\ensuremath{L}\xspace\tilde{\ensuremath{\bm{{z}}}\xspace}_{l,i}\ensuremath{\bm{F}}\xspace_l^{H}\ensuremath{\bm{\Psi}}\xspace_i(p). \]
and since we know that
\[ \sum_{i=1}^L|\ensuremath{\bm{F}}\xspace_l^{H}\ensuremath{\bm{\Psi}}\xspace_i(p)|^2 \le \sum_{i=1}^L\ensuremath{\left|\left| {\ensuremath{\bm{\Psi}}\xspace_i(p)} \right|\right|}\xspace_2^2\mu^2(\ensuremath{\bm{\Psi}}\xspace)=\mu^2(\ensuremath{\bm{\Psi}}\xspace), \]
and
\[ V_{\max}=\max_{\substack{1\le l\le \ensuremath{M}\xspace/2\\1\le p\le \ensuremath{N}\xspace\ensuremath{L}\xspace}}|V_l(p)| \]
we can now use Corollary~\ref{tail_exp_max}. Setting $Q=\ensuremath{M}\xspace\ensuremath{N}\xspace\ensuremath{L}\xspace/2$, $\mu_0=\mu(\ensuremath{\bm{\Psi}}\xspace)$ in Corollary \ref{tail_exp_max} yields
\begin{gather}
	\ensuremath{\mathcal{P}\left[{ V_{\max}^2 > \frac{\mu^2(\ensuremath{\bm{\Psi}}\xspace)}{\ensuremath{M}\xspace}\log{\frac{\ensuremath{M}\xspace\ensuremath{N}\xspace\ensuremath{L}\xspace}{2\eta}} }\right]}\xspace\le \eta,  \label{v_max_tail}
\end{gather}
and
\begin{gather}
	\ensuremath{\mathbb{E}\left[{V_{\max}^2}\right]}\xspace  \le \frac{\mu^2(\ensuremath{\bm{\Psi}}\xspace)}{\ensuremath{M}\xspace}(\log{\frac{\ensuremath{M}\xspace\ensuremath{N}\xspace\ensuremath{L}\xspace}{2}}+1) \le  C_2\frac{\mu^2(\ensuremath{\bm{\Psi}}\xspace)}{\ensuremath{M}\xspace}\log{(\ensuremath{N}\xspace\ensuremath{L}\xspace)}.\label{v_max_exp}
\end{gather}
Returning to the inequality in Equation~\eqref{eqn:Bineq}, considering (\ref{v_max_exp}), we have 
\[\ensuremath{\mathbb{E}\left[{\ensuremath{\left|\left| {\bm{B}_1} \right|\right|}\xspace_\ensuremath{K}\xspace}\right]}\xspace\le a\sqrt{\ensuremath{\mathbb{E}\left[{\ensuremath{\left|\left| {\bm{B}_1} \right|\right|}\xspace_\ensuremath{K}\xspace}\right]}\xspace+1},\] 
where $a=\sqrt{C_1C_2\ensuremath{K}\xspace\log^5{(\ensuremath{N}\xspace\ensuremath{L}\xspace)}\mu^2(\ensuremath{\bm{\Psi}}\xspace)/\ensuremath{M}\xspace}$. Then $\ensuremath{\mathbb{E}\left[{\ensuremath{\left|\left| {\bm{B}_1} \right|\right|}\xspace_\ensuremath{K}\xspace}\right]}\xspace \le \frac{a^2}{2}+a\sqrt{\frac{1}{2}+\frac{a^2}{4}}$. When $a \le 1/2$, we get $\ensuremath{\mathbb{E}\left[{\ensuremath{\left|\left| {\bm{B}_1} \right|\right|}\xspace_\ensuremath{K}\xspace}\right]}\xspace \le a$. Let $0<{\ensuremath{\bm{{A}}}\xspace}^{\prime} \le a\le1/2$, and we can conclude that when
\begin{gather*}
\ensuremath{M}\xspace \ge\frac{C_3\ensuremath{K}\xspace\mu^2{(\ensuremath{\bm{\Psi}}\xspace)}\log^5{(\ensuremath{N}\xspace\ensuremath{L}\xspace)}}{{\ensuremath{\delta}\xspace}^{\prime}}, \nonumber
\end{gather*}
then
\[ \ensuremath{\mathbb{E}\left[{\ensuremath{\left|\left| {\bm{B}_1} \right|\right|}\xspace_\ensuremath{K}\xspace}\right]}\xspace \le {\ensuremath{\delta}\xspace}^{\prime}. \]

\subsubsection{Tail Bound}

Now we study the tail bound of $\ensuremath{\left|\left| {\bm{B}_1} \right|\right|}\xspace_\ensuremath{K}\xspace$. First we construct a second set of random variables $V_l^{\prime}$, which are independent of $V_l$ and are identically distributed as $V_l$. Additionally, we let
\[ \widetilde{\bm{B}}_1=\sum_{i=1}^{\ensuremath{M}\xspace/2}\left(V_iV_i^{H}-V_i^{\prime}V_i^{\prime{H}}\right), \]
and then according to \cite{charles2014short}, there is
\begin{gather}
	\ensuremath{\mathbb{E}\left[{\ensuremath{\left|\left| {\widetilde{\bm{B}}_1} \right|\right|}\xspace_\ensuremath{K}\xspace}\right]}\xspace\le 2\ensuremath{\mathbb{E}\left[{\ensuremath{\left|\left| {\bm{B}_1} \right|\right|}\xspace_\ensuremath{K}\xspace}\right]}\xspace, \label{exp_B_B_tilde}
\end{gather}
\begin{gather}
	\ensuremath{\mathcal{P}\left[{\ensuremath{\left|\left| {\bm{B}_1} \right|\right|}\xspace_\ensuremath{K}\xspace > 2\ensuremath{\mathbb{E}\left[{\ensuremath{\left|\left| {\bm{B}_1} \right|\right|}\xspace_\ensuremath{K}\xspace}\right]}\xspace+u}\right]}\xspace \le 2\ensuremath{\mathcal{P}\left[{\ensuremath{\left|\left| {\widetilde{\bm{B}}_1} \right|\right|}\xspace_\ensuremath{K}\xspace>u}\right]}\xspace.  \label{tail_B_B_tilde}
\end{gather}
Now since we have
\[ \ensuremath{\left|\left| {V_iV_i^{H}-V_i^{\prime}V_i^{\prime{H}}} \right|\right|}\xspace_\ensuremath{K}\xspace \le 2\max\{\ensuremath{\left|\left| {V_iV_i^{H}} \right|\right|}\xspace_\ensuremath{K}\xspace,\ensuremath{\left|\left| {V_i^{\prime}V_i^{\prime{H}}} \right|\right|}\xspace_\ensuremath{K}\xspace\},  \]
and
\[ \ensuremath{\left|\left| {V_iV_i^{H}} \right|\right|}\xspace_\ensuremath{K}\xspace \le \sup_{y\: is\: \ensuremath{K}\xspace-sparse}\ensuremath{\left|\left| {V_i} \right|\right|}\xspace_\infty^2\frac{\ensuremath{\left|\left| {y} \right|\right|}\xspace_1^2}{\ensuremath{\left|\left| {y} \right|\right|}\xspace_2^2} \le \ensuremath{K}\xspace V_{\max}^2, \]
then we know that
$\max_{1\le i\le \ensuremath{M}\xspace/2}\ensuremath{\left|\left| {V_iV_i^{H}} \right|\right|}\xspace_\ensuremath{K}\xspace\le \ensuremath{K}\xspace V_{\max}^2$ and $\max_{1\le i\le \ensuremath{M}\xspace/2}\ensuremath{\left|\left| {V_i^{\prime}V_i^{{\prime}{H}}} \right|\right|}\xspace_\ensuremath{K}\xspace \le \ensuremath{K}\xspace V_{\max}^{\prime2}$, where $V_{\max}^{\prime}=\max_{1\le l\le \ensuremath{M}\xspace/2}\ensuremath{\left|\left| {V_l^{\prime}} \right|\right|}\xspace_\infty$
then by Equation~\eqref{v_max_tail}, we obtain
\[ \ensuremath{\mathcal{P}\left[{\max_{1\le i\le \ensuremath{M}\xspace/2}\ensuremath{\left|\left| {V_iV_i^{H}} \right|\right|}\xspace_\ensuremath{K}\xspace>\frac{\ensuremath{K}\xspace\mu^2(\ensuremath{\bm{\Psi}}\xspace)}{\ensuremath{M}\xspace}\log{\frac{\ensuremath{M}\xspace\ensuremath{N}\xspace\ensuremath{L}\xspace}{2\eta}} }\right]}\xspace\le \ensuremath{\mathcal{P}\left[{V_{\max}^2>\frac{\mu^2(\ensuremath{\bm{\Psi}}\xspace)}{\ensuremath{M}\xspace}\log{\frac{\ensuremath{M}\xspace\ensuremath{N}\xspace\ensuremath{L}\xspace}{2\eta}} }\right]}\xspace\le \eta. \]

Since the probability theorems depend on bounded random variables, we define ${\bf F}$ to denote the following event
\[ {\bf F}=\left\{\max\left\{\max_{1\le i\le \ensuremath{M}\xspace/2}\ensuremath{\left|\left| {V_iV_i^{H}} \right|\right|}\xspace_\ensuremath{K}\xspace,\max_{1\le i\le \ensuremath{M}\xspace/2}\ensuremath{\left|\left| {V_i^{\prime}V_i^{\prime{H}}} \right|\right|}\xspace_\ensuremath{K}\xspace\right\}\le\frac{\ensuremath{K}\xspace\mu^2(\ensuremath{\bm{\Psi}}\xspace)}{\ensuremath{M}\xspace}\log{\frac{\ensuremath{M}\xspace\ensuremath{N}\xspace\ensuremath{L}\xspace}{2\eta}}\right\}, \]
such that $\ensuremath{\mathcal{P}\left[{{\bf F}^C}\right]}\xspace \le 2\eta$. Furthermore, we define ${\bf I}_F$ as the indicator function of ${\bf F}$, and let $\widehat{\bm{B}}_1=\sum_{i=1}^{\ensuremath{M}\xspace/2}\xi_i\left(V_iV_i^{H}-V_i^{\prime}V_i^{\prime{H}}\right){\bf I}_{F}$, where $\xi=\{\xi_i\}$, $i=1,2,...,\ensuremath{M}\xspace/2$ is a Rademacher sequence and independent of $V_i$. The truncated variable $Y_i = \xi_i\left(V_iV_i^{H}-V_i^{\prime}V_i^{\prime{H}}\right){\bf I}_{F}$ has a symmetric distribution and $\ensuremath{\left|\left| {Y_i} \right|\right|}\xspace_\ensuremath{K}\xspace$ is bounded by $B_{\max}:=\frac{2\ensuremath{K}\xspace\mu^2(\ensuremath{\bm{\Psi}}\xspace)}{\ensuremath{M}\xspace}\ln{\frac{\ensuremath{M}\xspace\ensuremath{N}\xspace\ensuremath{L}\xspace}{2\eta}}$. By Proposition 19 in \cite{Tropp2009a}, we have
\begin{gather}
	\ensuremath{\mathcal{P}\left[{\ensuremath{\left|\left| {\widehat{\bm{B}}_1} \right|\right|}\xspace_\ensuremath{K}\xspace>C_4(u\ensuremath{\mathbb{E}\left[{\ensuremath{\left|\left| {\widehat{\bm{B}}_1} \right|\right|}\xspace_\ensuremath{K}\xspace}\right]}\xspace + tB_{\max})}\right]}\xspace\le e^{-u^2}+e^{-t}, \label{tail_hat}
\end{gather}
for all $u,\ t\ge 1$. Following~\cite{Tropp2009a}, we find that 
\begin{gather}
	\ensuremath{\mathcal{P}\left[{\ensuremath{\left|\left| {\widetilde{\bm{B}}_1} \right|\right|}\xspace_\ensuremath{K}\xspace > v}\right]}\xspace\le \ensuremath{\mathcal{P}\left[{\ensuremath{\left|\left| {\widehat{\bm{B}}_1} \right|\right|}\xspace_\ensuremath{K}\xspace>v}\right]}\xspace+\ensuremath{\mathcal{P}\left[{{\bf F}^C}\right]}\xspace,\label{tail_hat_tilde}
\end{gather}
and
\begin{gather}
	\ensuremath{\mathbb{E}\left[{\ensuremath{\left|\left| {\widehat{\bm{B}}_1} \right|\right|}\xspace_\ensuremath{K}\xspace}\right]}\xspace \le \ensuremath{\mathbb{E}\left[{\ensuremath{\left|\left| {\widetilde{\bm{B}}_1} \right|\right|}\xspace_\ensuremath{K}\xspace}\right]}\xspace. \label{exp_hat_tilde}
\end{gather}
By combining Equations~\eqref{tail_hat}, \eqref{tail_hat_tilde}, and \eqref{exp_hat_tilde}, we get
\begin{gather*}
	\ensuremath{\mathcal{P}\left[{\ensuremath{\left|\left| {\widetilde{\bm{B}}_1} \right|\right|}\xspace_\ensuremath{K}\xspace > C_4(u\ensuremath{\mathbb{E}\left[{\ensuremath{\left|\left| {\widetilde{\bm{B}}_1} \right|\right|}\xspace_\ensuremath{K}\xspace}\right]}\xspace+tB_{\max})}\right]}\xspace\le e^{-u^2}+e^{-t}+2\eta. \nonumber
\end{gather*}
In Equation~\eqref{tail_tilde}, let $\eta<1/e$, $u=\sqrt{\log{\eta^{-1}}}$ and $t=\log{\eta^{-1}}$. These values yield
\begin{gather}
	\ensuremath{\mathcal{P}\left[{\ensuremath{\left|\left| {\widetilde{\bm{B}}_1} \right|\right|}\xspace_\ensuremath{K}\xspace>C_4(\sqrt{\log{\eta^{-1}}}\ensuremath{\mathbb{E}\left[{\ensuremath{\left|\left| {\widetilde{\bm{B}}_1} \right|\right|}\xspace_\ensuremath{K}\xspace}\right]}\xspace+(\log{\eta^{-1}})B_{\max})}\right]}\xspace\le 4\eta. \label{tail_tilde}
\end{gather}
Now we can combine Equations~\eqref{exp_B_B_tilde}, \eqref{tail_B_B_tilde}, and \eqref{tail_tilde} to get
\begin{gather*}
	\ensuremath{\mathcal{P}\left[{\ensuremath{\left|\left| {\bm{B}_1} \right|\right|}\xspace_\ensuremath{K}\xspace > 2\ensuremath{\mathbb{E}\left[{\ensuremath{\left|\left| {\bm{B}_1} \right|\right|}\xspace_\ensuremath{K}\xspace}\right]}\xspace+2C_4\sqrt{\log{\eta^{-1}}}\ensuremath{\mathbb{E}\left[{\ensuremath{\left|\left| {\bm{B}_1} \right|\right|}\xspace_\ensuremath{K}\xspace}\right]}\xspace+C_4(\log{\eta^{-1}})B_{\max}}\right]}\xspace \le 8\eta. \nonumber
\end{gather*}
By choosing
\begin{equation}
\ensuremath{M}\xspace\ge\frac{C_3\ensuremath{K}\xspace\mu^2{(\ensuremath{\bm{\Psi}}\xspace)}\log^5{(\ensuremath{N}\xspace\ensuremath{L}\xspace)}}{{\ensuremath{\delta}\xspace}^{\prime2}}, \label{M_deltaprime}
\end{equation}
where ${\ensuremath{\delta}\xspace}^{\prime}<1/2$, we obtain 
\[ \ensuremath{\mathcal{P}\left[{\ensuremath{\left|\left| {\bm{B}_1} \right|\right|}\xspace_\ensuremath{K}\xspace > 2{\ensuremath{\delta}\xspace}^{\prime}+2C_4{\ensuremath{\delta}\xspace}^{\prime}\sqrt{\log{\eta^{-1}}}+C_5\log{\eta^{-1}}\frac{\ensuremath{\delta}\xspace^{\prime2}\log{(\frac{1}{2}\ensuremath{M}\xspace\ensuremath{N}\xspace\ensuremath{L}\xspace\eta^{-1})}}{\log^5{(\ensuremath{N}\xspace\ensuremath{L}\xspace)}} }\right]}\xspace\le8\eta. \]

We observe now that
\[ \frac{\log(\frac{1}{2}\ensuremath{M}\xspace\ensuremath{N}\xspace\ensuremath{L}\xspace\eta^{-1})}{\log^5{(\ensuremath{N}\xspace\ensuremath{L}\xspace)}} \le \frac{2\log{(\ensuremath{N}\xspace\ensuremath{L}\xspace)}+\log{\eta^{-1}}}{\log^5{(\ensuremath{N}\xspace\ensuremath{L}\xspace)}}, \]
indicating that when $\eta \ge (\ensuremath{N}\xspace\ensuremath{L}\xspace)^{-\log^4{(\ensuremath{N}\xspace\ensuremath{L}\xspace)}}$, i.e., $\log{\eta^{-1}} \le \log^5{(\ensuremath{N}\xspace\ensuremath{L}\xspace)}$, then 
\[ \frac{\log(\frac{1}{2}\ensuremath{M}\xspace\ensuremath{N}\xspace\ensuremath{L}\xspace\eta^{-1})}{\log^5{(\ensuremath{N}\xspace\ensuremath{L}\xspace)}} \le C_6, \],
for a constant $C_6$. This inequality reduces our probability statement to
\[ \ensuremath{\mathcal{P}\left[{\ensuremath{\left|\left| {\bm{B}_1} \right|\right|}\xspace_\ensuremath{K}\xspace>2{\ensuremath{\delta}\xspace}^{\prime}+2C_4{\ensuremath{\delta}\xspace}^{\prime}\sqrt{\log{\eta^{-1}}}+C_5C_6\log{\eta^{-1}}\ensuremath{\delta}\xspace^{\prime2} }\right]}\xspace \le 8\eta. \]
For an arbitrary \ensuremath{\delta}\xspace with $0<\ensuremath{\delta}\xspace<1/2$, we let
\begin{gather}
\ensuremath{\delta}\xspace^{\prime} = \frac{\ensuremath{\delta}\xspace}{2C_4C_7\sqrt{\log{\eta^{-1}}}}, \label{deltaprime}
\end{gather}
resulting in 
\[ 2C_4\ensuremath{\delta}\xspace^{\prime}\sqrt{\log{\eta^{-1}}}=\frac{\ensuremath{\delta}\xspace}{C_7}, \]
\[ 2\ensuremath{\delta}\xspace^{\prime}=\frac{\ensuremath{\delta}\xspace}{C_4C_7\sqrt{\log{\eta^{-1}}}}\le \frac{\ensuremath{\delta}\xspace}{C_4C_7}, \]
\[ C_5C_6\log{\eta^{-1}}\ensuremath{\delta}\xspace^{\prime2}=C_5C_6\frac{\ensuremath{\delta}\xspace^2}{4C_4^2C_7^2}\le \frac{C_5C_6}{8C_4^2C_7^2}\ensuremath{\delta}\xspace. \]
Then we can see that if 
\[ C_7\ge \max\left\{3, \frac{3}{C_4}, \sqrt{\frac{3C_5C_6}{8C_4^2}}\right\}, \]
then
\[ \ensuremath{\mathcal{P}\left[{\ensuremath{\left|\left| {\bm{B}_1} \right|\right|}\xspace_\ensuremath{K}\xspace>\frac{\ensuremath{\delta}\xspace}{3}+\frac{\ensuremath{\delta}\xspace}{3}+\frac{\ensuremath{\delta}\xspace}{3}}\right]}\xspace<8\eta. \] 
Now by plugging (\ref{deltaprime}) into (\ref{M_deltaprime}), we know that when $(\ensuremath{N}\xspace\ensuremath{L}\xspace)^{-\ln^4{(NL)}}\le \eta<1/e$,
there exists a constant $C$ such that when
\begin{gather*}
    \ensuremath{M}\xspace \ge \frac{C\ensuremath{K}\xspace\mu^2{(\ensuremath{\bm{\Psi}}\xspace)}\log^5{(\ensuremath{N}\xspace\ensuremath{L}\xspace)\log{(\eta^{-1})}}}{\ensuremath{\delta}\xspace^2}, \nonumber
\end{gather*}
there is
\begin{gather*}
{\rm P}(\ensuremath{\left|\left| {\bm{B}_1} \right|\right|}\xspace_\ensuremath{K}\xspace>\ensuremath{\delta}\xspace) \le 8\eta,
\end{gather*}
which completes the proof.
\end{proof} 

\subsubsection{Lemmas for Theorem~\ref{thm:STMmulti}}

\begin{lemma}\label{tail_single}
 Suppose we have $n$ complex Gaussian random variables, $z_1,\ z_2, \cdots\ z_n$, $z_i=x_i+{\rm j}y_i$, where $x_i$ and $y_i$ denote the real and imaginary parts of $z_i$. Let $x_i$ and $y_i$ i.i.d Gaussian r.v.'s with mean 0 and variance $1/2M$. $\phi_1,\ \phi_2,\ \cdots\ \phi_n$ r.v.'s which are independent of $z_i$ for all $i$ and satisfy
$$
\sum_{i=1}^{n}|\phi_i|^2\le\mu_0^2,
$$
then for $w=\sum_{i=1}^{n}z_i\phi_i$, 
$$
\ensuremath{\mathcal{P}\left[{|w|^2>u}\right]}\xspace \le e^{-\frac{\ensuremath{M}\xspace u}{\mu_0^2}}.
$$
\end{lemma}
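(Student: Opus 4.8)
The plan is to condition on the multipliers $\phi=(\phi_1,\dots,\phi_n)$ and exploit the circular symmetry of the complex Gaussian weights. Since $\phi$ is independent of the $z_i$, I would first fix a realization of $\phi$ and study the conditional law of $w=\sum_{i=1}^n z_i\phi_i$. Because $x_i$ and $y_i$ are i.i.d.\ $\mathcal{N}(0,1/2M)$, each $z_i$ is a proper (circularly symmetric) complex Gaussian with $\mathbb{E}[|z_i|^2]=1/M$, and a fixed linear combination of independent proper complex Gaussians is again a proper complex Gaussian. Thus $w\mid\phi$ is a centered complex Gaussian, and the whole argument reduces to computing its variance and reading off the tail.

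Next I would verify the conditional second-order structure explicitly. Writing $\phi_i=a_i+jb_i$ and splitting into real and imaginary parts, $\mathrm{Re}(w)=\sum_i(x_ia_i-y_ib_i)$ and $\mathrm{Im}(w)=\sum_i(x_ib_i+y_ia_i)$, a short calculation gives $\mathrm{Var}(\mathrm{Re}(w)\mid\phi)=\mathrm{Var}(\mathrm{Im}(w)\mid\phi)=\tfrac{1}{2M}\sum_i|\phi_i|^2$ and $\mathrm{Cov}(\mathrm{Re}(w),\mathrm{Im}(w)\mid\phi)=0$. The covariance cancels precisely because $x_i$ and $y_i$ are uncorrelated and share the common variance $1/2M$. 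Equivalently, one notes that circular symmetry is preserved under scaling and summation, so that $w\mid\phi$ is circularly symmetric with $\mathbb{E}[|w|^2\mid\phi]=\sum_i|\phi_i|^2\,\mathbb{E}[|z_i|^2]=\tfrac{1}{M}\sum_i|\phi_i|^2$.

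The tail bound is then immediate. For a centered proper complex Gaussian with $\mathbb{E}[|w|^2\mid\phi]=\nu$, the quantity $|w|^2=\mathrm{Re}(w)^2+\mathrm{Im}(w)^2$ is a sum of two independent $\mathcal{N}(0,\nu/2)$ random variables squared, hence exponentially distributed with mean $\nu$; therefore $\mathcal{P}[|w|^2>u\mid\phi]=e^{-u/\nu}=\exp\!\big(-Mu/\textstyle\sum_i|\phi_i|^2\big)$. Invoking the hypothesis $\sum_i|\phi_i|^2\le\mu_0^2$ to lower-bound the exponential rate shows this conditional tail is at most $e^{-Mu/\mu_0^2}$ for \emph{every} realization of $\phi$. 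Removing the conditioning by the tower property, $\mathcal{P}[|w|^2>u]=\mathbb{E}_\phi[\mathcal{P}[|w|^2>u\mid\phi]]\le e^{-Mu/\mu_0^2}$, which completes the argument.

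The computations are routine; the only step that deserves care is establishing the conditional independence and equal variance of $\mathrm{Re}(w)$ and $\mathrm{Im}(w)$, i.e.\ that $w$ inherits circular symmetry. This is exactly what turns $|w|^2$ into an exponential random variable, rather than a general positive quadratic form in correlated Gaussians whose tail would not factor so cleanly, and it produces the single-exponential bound. It hinges on the assumption that the real and imaginary parts of $z_i$ have the \emph{identical} variance $1/2M$; without this symmetry the two coordinate variances would differ and the tail would carry an additional constant.
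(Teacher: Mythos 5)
Your proposal is correct and takes essentially the same route as the paper's own proof: condition on $\phi$, verify that $\mathrm{Re}(w)$ and $\mathrm{Im}(w)$ are conditionally independent with common variance $\frac{1}{2M}\sum_i|\phi_i|^2$ (the zero-covariance computation you flag as the key step is exactly the one the paper performs), deduce the exponential tail, bound the rate using $\sum_i|\phi_i|^2\le\mu_0^2$, and integrate out $\phi$. The only cosmetic difference is that you phrase the distributional fact via circular symmetry and the exponential law, while the paper states it as a two-degree $\chi^2$ tail --- these are identical.
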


\begin{proof}
We use $x$ and $y$ to denote the real and imaginary parts of $w$, and $a_i$ and $b_i$ to denote the real and imaginary parts of $\phi_i$. We have
\begin{gather*}
  w = \sum_{i=1}^n(a_ix_i-b_iy_i)+{\rm j}\sum_{i=1}^n(a_iy_i+b_ix_i) x+{\rm j}y. \nonumber
\end{gather*}
Then conditioned on $a_i$ and $b_i$, $x$ and $y$ have distribution $\mathcal{N}(0, \frac{1}{2\ensuremath{M}\xspace}\sum_{i=1}^{n}(a_i^2+b_i^2))$.

The next step is to prove the conditional independence of $x$ and $y$. Since
\begin{gather*}
\mbox{Cov}(a_ix_i-b_iy_i, a_iy_i+b_ix_i|a_i,b_i) = \frac{a_ib_i}{2\ensuremath{M}\xspace}-\frac{a_ib_i}{2\ensuremath{M}\xspace} = 0, \nonumber
\end{gather*}
when $i\ne j$, $x_i$, $y_i$ are independent of $x_j$, $y_j$, and $x$ and $y$ are conditionally independent. Thus, conditioned on $a_i$ and $b_i$, 
\[\frac{2\ensuremath{M}\xspace}{\sum_{i=1}^n{(a_i^2+b_i^2)}}|w|^2, \]
is $\chi^2$ distributed.
We use $\chi_2$ to denote a two-degree $\chi^2$ distributed random variable. According to the results on $\chi^2$ distributions, we have
\begin{gather}
	\ensuremath{\mathcal{P}\left[{|w|^2>u|a_i, b_i}\right]}\xspace = \ensuremath{\mathcal{P}\left[{\chi_2>\frac{2\ensuremath{M}\xspace u}{\sum_{i=1}^n{(a_i^2+b_i^2)}} | a_i, b_i }\right]}\xspace  = e^{-\frac{\ensuremath{M}\xspace u}{\sum_{i=1}^n{(a_i^2+b_i^2)}}}  \le e^{-\frac{\ensuremath{M}\xspace u}{\mu_0^2}}. \label{chi_2_tail}
\end{gather}
Noticing that Equation~\eqref{chi_2_tail} holds for all possible values of $a_i$ and $b_i$ completes the proof.
\end{proof}

\begin{corollary}\label{tail_exp_max}
For $Q$ r.v.'s, $w_1,\ w_2,\ \cdots,\ w_Q$, let
$w_i=\sum_{l=1}^n z_{il}\phi_{il}$ and
$z_{il}=x_{il}+{\rm j}y_{il}$, where
$x_{il}$, $y_{il}$, $1\le i\le Q$, $1\le l\le n$ are i.i.d.
Gaussian distributed with mean 0 and variance $1/2\ensuremath{M}\xspace$. Suppose for any $i$, there is
\[ \sum_{l=1}^{n}|\phi_{il}|^2\le\mu_0^2. \]
And let $w_{\max }=\max_{1\le i \le Q}|w_i|$, then for $\eta>0$, we have
\[ \ensuremath{\mathcal{P}\left[{w_{\max}^2>\frac{\mu_0^2}{\ensuremath{M}\xspace}\log{\frac{Q}{\eta}}}\right]}\xspace\le \eta, \]
and
\[ \ensuremath{\mathbb{E}\left[{w_{\rm max}^2}\right]}\xspace \le \frac{\mu_0^2}{\ensuremath{M}\xspace}(\ln{Q}+1). \]
\end{corollary}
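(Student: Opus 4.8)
The plan is to lift the single-variable concentration bound of Lemma~\ref{tail_single} to the maximum via a union bound, and then to obtain the expectation bound by integrating the resulting tail. Each $w_i$ has exactly the form $\sum_{l=1}^n z_{il}\phi_{il}$ required by Lemma~\ref{tail_single}, with $\sum_l |\phi_{il}|^2 \le \mu_0^2$ by hypothesis, so that lemma applies to each index separately and gives $\mathcal{P}[|w_i|^2 > u] \le e^{-Mu/\mu_0^2}$ for every fixed $i$. I would emphasize that this per-index bound holds uniformly in $i$ and uses nothing about the joint distribution of the $w_i$, which dovetails with the union bound (which likewise needs no independence).

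For the tail statement I would write the event as a union, $\{w_{\max}^2 > u\} = \bigcup_{i=1}^Q \{|w_i|^2 > u\}$, and combine the union bound with Lemma~\ref{tail_single} to get
\begin{gather*}
\mathcal{P}[w_{\max}^2 > u] \le \sum_{i=1}^Q \mathcal{P}[|w_i|^2 > u] \le Q\, e^{-Mu/\mu_0^2}.
\end{gather*}
Choosing $u = \frac{\mu_0^2}{M}\log\frac{Q}{\eta}$ makes the right-hand side equal to $Q\cdot(\eta/Q) = \eta$, which is precisely the claimed probability bound (and for $\eta \ge 1$ the bound is vacuous, so only $0 < \eta < 1$ need be considered, where $u > 0$).

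For the expectation I would invoke the layer-cake identity $\mathbb{E}[w_{\max}^2] = \int_0^\infty \mathcal{P}[w_{\max}^2 > u]\,du$ and split the integral at the threshold $u_0 = \frac{\mu_0^2}{M}\ln Q$. On $[0,u_0]$ I bound the integrand trivially by $1$, contributing at most $u_0 = \frac{\mu_0^2}{M}\ln Q$; on $[u_0,\infty)$ I use the union tail $Q\,e^{-Mu/\mu_0^2}$ derived above, so that
\begin{gather*}
\int_{u_0}^\infty Q\, e^{-Mu/\mu_0^2}\,du = Q\cdot\frac{\mu_0^2}{M}\,e^{-Mu_0/\mu_0^2} = Q\cdot\frac{\mu_0^2}{M}\cdot\frac{1}{Q} = \frac{\mu_0^2}{M}.
\end{gather*}
Summing the two contributions yields $\mathbb{E}[w_{\max}^2] \le \frac{\mu_0^2}{M}(\ln Q + 1)$, as required.

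There is no genuine obstacle here: the argument is the standard union-bound-plus-tail-integration, and the only real choice is the split point $u_0$, selected so that the sub-exponential tail integral absorbs the factor $Q$ into a constant. The one point worth flagging is that both conclusions are distribution-free across the $w_i$ (no joint independence is used), so the corollary applies verbatim to the possibly dependent quantities $V_l(p)$ that arise when it is invoked in the proof of Theorem~\ref{thm:STMmulti}.
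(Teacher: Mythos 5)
Your proof is correct and follows exactly the paper's argument: a union bound over the per-index tail from Lemma~\ref{tail_single}, the substitution $u = \frac{\mu_0^2}{M}\log\frac{Q}{\eta}$ for the probability statement, and the layer-cake integral split at $\frac{\mu_0^2}{M}\ln Q$ for the expectation. Your added remark that no joint independence of the $w_i$ is used is a worthwhile clarification, since the corollary is indeed applied to dependent quantities in the proof of Theorem~\ref{thm:STMmulti}.
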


\begin{proof}
According to Lemma~\ref{tail_single} and by using union bound, we have
\[ \ensuremath{\mathcal{P}\left[{w_{\max}^2>u}\right]}\xspace\le Qe^{-\frac{\ensuremath{M}\xspace u}{\mu_0^2}}. \]
Let $\eta=Qe^{-\frac{\ensuremath{M}\xspace u}{\mu_0^2}}$ and there is
\[ \ensuremath{\mathcal{P}\left[{w_{\max}^2>\frac{\mu_0^2}{\ensuremath{M}\xspace}\log{\frac{Q}{\eta}}}\right]}\xspace\le \eta. \]

Then we have
\begin{eqnarray*}
	\ensuremath{\mathbb{E}\left[{w_{\max}^2}\right]}\xspace &= & \int_0^\infty\ensuremath{\mathcal{P}\left[{w_{\max}^2>u}\right]}\xspace du\nonumber\\
                            & \le &\int_0^{\frac{\mu_0^2}{\ensuremath{M}\xspace}\ln{Q}}1du+\int_{\frac{\mu_0^2}{\ensuremath{M}\xspace}\log{Q}}^\infty Qe^{-\frac{\ensuremath{M}\xspace u}{\mu_0^2}}du\nonumber\\
  & = & \frac{\mu_0^2}{\ensuremath{M}\xspace}(\log{Q}+1).\nonumber
\end{eqnarray*}
\end{proof}

\subsection{Proof of Low-rank Recovery}
\label{app:lowrank}

In this appendix we prove Theorem~\ref{thm:STMlowrank} where a low-rank input matrix \ensuremath{\bm{{S}}}\xspace can be recovered from the network state $\ensuremath{\bm{{x}}[N]}\xspace$ via nuclear norm optimization~\citep{candes2010power,candes2010matrix,recht2010guaranteed}. To prove this theorem we use the dual certificate approach used to prove similar results in~\citep{ahmed2013compressive,candes2011probabilistic}. In this methodology we seek a certificate \ensuremath{\bm{Y}}\xspace whose projections into and out of the space spanned by the singular vectors of \ensuremath{\bm{{S}}}\xspace are bounded appropriately. Specifically if we consider the singular value decomposition of \ensuremath{\bm{{S}}}\xspace as
\begin{gather*}
	\ensuremath{\bm{{S}}}\xspace = \ensuremath{\bm{Q}}\xspace\ensuremath{\bm{\Sigma}}\xspace\ensuremath{\bm{V}^{\ast}}\xspace \nonumber
\end{gather*}
and we consider the projection \ensuremath{\mathcal{P}_T}\xspace which projects a matrix into the space $T$ spanned by the left and right singular vectors, 
\begin{gather}
	\ensuremath{\mathcal{P}_T\left( \bm{W} \right)}\xspace = \ensuremath{\bm{Q}}\xspace\ensuremath{\bm{Q}^{\ast}}\xspace\bm{W} + \bm{W}\ensuremath{\bm{V}}\xspace\ensuremath{\bm{V}^{\ast}}\xspace - \ensuremath{\bm{Q}}\xspace\ensuremath{\bm{Q}^{\ast}}\xspace\bm{W}\ensuremath{\bm{V}}\xspace\ensuremath{\bm{V}^{\ast}}\xspace \label{eqn:projdef}
\end{gather}
the conditions for the dual certificate are that \ensuremath{\mathcal{{A}}}\xspace is injective on $T$ and there exists a matrix \ensuremath{\bm{Y}}\xspace which satisfies
\begin{eqnarray}
	\ensuremath{\left|\left| {\ensuremath{\mathcal{P}_T\left( \ensuremath{\bm{Y}}\xspace \right)}\xspace - \ensuremath{\bm{Q}}\xspace\ensuremath{\bm{V}}\xspace^H} \right|\right|}\xspace_F  & \leq & \frac{1}{2\sqrt{2}\gamma} \label{eqn:dual1}  \\
	\ensuremath{\left|\left| {\ensuremath{\mathcal{P}_{T^{\perp}}\left( \ensuremath{\bm{Y}}\xspace \right)}\xspace} \right|\right|}\xspace                  & \leq & \frac{1}{2}               \label{eqn:dual2}
\end{eqnarray}
where the projection \ensuremath{\mathcal{P}_{T^{\perp}}}\xspace is the projection onto the perpendicular space to $T$, 
\begin{gather*}
	\ensuremath{\mathcal{P}_{T^{\perp}}\left( \bm{W} \right)}\xspace = \left(\bm{I} - \ensuremath{\bm{Q}}\xspace\ensuremath{\bm{Q}^{\ast}}\xspace\right)\bm{W}\left(\bm{I} - \ensuremath{\bm{V}}\xspace\ensuremath{\bm{V}^{\ast}}\xspace\right) \nonumber
\end{gather*}

The remainder of this proof will be devoted to demonstrating that there does exist a certificate \ensuremath{\bm{Y}}\xspace by iteratively devising \ensuremath{\bm{Y}}\xspace via a golfing scheme~\citep{gross2011recovering,candes2011probabilistic,ahmed2014blind}. The golfing scheme essentially generates an iterative method which defined a series of certificate vectors \ensuremath{\bm{Y}_k}\xspace for $k\in[1,\cdots,{\kappa}]$ which converge to a certificate $\ensuremath{\bm{Y}}\xspace_{\kappa}$ which satisfies the necessary conditions. As in~\citep{ahmed2013compressive}, we can initialize the $0^{th}$ iterate to zero, and define the $k^{th}$ iterate in terms of the $\ensuremath{\bm{Y}}\xspace_{k-1}$ as
\begin{gather*}
	\ensuremath{\bm{Y}_k}\xspace = \ensuremath{\bm{Y}}\xspace_{k-1} +{\kappa}\ensuremath{\mathcal{{A}}^{\ast}}\xspace_k\ensuremath{\mathcal{{A}}}\xspace_k(\ensuremath{\bm{Q}}\xspace\ensuremath{\bm{V}^{\ast}}\xspace -\ensuremath{\mathcal{P}_T\left( \ensuremath{\bm{Y}}\xspace_{k-1} \right)}\xspace), \nonumber
\end{gather*}
where 
\begin{gather}
	\ensuremath{\mathcal{{A}}\left( \bm{W} \right)}\xspace = \mbox{vec}\left( \ensuremath{\langle {\ensuremath{\bm{{A}}_n}\xspace},{\bm{W}}\rangle}\xspace \right). \label{eqn:linopdef}
\end{gather}
We can see that since every iterate has $\ensuremath{\mathcal{{A}}^{\ast}}\xspace_k$ applied to it, every iteration is projected in to the range of \ensuremath{\mathcal{{A}}^{\ast}}\xspace, indicating that the final iteration \ensuremath{\bm{Y}}\xspace will also be in the range of \ensuremath{\mathcal{{A}}^{\ast}}\xspace. In~\citep{ahmed2013compressive}, Asif and Romberg define a simpler iteration 
\begin{gather*}
	\ensuremath{\widetilde{\bm{Y}}_k}\xspace = (\ensuremath{\mathcal{P}_T}\xspace - {\kappa}\ensuremath{\mathcal{P}_T}\xspace\ensuremath{\mathcal{{A}}^{\ast}}\xspace_k\ensuremath{\mathcal{{A}}}\xspace_k\ensuremath{\mathcal{P}_T}\xspace)\ensuremath{\widetilde{\bm{Y}}}\xspace_{k-1}, \nonumber
\end{gather*}
which is expressed in terms of the modified certificate
\begin{gather*}
	\ensuremath{\widetilde{\bm{Y}}_k}\xspace = \ensuremath{\mathcal{P}_T\left( \ensuremath{\bm{Y}_k}\xspace \right)}\xspace - \ensuremath{\bm{Q}}\xspace\ensuremath{\bm{V}^{\ast}}\xspace.   \nonumber
\end{gather*}

What remains now is to demonstrate that this iterative procedure converges, with high probability, to a certificate which satisfies the desired dual certificate conditions. We start by using Lemma~\ref{lem:lemma1} and observing that the Forbenious norm of the $k^{th}$ iterate is well bounded with probability $1-O((\ensuremath{L}\xspace\ensuremath{N}\xspace)^{-\beta})$ by 
\begin{eqnarray*}
	\ensuremath{\left|\left| {\ensuremath{\widetilde{\bm{Y}}_k}\xspace} \right|\right|}\xspace_F & \leq & \max_k \ensuremath{\left|\left| {\ensuremath{\mathcal{P}_T}\xspace - {\kappa}\ensuremath{\mathcal{P}_T}\xspace\ensuremath{\mathcal{{A}}^{\ast}}\xspace_k\ensuremath{\mathcal{{A}}}\xspace_k\ensuremath{\mathcal{P}_T}\xspace} \right|\right|}\xspace\ensuremath{\left|\left| {\ensuremath{\widetilde{\bm{Y}}}\xspace_{k-1}} \right|\right|}\xspace_F \nonumber \\
	& \leq & 2^{-k}\ensuremath{\left|\left| {\ensuremath{\widetilde{\bm{Y}}}\xspace_{0}} \right|\right|}\xspace_F  \nonumber \\
	& \leq & 2^{-k}\ensuremath{\left|\left| {\ensuremath{\bm{Q}}\xspace\ensuremath{\bm{V}^{\ast}}\xspace} \right|\right|}\xspace_F \nonumber \\
	& \leq & 2^{-k}\sqrt{\ensuremath{R}\xspace},        \nonumber 
\end{eqnarray*}
so long that $\ensuremath{M}\xspace \leq c\beta{\kappa}\ensuremath{R}\xspace(\ensuremath{N}\xspace + \mu^2_0\ensuremath{L}\xspace)\log^2(\ensuremath{L}\xspace\ensuremath{N}\xspace)$. As in~\citep{ahmed2013compressive} we observe that when we choose ${\kappa} \geq 0.5\log_2(8\gamma^2\ensuremath{R}\xspace)$, the bound for the Frobenious norm of $\ensuremath{\widetilde{\bm{Y}}}\xspace_{\kappa}$ is bounded by $\ensuremath{\left|\left| {\ensuremath{\widetilde{\bm{Y}}}\xspace_{\kappa}} \right|\right|}\xspace_F \leq (2\sqrt{2}\gamma)^{-1}$. 

To show that the second condition on the certificate is also satisfies, we apply Lemma~\ref{lem:lemma2}. We begin with writing the quantity we wish to bound in terms of the past golfing scheme iterate
\begin{eqnarray*}
	\ensuremath{\left|\left| {\ensuremath{\mathcal{P}_{T^{\perp}}\left( \ensuremath{\bm{Y}}\xspace_{{\kappa}} \right)}\xspace} \right|\right|}\xspace & \leq & \sum_{k=1}^{\kappa} \ensuremath{\left|\left| {\ensuremath{\mathcal{P}_{T^{\perp}}\left( {\kappa}\ensuremath{\mathcal{{A}}^{\ast}}\xspace_k\ensuremath{\mathcal{{A}}}\xspace_k\ensuremath{\widetilde{\bm{Y}}}\xspace_{k-1}  \right)}\xspace} \right|\right|}\xspace\nonumber \\
	& = & \sum_{k=1}^{\kappa} \ensuremath{\left|\left| {\ensuremath{\mathcal{P}_{T^{\perp}}\left( {\kappa}\ensuremath{\mathcal{{A}}^{\ast}}\xspace_k\ensuremath{\mathcal{{A}}}\xspace_k\ensuremath{\widetilde{\bm{Y}}}\xspace_{k-1} - \ensuremath{\widetilde{\bm{Y}}}\xspace_{k-1}  \right)}\xspace} \right|\right|}\xspace \nonumber \\
	& \leq & \sum_{k=1}^{\kappa} \ensuremath{\left|\left| {{\kappa}\ensuremath{\mathcal{{A}}^{\ast}}\xspace_k\ensuremath{\mathcal{{A}}}\xspace_k\ensuremath{\widetilde{\bm{Y}}}\xspace_{k-1} - \ensuremath{\widetilde{\bm{Y}}}\xspace_{k-1}} \right|\right|}\xspace         \nonumber \\
	& \leq & \sum_{k=1}^{\kappa} \ensuremath{\left|\left| {{\kappa}\ensuremath{\mathcal{{A}}^{\ast}}\xspace_k\ensuremath{\mathcal{{A}}}\xspace_k\ensuremath{\widetilde{\bm{Y}}}\xspace_{k-1} - \ensuremath{\widetilde{\bm{Y}}}\xspace_{k-1}} \right|\right|}\xspace_F       \nonumber \\
	& \leq & \sum_{k=1}^{\kappa} \max_{k\in[1,\ldots{\kappa}]} \ensuremath{\left|\left| {{\kappa}\ensuremath{\mathcal{{A}}^{\ast}}\xspace_k\ensuremath{\mathcal{{A}}}\xspace_k\ensuremath{\widetilde{\bm{Y}}}\xspace_{k-1} - \ensuremath{\widetilde{\bm{Y}}}\xspace_{k-1}} \right|\right|}\xspace_F  \nonumber \\
	& \leq & \sum_{k=1}^{\kappa} \frac{1}{2}2^{-k} \nonumber \\
	& \leq & \frac{1}{2}                                  \nonumber
\end{eqnarray*}
We use Lemma~\ref{lem:lemma2} to bound the maximum spectral norm of ${\kappa}\ensuremath{\mathcal{{A}}^{\ast}}\xspace_k\ensuremath{\mathcal{{A}}}\xspace_k\ensuremath{\widetilde{\bm{Y}}}\xspace_{k-1} - \ensuremath{\widetilde{\bm{Y}}}\xspace_{k-1}$ with probability $1-O((\ensuremath{L}\xspace\ensuremath{N}\xspace)^{1-\beta}$. Taking ${\kappa} \geq \log(\ensuremath{L}\xspace\ensuremath{N}\xspace)$ shows that the final certificate $\ensuremath{\bm{Y}}\xspace_{\kappa}$ satisfies all the desired properties, completing the proof.

\subsubsection{Matrix Bernstein Inequality and Olicz Norm}

The lemmas required in our main result depend heavily on the matrix Bernstein inequality~\citep{tropp2012user}. This inequality uses the variance measure and Oricz norm of a matrix to bound the largest singular value of the matrix. The matrix Bernstein inequality is summarized as
\begin{theorem}[Matrix Bernstein's Inequality]
	\label{thm:matbern}
        Let $\bm{X}_i\in\ensuremath{\mathbb{R}}\xspace^{\ensuremath{L}\xspace,\ensuremath{N}\xspace}$, $i\in[1,\dots,\ensuremath{M}\xspace]$ be \ensuremath{M}\xspace random matrices such that $\ensuremath{\mathbb{E}\left[{\bm{X}_i}\right]}\xspace = 0$ and $\ensuremath{\left|\left| {\bm{X}_i} \right|\right|}\xspace_{\psi_\alpha} < U_{\alpha} < \infty$ for some $\alpha \geq 1$. Then with probability $1-e^{-t}$, the spectral norm of the sum is bounded by
	\begin{gather*}
		\ensuremath{\left|\left| {\sum_{i=1}^\ensuremath{M}\xspace \bm{X}_i } \right|\right|}\xspace \leq C\max\left\{\sigma_X\sqrt{t+\log(\ensuremath{L}\xspace + \ensuremath{N}\xspace)}, U_{\alpha}\log^{1/\alpha}\left(\frac{\ensuremath{M}\xspace U_{\alpha}^2}{\sigma_X^2}\right)\left(t + \log(\ensuremath{L}\xspace + \ensuremath{N}\xspace)\right) \right\}, \nonumber
	\end{gather*}
	for some constant $C$ and the variance parameter defined by
	\begin{gather*}
		\sigma_X = \max\left\{\ensuremath{\left|\left| {\sum_{i=1}^\ensuremath{M}\xspace \ensuremath{\mathbb{E}\left[{\bm{X}_i\bm{X}_i^{\ast}}\right]}\xspace} \right|\right|}\xspace^{1/2}, \ensuremath{\left|\left| {\sum_{i=1}^\ensuremath{M}\xspace \ensuremath{\mathbb{E}\left[{\bm{X}_i^{\ast}\bm{X}_i}\right]}\xspace} \right|\right|}\xspace^{1/2} \right\}. \nonumber
	\end{gather*}
\end{theorem}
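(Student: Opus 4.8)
The plan is to reduce the rectangular problem to a Hermitian one, then combine the matrix Laplace-transform (cumulant) method with a truncation argument that converts the unbounded, subexponential increments $\bm{X}_i$ into bounded ones so that the classical matrix Bernstein machinery of \citep{tropp2012user} applies. First I would apply the Hermitian dilation $\mathcal{D}(\bm{X}_i)=\bigl[\begin{smallmatrix} \bm{0} & \bm{X}_i \\ \bm{X}_i^\ast & \bm{0}\end{smallmatrix}\bigr]\in\mathbb{R}^{(L+N)\times(L+N)}$. Since $\|\sum_i \bm{X}_i\| = \lambda_{\max}(\sum_i \mathcal{D}(\bm{X}_i))$ and the dilation is linear, operator-norm isometric, and mean-preserving, it suffices to prove a one-sided eigenvalue tail bound for a sum of independent, zero-mean Hermitian matrices $\bm{Y}_i=\mathcal{D}(\bm{X}_i)$ of dimension $d:=L+N$. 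The variance parameter $\sigma_X$ becomes $\|\sum_i \mathbb{E}\bm{Y}_i^2\|^{1/2}$ and the Orlicz bound transfers verbatim as $\|\bm{Y}_i\|_{\psi_\alpha}=\|\bm{X}_i\|_{\psi_\alpha}\le U_\alpha$.

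Second, I would truncate. Fix a radius $R$ (to be optimized, of order $U_\alpha\log^{1/\alpha}(MU_\alpha^2/\sigma_X^2)$) and split $\bm{Y}_i=\bm{Y}_i\mathbf{1}\{\|\bm{Y}_i\|\le R\}+\bm{Y}_i\mathbf{1}\{\|\bm{Y}_i\|>R\}$, recentering each piece to be mean zero. For the bounded part, the standard cumulant estimate $\log\mathbb{E}\,e^{\theta\bm{Y}_i'}\preceq g(\theta)\,\mathbb{E}(\bm{Y}_i')^2$ with $g(\theta)=(e^{\theta R}-1-\theta R)/R^2$, fed through Lieb concavity and the matrix Laplace transform, yields $\mathbb{P}[\lambda_{\max}\ge u]\le d\exp(-u^2/(2\sigma_X^2+2Ru/3))$. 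For the tail part, the Orlicz bound gives $\mathbb{P}[\|\bm{Y}_i\|>R]\le 2\exp(-(R/U_\alpha)^\alpha)$, so a union bound over the $M$ summands removes all truncated terms with probability at least $1-2M\exp(-(R/U_\alpha)^\alpha)$; the recentering bias $\|\sum_i\mathbb{E}(\bm{Y}_i\mathbf{1}\{\|\bm{Y}_i\|>R\})\|$ is controlled by integrating the subexponential tail and is negligible for the chosen $R$.

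Third, I would optimize. Choosing $u=\max\{\sigma_X\sqrt{t+\log d},\,U_\alpha\log^{1/\alpha}(MU_\alpha^2/\sigma_X^2)(t+\log d)\}$ with $R$ as above balances the bounded Bernstein bound against the truncation-failure probability: in the small-deviation regime $u\le\sigma_X^2/R$ the sub-Gaussian term $\sigma_X\sqrt{t+\log d}$ dominates, while in the large-deviation regime the linear term $Ru$ governs and produces the second branch of the $\max$. Setting the exponent of the bounded-part bound equal to $-(t+\log d)$ and absorbing the dimensional factor $d$ and numerical constants into $C$ then yields the stated inequality with probability $1-e^{-t}$.

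The main obstacle I anticipate is the interplay between the truncation radius $R$ and the two deviation regimes: one must choose $R$ so that simultaneously (i) the removed-tail union bound over all $M$ matrices is at most $e^{-t}$, (ii) the recentering bias stays below the target $u$, and (iii) the bounded Bernstein inequality still delivers the sub-Gaussian term with the true variance proxy $\sigma_X$ rather than a degraded one. The delicate book-keeping is verifying that the $\psi_\alpha$ Orlicz norm supplies moment estimates $\mathbb{E}\|\bm{Y}_i\|^p\lesssim (p!)^{1/\alpha}U_\alpha^p$ strong enough to justify the cumulant bound, while the logarithmic factor $\log^{1/\alpha}(MU_\alpha^2/\sigma_X^2)$ emerges precisely from solving $M\exp(-(R/U_\alpha)^\alpha)\approx\sigma_X^2/(RU_\alpha)$ for $R$.
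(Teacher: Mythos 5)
The first thing to note is that the paper does not prove Theorem~\ref{thm:matbern} at all: it is imported by citation (attributed to \citealp{tropp2012user}; the Orlicz-norm form stated here is the unbounded, Koltchinskii-type variant that \citealp{ahmed2013compressive} also quote without proof). So there is no in-paper argument to compare yours against, and your proposal must be judged on whether it would actually deliver the stated bound.

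Your architecture (Hermitian dilation, truncation, Lieb/Laplace-transform Bernstein for the bounded part) is the standard route, and the dilation step and the bounded-part machinery are correct. The genuine gap is the balancing step, and it is not mere bookkeeping --- as written it fails quantitatively, in both directions. With your radius $R \asymp U_\alpha\log^{1/\alpha}(M U_\alpha^2/\sigma_X^2)$, the union-bound failure probability is $2M e^{-(R/U_\alpha)^{\alpha}} \asymp \sigma_X^2/U_\alpha^2$, a fixed quantity that does not decay in $t$ and can even exceed one (since $\sigma_X^2$ may be of order $M U_\alpha^2$); moreover the recentering bias obeys only $\bigl\|\sum_i \mathbb{E}\bigl[\bm{Y}_i\mathbf{1}\{\|\bm{Y}_i\|>R\}\bigr]\bigr\| \lesssim M(R+U_\alpha)e^{-(R/U_\alpha)^{\alpha}} \asymp (R+U_\alpha)\,\sigma_X^2/U_\alpha^2$, which in the regime $\sigma_X^2 \asymp M U_\alpha^2$ is of order $M U_\alpha$ and dwarfs the target $\sigma_X\sqrt{t+\log(L+N)} \asymp U_\alpha\sqrt{M(t+\log(L+N))}$ (a concrete counterexample to "the bias is negligible for the chosen $R$" is $M$ i.i.d.\ centered exponentials). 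Conversely, the radius that does make the union bound and the bias work uniformly in $t$, namely $R \gtrsim U_\alpha(t+\log 2M)^{1/\alpha}$, turns the linear Bernstein term into $U_\alpha(t+\log M)^{1/\alpha}(t+\log(L+N))$, which is strictly weaker than the stated $U_\alpha\log^{1/\alpha}(M U_\alpha^2/\sigma_X^2)\,(t+\log(L+N))$: it carries a spurious $\log M$ and, worse, a $t^{1+1/\alpha}$ dependence where the theorem claims linear $t$ (compare with scalar Bernstein for subexponentials, which gives $\sigma\sqrt{t}+U_1 t$, not $\sigma\sqrt{t}+U_1 t^2$). The missing idea is precisely how to get a $t$-\emph{independent} logarithmic factor: the exceedance part cannot be handled by forcing every summand below $R$ (union bound) nor by the triangle inequality (your bias estimate); it must itself be treated as a sum of independent zero-mean matrices that concentrates, which in the literature is done either through Bernstein-type moment conditions fed directly into the matrix MGF (the subexponential case of \citealp{tropp2012user}), or through Adamczak-type concentration of $\|\sum_i\bm{X}_i\|$ about its expectation with a penalty governed by $\|\max_i\|\bm{X}_i\|\|_{\psi_\alpha}$, combined with multi-level (peeling) truncation. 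You flagged this tension yourself as the "main obstacle," but the resolution you propose --- solving $M\exp(-(R/U_\alpha)^\alpha)\approx \sigma_X^2/(R U_\alpha)$ for $R$ --- does not supply that ingredient, so the proposal as it stands proves only a weaker statement.
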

where Orlicz-$\alpha$ norm $\ensuremath{\left|\left| {X} \right|\right|}\xspace_{\psi_\alpha}$ is defined as
\begin{gather}
        \ensuremath{\left|\left| {X} \right|\right|}\xspace_{\psi_\alpha} = \inf\left\{ y > 0 | \ensuremath{\mathbb{E}\left[{e^{\ensuremath{\left|\left| {X} \right|\right|}\xspace^{\alpha}/y^{\alpha}}}\right]}\xspace \leq 2 \right\}. \label{eqn:onormdef}
\end{gather}

In particular we will use the matrix Bernstein inequality with the Orlicz-1 and Orlicz-2 norms, since subgaussian and subexponential random variables have bounded Orlicz-2 and -1 norms, respectively. To calculate these norms, we find the following lemmas from~\citep{tropp2012user,ahmed2013compressive} useful:

\begin{lemma}[Lemma 5.14,~\citealp{tropp2012user}]
	\label{lem:onorm1}
	A random variable $X$ is subgaussian iff $X^2$ is subexponential. Furthermore,
	\begin{gather*}
		\ensuremath{\left|\left| {X} \right|\right|}\xspace_{\psi_2}^2 \leq \ensuremath{\left|\left| {X^2} \right|\right|}\xspace_{\psi_1} \leq 2\ensuremath{\left|\left| {X} \right|\right|}\xspace_{\psi_2}^2. \nonumber 
	\end{gather*}
\end{lemma}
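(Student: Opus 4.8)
The plan is to reduce both assertions to a single exact identity between the two Orlicz norms, namely $\|X\|_{\psi_2}^2 = \|X^2\|_{\psi_1}$, from which the stated two-sided bound follows immediately (with the upper inequality slack by the harmless factor of two). Working directly from the definition in~\eqref{eqn:onormdef}, I would write $\|X\|_{\psi_2} = \inf\{y>0 : \mathbb{E}[e^{|X|^2/y^2}] \le 2\}$ and, using that $X^2 \ge 0$ so $|X^2| = X^2$, $\|X^2\|_{\psi_1} = \inf\{z>0 : \mathbb{E}[e^{X^2/z}] \le 2\}$. The crux is that the two defining exponential moments are literally the same expression under the reparametrization $z = y^2$: the substitution $z=y^2$ carries the $\psi_1$ constraint $\mathbb{E}[e^{X^2/z}]\le 2$ onto the $\psi_2$ constraint $\mathbb{E}[e^{X^2/y^2}]\le 2$.

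From this I would establish the two matching inclusions of admissible parameter sets. If $y$ satisfies the $\psi_2$ constraint then $z=y^2$ satisfies the $\psi_1$ constraint, yielding $\|X^2\|_{\psi_1} \le \|X\|_{\psi_2}^2$; conversely, if $z$ satisfies the $\psi_1$ constraint then $y=\sqrt{z}$ satisfies the $\psi_2$ constraint, yielding $\|X\|_{\psi_2}^2 \le \|X^2\|_{\psi_1}$. Together these give equality, so the claimed chain $\|X\|_{\psi_2}^2 \le \|X^2\|_{\psi_1} \le 2\|X\|_{\psi_2}^2$ holds a fortiori. The equivalence that $X$ is subgaussian if and only if $X^2$ is subexponential is then immediate, since subgaussianity of $X$ is by definition the finiteness of $\|X\|_{\psi_2}$ and subexponentiality of $X^2$ the finiteness of $\|X^2\|_{\psi_1}$, and a nonnegative quantity is finite precisely when its square root is.

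The one genuine technical point, and the step I expect to demand the most care, is checking that the map $g(y):=\mathbb{E}[e^{X^2/y^2}]$ is regular enough for the two infima to be compared cleanly. I would note that $g$ is monotonically decreasing in $y$, tends to $\mathbb{E}[1]=1$ as $y\to\infty$ and to $+\infty$ as $y\to 0^+$ for non-degenerate $X$, and is continuous on the region where it is finite by dominated (resp.\ monotone) convergence. This guarantees a single crossing of the level $2$, so the infima in both definitions are attained at corresponding points $y^\ast$ and $z^\ast=(y^\ast)^2$, which is what makes the reparametrization argument rigorous rather than merely formal.

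Finally, I would observe that the factor of two in the upper bound becomes sharp under the equivalent moment formulation $\|X\|_{\psi_2} \asymp \sup_{p\ge 1} p^{-1/2}(\mathbb{E}|X|^p)^{1/p}$ and $\|X^2\|_{\psi_1} \asymp \sup_{p\ge 1} p^{-1}(\mathbb{E}|X^2|^p)^{1/p}$: the substitution $q=2p$ gives $\|X^2\|_{\psi_1} = 2\sup_{q\ge 2} q^{-1}(\mathbb{E}|X|^q)^{2/q}$, and comparing this restricted supremum with the unrestricted one defining $\|X\|_{\psi_2}^2$ reproduces the two-sided bound exactly as stated. Either route closes the proof; I would present the exponential-moment version as primary, since it matches the definition in~\eqref{eqn:onormdef} used throughout, and invoke the moment version only to explain why the constant is written as $2$.
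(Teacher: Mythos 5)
Your proof is correct, but note that the paper itself never proves this lemma at all: it is imported verbatim as a citation (``Lemma 5.14'' of the cited reference), whose own argument is phrased in terms of the moment characterization $\|X\|_{\psi_2}\asymp\sup_{p\ge1}p^{-1/2}(\mathbb{E}|X|^p)^{1/p}$, under which the constant $2$ is genuinely needed. What you do differently is derive the statement directly from the paper's working definition~\eqref{eqn:onormdef}, and your central observation is sharper than the lemma itself: with the exponential-moment definition, the admissible sets $S_2=\{y>0:\mathbb{E}[e^{|X|^2/y^2}]\le2\}$ and $S_1=\{z>0:\mathbb{E}[e^{X^2/z}]\le2\}$ satisfy $S_1=\{y^2:y\in S_2\}$, so $\|X^2\|_{\psi_1}=\|X\|_{\psi_2}^2$ \emph{exactly}, and the stated two-sided bound holds with room to spare; the factor $2$ is an artifact of the moment-based definition in the original source, exactly as your last paragraph explains. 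This is a useful reconciliation, since it shows the lemma remains true (indeed trivially) under the paper's definition, so the downstream uses in Lemmas~\ref{lem:lemma1} and~\ref{lem:lemma2} are unaffected by the definitional mismatch. One small economy: your third paragraph on regularity and attainment of the infima is unnecessary. Comparing the two infima needs no single-crossing or attainment argument, only the fact that $y\mapsto y^2$ is an increasing continuous bijection of $(0,\infty)$, which already gives $\inf S_1=(\inf S_2)^2$ (with both sides $+\infty$ when the sets are empty); dropping that paragraph loses nothing and avoids the slightly delicate claims about continuity of $y\mapsto\mathbb{E}[e^{X^2/y^2}]$ at the boundary of its finiteness region.
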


\begin{lemma}[Lemma 7,~\citealp{ahmed2013compressive}]
	\label{lem:onorm2}
	Let $X_1$ and $X_2$ be two subgaussian ranfom variables. Then the product $X_1X_2$ is a subexponential random variable with 
	\begin{gather*}
		\ensuremath{\left|\left| {X_1X_2} \right|\right|}\xspace_{\psi_1} \leq c\ensuremath{\left|\left| {X_1} \right|\right|}\xspace_{\psi_2}\ensuremath{\left|\left| {X_2} \right|\right|}\xspace_{\psi_2}.   \nonumber
	\end{gather*}
\end{lemma}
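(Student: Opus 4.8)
The plan is to prove this directly from the definition of the Orlicz norms in Equation~\eqref{eqn:onormdef}, combining the elementary AM--GM (Young) inequality with the Cauchy--Schwarz inequality. First I would normalize by setting $y_1 = \|X_1\|_{\psi_2}$ and $y_2 = \|X_2\|_{\psi_2}$; by definition of the Orlicz-$2$ norm this gives $\mathbb{E}[e^{|X_1|^2/y_1^2}] \leq 2$ and $\mathbb{E}[e^{|X_2|^2/y_2^2}] \leq 2$. (If either norm vanishes, the corresponding variable is zero almost surely and the claim is trivial, so assume both are strictly positive.) The goal is then to exhibit a value $y = c\, y_1 y_2$ for which $\mathbb{E}[e^{|X_1 X_2|/y}] \leq 2$, since by the definition of the Orlicz-$1$ norm this yields $\|X_1 X_2\|_{\psi_1} \leq c\, y_1 y_2$.

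The key pointwise step is Young's inequality applied to the normalized magnitudes,
\begin{gather*}
	\frac{|X_1|\,|X_2|}{y_1 y_2} \leq \frac{1}{2}\left(\frac{|X_1|^2}{y_1^2} + \frac{|X_2|^2}{y_2^2}\right),
\end{gather*}
which holds everywhere on the sample space. Exponentiating and factoring yields the pointwise bound $e^{|X_1 X_2|/(y_1 y_2)} \leq e^{|X_1|^2/(2y_1^2)}\, e^{|X_2|^2/(2y_2^2)}$, and applying Cauchy--Schwarz to the expectation of this product gives
\begin{gather*}
	\mathbb{E}\!\left[e^{|X_1 X_2|/(y_1 y_2)}\right] \leq \sqrt{\mathbb{E}\!\left[e^{|X_1|^2/y_1^2}\right]\mathbb{E}\!\left[e^{|X_2|^2/y_2^2}\right]} \leq \sqrt{2\cdot 2} = 2.
\end{gather*}
Here the squaring inside Cauchy--Schwarz exactly cancels the factor $1/2$ in each exponent and recovers the two normalized Orlicz-$2$ conditions. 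Thus the choice $y = y_1 y_2$ works, so in fact the constant $c = 1$ suffices and $\|X_1 X_2\|_{\psi_1} \leq \|X_1\|_{\psi_2}\|X_2\|_{\psi_2}$; finiteness of the left-hand side is exactly the claimed subexponentiality.

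This argument is short and presents no serious obstacle. The only points needing care are the degenerate case of a vanishing Orlicz norm and verifying that the exponents surviving Cauchy--Schwarz match the normalization exactly --- it is precisely the factor $1/2$ supplied by Young's inequality that makes the squaring land on $\mathbb{E}[e^{|X_i|^2/y_i^2}]$ rather than on a larger quantity, which is what keeps the constant at $c = 1$. The same template (Young's inequality plus Hölder/Cauchy--Schwarz) would extend the statement to products of more than two subgaussian factors or to general Orlicz-$\alpha$ exponents if needed elsewhere in the analysis.
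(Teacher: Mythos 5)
Your proof is correct, and there is in fact no internal proof to compare it to: the paper imports this statement verbatim (Lemma 7,~\citealp{ahmed2013compressive}) and uses it as a black box, just as it does with Lemma~\ref{lem:onorm1}. Your normalize--Young--Cauchy--Schwarz argument is the standard self-contained derivation, and it buys two things the bare citation does not. First, under the paper's normalization of the Orlicz norm in Equation~\eqref{eqn:onormdef} it yields the explicit constant $c=1$. Second, it makes clear that no independence between $X_1$ and $X_2$ is needed, since Cauchy--Schwarz absorbs arbitrary dependence; this is precisely the regime in which the paper invokes the lemma (e.g.\ in the proof of Lemma~\ref{lem:lemma2}, where both factors, $|\langle \bm{G},\bm{A}_n\rangle|$ and $\|\bm{A}_n\|_F$, are functions of the same random matrix $\bm{A}_n$). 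One line is worth adding for completeness: you use that the infimum defining $\|X_i\|_{\psi_2}$ is attained, i.e.\ that $\mathbb{E}\bigl[e^{|X_i|^2/y_i^2}\bigr]\leq 2$ holds at $y_i=\|X_i\|_{\psi_2}$ itself rather than only for every $y>y_i$; this is true but not purely definitional --- take admissible $y\downarrow y_i$ and apply monotone convergence. With that remark, together with your handling of the degenerate case $\|X_i\|_{\psi_2}=0$, the argument is complete.
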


Lemma~\ref{lem:onorm1} relates the Orlicz-1 and -2 norms for a random variable and it's square. Lemma~\ref{lem:onorm2} allows us to factor an Orlicz-1 norm of a sub-exponential random variable as the product of two subgaussian random variables. Finally we find useful the following calculation for the Orlicz-1 norm of the norm of a random Gaussian vector \ensuremath{\bm{{z}}_n}\xspace with \emph{i.i.d.} zero-mean and variance $\sigma^2$ entries: 

\begin{eqnarray}
	\ensuremath{\left|\left| {\ensuremath{\left|\left| {\ensuremath{\bm{{z}}_n}\xspace} \right|\right|}\xspace_2^2} \right|\right|}\xspace_{\psi_1} & = & \inf \left\{y : \ensuremath{\mathbb{E}\left[{e^{\ensuremath{\left|\left| {\ensuremath{\bm{{z}}_n}\xspace} \right|\right|}\xspace_2^2/y}}\right]}\xspace \leq 2 \right\} \nonumber \\
	& = & \inf \left\{y : \frac{1}{\sqrt{2\pi\sigma}} \int_\ensuremath{\mathbb{R}}\xspace e^{-z_n^2(1/2\sigma^2 - 1/y)} dz_n \leq 2^{\frac{1}{\ensuremath{M}\xspace}} \right\} \nonumber \\
	& = & \frac{2\sigma^2}{1-4^{-\frac{1}{\ensuremath{M}\xspace}}}. \label{eqn:onormzvec}
\end{eqnarray}

\subsubsection{\texorpdfstring{Bound on $\ensuremath{\left|\left| {\kappa\ensuremath{\mathcal{P}_T}\xspace\ensuremath{\mathcal{{A}}}\xspace^{\ast}_k\ensuremath{\mathcal{{A}}}\xspace_k\ensuremath{\mathcal{P}_T}\xspace - \ensuremath{\mathcal{P}_T}\xspace} \right|\right|}\xspace$}{Proof of Lemma~\ref{lem:lemma1}}}

\begin{lemma}
	\label{lem:lemma1}
	Let $\ensuremath{\mathcal{P}_T}\xspace$ be defined as in Equation~\eqref{eqn:projdef} and $\ensuremath{\mathcal{{A}}}\xspace_k$ be the restricted measurement operator as defined in Equation~\eqref{eqn:linopdef}. Then if the number of nodes scale as 
	\begin{gather*}
		\ensuremath{M}\xspace \geq c\beta{\kappa}\ensuremath{R}\xspace\left(\ensuremath{N}\xspace + \mu_0^2\ensuremath{L}\xspace\right)\log^2(\ensuremath{L}\xspace\ensuremath{N}\xspace), \nonumber
	\end{gather*}
	for a constant $\beta > 1$, then with probability greater then $1-O({\kappa}(\ensuremath{L}\xspace\ensuremath{N}\xspace)^{-\beta}$, we have
	\begin{gather*}
		\max_{k\in[1,\dots,{\kappa}]}\ensuremath{\left|\left| {{\kappa}\ensuremath{\mathcal{P}_T}\xspace\ensuremath{\mathcal{{A}}^{\ast}}\xspace\ensuremath{\mathcal{{A}}}\xspace\ensuremath{\mathcal{P}_T}\xspace - \ensuremath{\mathcal{P}_T}\xspace} \right|\right|}\xspace \leq \frac{1}{2}. \nonumber
	\end{gather*}
\end{lemma}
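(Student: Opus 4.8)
The plan is to view $\kappa\mathcal{P}_T\mathcal{A}_k^\ast\mathcal{A}_k\mathcal{P}_T-\mathcal{P}_T$ as a centered sum of independent random operators on the space of $L\times N$ matrices and to bound its operator norm with the Orlicz-norm matrix Bernstein inequality (Theorem~\ref{thm:matbern}). Let $\Omega_k$ index the $k$th golfing batch, with $|\Omega_k|=M/\kappa$, and let $\bm A_i=\tilde{\bm z}_i\bm f_{\omega_i}^H$ be the rank-one measurement matrix at node $i$, an outer product of a Gaussian feed-forward vector $\tilde{\bm z}_i\in\mathbb{C}^L$ with a unit-circle Fourier vector $\bm f_{\omega_i}\in\mathbb{C}^N$ (the unitary $\bm U$ is irrelevant to the norms in question). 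Then $\mathcal{A}_k^\ast\mathcal{A}_k$ acts by $W\mapsto\sum_{i\in\Omega_k}\langle\bm A_i,W\rangle\bm A_i$, and since the measurements are normalized so that $\mathbb{E}[\mathcal{A}^\ast\mathcal{A}]$ is the identity on matrix space, $\kappa\,\mathbb{E}[\mathcal{P}_T\mathcal{A}_k^\ast\mathcal{A}_k\mathcal{P}_T]=\mathcal{P}_T$. Writing $\mathcal{X}_i=\kappa\,\mathcal{P}_T\big(\langle\bm A_i,\cdot\rangle\bm A_i\big)\mathcal{P}_T-|\Omega_k|^{-1}\mathcal{P}_T$ gives $\mathbb{E}[\mathcal{X}_i]=0$ and $\sum_{i\in\Omega_k}\mathcal{X}_i=\kappa\mathcal{P}_T\mathcal{A}_k^\ast\mathcal{A}_k\mathcal{P}_T-\mathcal{P}_T$; these self-adjoint operators (or their matrix representations on vectorized inputs) are the objects to which Bernstein is applied, first for a fixed $k$ and then union-bounded over the $\kappa$ batches to yield the maximum over $k$ and the $O(\kappa(LN)^{-\beta})$ failure probability.

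Each summand is controlled by $\mathcal{P}_T\bm A_i$, which by \eqref{eqn:projdef} decomposes into a piece whose column space lies in $\bm Q$ and a piece whose row space lies in $\bm V$: $\mathcal{P}_T\bm A_i=\bm Q\bm Q^\ast\tilde{\bm z}_i\bm f_{\omega_i}^H+\tilde{\bm z}_i(\bm V\bm V^\ast\bm f_{\omega_i})^H-\bm Q\bm Q^\ast\tilde{\bm z}_i(\bm V\bm V^\ast\bm f_{\omega_i})^H$. The crucial quantities are therefore $\|\bm Q^\ast\tilde{\bm z}_i\|_2$, a Gaussian projected onto the $R$-dimensional range of $\bm Q$, and $\|\bm V^\ast\bm f_{\omega_i}\|_2^2\le R\mu_0^2$, which is bounded by the coherence definition \eqref{eqn:lowrankco}. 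The unboundedness of the Gaussian $\tilde{\bm z}_i$ is exactly why a uniform-bound Bernstein is unavailable and the Orlicz-$1$ form must be used.

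For the variance I would expand $\sum_{i\in\Omega_k}\mathbb{E}[\mathcal{X}_i\mathcal{X}_i^\ast]$; since each $\mathcal{P}_T(\langle\bm A_i,\cdot\rangle\bm A_i)\mathcal{P}_T$ is rank one in operator space with weight $\|\mathcal{P}_T\bm A_i\|_F^2$, the per-measurement second moment is governed by $\mathbb{E}\|\mathcal{P}_T\bm A_i\|_F^2$, whose two dominant pieces are $\mathbb{E}\|\bm Q^\ast\tilde{\bm z}_i\|_2^2\,\|\bm f_{\omega_i}\|_2^2\sim (R/M)N$ (the full Fourier norm $\|\bm f_\omega\|_2^2=N$ paired with the Gaussian projected onto $\bm Q$) and $\mathbb{E}\|\tilde{\bm z}_i\|_2^2\,\|\bm V^\ast\bm f_{\omega_i}\|_2^2\le (L/M)R\mu_0^2$ (the full Gaussian norm $\mathbb{E}\|\tilde{\bm z}_i\|_2^2\sim L/M$ paired with the coherence bound). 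Accumulating the $M/\kappa$ terms with the $\kappa^2$ from the normalization yields $\sigma_X^2\lesssim \kappa R(N+\mu_0^2 L)/M$, and this is the step that both produces the $R(N+\mu_0^2 L)$ scaling and explains the asymmetric coherence: $\mu_0$ attaches only to the $L$ (i.e.\ $\bm Q$) term because it is the right singular vectors $\bm V$ that meet the Fourier vectors, while the left space sees only unstructured Gaussians. For the Orlicz norm I would combine Lemma~\ref{lem:onorm1}, Lemma~\ref{lem:onorm2}, and the explicit Gaussian computation \eqref{eqn:onormzvec} to conclude that $\|\mathcal{X}_i\|_{\psi_1}$ is subexponential of the same order, $U_1\lesssim \kappa R(N+\mu_0^2 L)/M$.

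Finally I would substitute $\sigma_X$ and $U_1$ into Theorem~\ref{thm:matbern} with $t=\beta\log(LN)$. The subgaussian branch $\sigma_X\sqrt{t+\log(L+N)}\le 1/2$ costs one logarithmic factor, while the subexponential branch $U_1\log(MU_1^2/\sigma_X^2)(t+\log(L+N))\le 1/2$ is the dominant one and contributes the second, giving the stated requirement $M\gtrsim\beta\kappa R(N+\mu_0^2 L)\log^2(LN)$ and a per-batch failure probability $(LN)^{-\beta}$. The main obstacle I expect is the variance computation: one must carry $\mathcal{P}_T$ through the fourth-order Gaussian cross terms of the outer products, check that the mixed $\bm Q$--$\bm V$ pieces are lower order, and confirm the asymmetric appearance of $\mu_0$; threading the genuinely unbounded summands through the Orlicz branch of Bernstein so that both branches land at $1/2$ under a single scaling of $M$ is where the constants $c$ and $\beta$ are pinned down.
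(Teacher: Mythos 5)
You follow essentially the same route as the paper's own proof: you center $\kappa\mathcal{P}_T\mathcal{A}_k^{\ast}\mathcal{A}_k\mathcal{P}_T-\mathcal{P}_T$ as a sum of independent rank-one operators $\langle\bm{A}_n,\cdot\rangle\,\mathcal{P}_T(\bm{A}_n)$, split $\|\mathcal{P}_T(\bm{A}_n)\|_F^2$ into the $N\|\bm{Q}^{\ast}\bm{z}_n\|_2^2$ and $\|\bm{z}_n\|_2^2\|\bm{V}^{\ast}\bm{f}_n\|_2^2\le R\mu_0^2\|\bm{z}_n\|_2^2$ pieces, obtain the same variance $\sigma_X^2\lesssim\kappa R(N+\mu_0^2L)/M$ and subexponential parameter via the Orlicz-$1$ machinery (Lemmas on $\psi_1/\psi_2$ norms and the Gaussian computation), apply the matrix Bernstein inequality with $t=\beta\log(LN)$, and union-bound over the $\kappa$ batches. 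This matches the paper's argument step for step, including the explanation of why $\mu_0$ attaches only to the $L$ term, so the proposal is correct and takes the same approach.
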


\begin{proof}

Lemma~\ref{lem:lemma1} bounds the operator norm 
\[ \ensuremath{\left|\left| {\kappa\ensuremath{\mathcal{P}_T}\xspace\ensuremath{\mathcal{{A}}}\xspace^{\ast}_k\ensuremath{\mathcal{{A}}}\xspace_k\ensuremath{\mathcal{P}_T}\xspace - \ensuremath{\mathcal{P}_T}\xspace} \right|\right|}\xspace. \]
Since $\ensuremath{\mathbb{E}\left[{\ensuremath{\mathcal{{A}}}\xspace_k^{\ast}\ensuremath{\mathcal{{A}}}\xspace_k}\right]}\xspace = \frac{1}{{\kappa}}\mathcal{I}$, this norm is equivalent to
\begin{eqnarray*}
	{\kappa}\ensuremath{\mathcal{P}_T}\xspace\ensuremath{\mathcal{{A}}^{\ast}}\xspace_k\ensuremath{\mathcal{{A}}}\xspace_k\ensuremath{\mathcal{P}_T}\xspace - \ensuremath{\mathcal{P}_T}\xspace & = & \kappa\ensuremath{\mathcal{P}_T}\xspace\ensuremath{\mathcal{{A}}^{\ast}}\xspace_k\ensuremath{\mathcal{{A}}}\xspace_k\ensuremath{\mathcal{P}_T}\xspace - \ensuremath{\mathbb{E}\left[{{\kappa}\ensuremath{\mathcal{P}_T}\xspace\ensuremath{\mathcal{{A}}^{\ast}}\xspace_k\ensuremath{\mathcal{{A}}}\xspace_k\ensuremath{\mathcal{P}_T}\xspace}\right]}\xspace \nonumber \\
	& = & \kappa \sum_{n\in\Gamma_k}\left(\ensuremath{\mathcal{P}_T}\xspace(\ensuremath{\bm{{A}}_n}\xspace)\otimes\ensuremath{\mathcal{P}_T}\xspace(\ensuremath{\bm{{A}}_n}\xspace) - \ensuremath{\mathbb{E}\left[{\ensuremath{\mathcal{P}_T}\xspace(\ensuremath{\bm{{A}}_n}\xspace)\otimes\ensuremath{\mathcal{P}_T}\xspace(\ensuremath{\bm{{A}}_n}\xspace)}\right]}\xspace \right). \nonumber
\end{eqnarray*}

We can also define here $\mathcal{L}_n(\bm{C}) = \ensuremath{\langle {\ensuremath{\mathcal{P}_T}\xspace(\ensuremath{\bm{{A}}_n}\xspace)},{\bm{C}}\rangle}\xspace\ensuremath{\mathcal{P}_T}\xspace(\ensuremath{\bm{{A}}_n}\xspace)$ which has $\ensuremath{\left|\left| {\mathcal{L}_n} \right|\right|}\xspace = \ensuremath{\left|\left| {\ensuremath{\mathcal{P}_T}\xspace(\ensuremath{\bm{{A}}_n}\xspace)} \right|\right|}\xspace_F^2$ which gives us
\begin{gather*}
	{\kappa}\ensuremath{\mathcal{P}_T}\xspace\ensuremath{\mathcal{{A}}}\xspace^{\ast}_k\ensuremath{\mathcal{{A}}}\xspace_k\ensuremath{\mathcal{P}_T}\xspace - \ensuremath{\mathbb{E}\left[{{\kappa}\ensuremath{\mathcal{P}_T}\xspace\ensuremath{\mathcal{{A}}^{\ast}}\xspace_k\ensuremath{\mathcal{{A}}}\xspace_k\ensuremath{\mathcal{P}_T}\xspace}\right]}\xspace =  {\kappa}\sum_{n\in\Gamma_k}(\mathcal{L}_n - \ensuremath{\mathbb{E}\left[{\mathcal{L}_n}\right]}\xspace). \nonumber 
\end{gather*}

To calculate the variance, we can use the symmetry of $\mathcal{L}_n$ to only calculate
\begin{gather*}
	{\kappa}^2\ensuremath{\left|\left| {\sum_{n\in\Gamma_k}\ensuremath{\mathbb{E}\left[{\mathcal{L}_n^2}\right]}\xspace - \ensuremath{\mathbb{E}\left[{\mathcal{L}_n}\right]}\xspace^2} \right|\right|}\xspace \leq {\kappa}^2\ensuremath{\left|\left| {\sum_{n\in\Gamma_k} \ensuremath{\mathbb{E}\left[{\mathcal{L}_n^2}\right]}\xspace} \right|\right|}\xspace  =  {\kappa}^2\ensuremath{\left|\left| {\ensuremath{\mathbb{E}\left[{\sum_{n\in\Gamma_k} \|\ensuremath{\mathcal{P}_T}\xspace(\ensuremath{\bm{{A}}_n}\xspace)\|_F^2\mathcal{L}_n }\right]}\xspace} \right|\right|}\xspace. \nonumber
\end{gather*}
We now need to bound $\|\ensuremath{\mathcal{P}_T}\xspace(\ensuremath{\bm{{A}}_n}\xspace)\|_F^2$, which can be done by the following:
\begin{eqnarray*}
	\ensuremath{\left|\left| {\ensuremath{\mathcal{P}_T}\xspace(\ensuremath{\bm{{A}}_n}\xspace)} \right|\right|}\xspace_F^2 & = & \ensuremath{\langle {\ensuremath{\mathcal{P}_T}\xspace(\ensuremath{\bm{{A}}_n}\xspace)},{\ensuremath{\bm{{A}}_n}\xspace}\rangle}\xspace \nonumber \\
	& = & \ensuremath{\langle {\ensuremath{\bm{Q}}\xspace\ensuremath{\bm{Q}^{\ast}}\xspace\ensuremath{\bm{{z}}_n}\xspace\ensuremath{\bm{{f}}^{\ast}_n}\xspace},{\ensuremath{\bm{{z}}_n}\xspace\ensuremath{\bm{{f}}^{\ast}_n}\xspace }\rangle}\xspace + \ensuremath{\langle {\ensuremath{\bm{{z}}_n}\xspace\ensuremath{\bm{{f}}^{\ast}_n}\xspace\ensuremath{\bm{V}}\xspace\ensuremath{\bm{V}^{\ast}}\xspace},{\ensuremath{\bm{{z}}_n}\xspace\ensuremath{\bm{{f}}^{\ast}_n}\xspace}\rangle}\xspace - \ensuremath{\langle {\ensuremath{\bm{Q}}\xspace\ensuremath{\bm{Q}^{\ast}}\xspace\ensuremath{\bm{{z}}_n}\xspace\ensuremath{\bm{{f}}^{\ast}_n}\xspace\ensuremath{\bm{V}}\xspace\ensuremath{\bm{V}^{\ast}}\xspace},{\ensuremath{\bm{{z}}_n}\xspace\ensuremath{\bm{{f}}^{\ast}_n}\xspace}\rangle}\xspace \nonumber \\
	& = & \|\bm{f}_n\|_2^2\|\ensuremath{\bm{Q}^{\ast}}\xspace\ensuremath{\bm{{z}}_n}\xspace\|_2^2 + \|\ensuremath{\bm{{z}}_n}\xspace\|_2^2\|\ensuremath{\bm{V}^{\ast}}\xspace\ensuremath{\bm{{f}}_n}\xspace\|_2^2 - \|\ensuremath{\bm{Q}^{\ast}}\xspace\ensuremath{\bm{{z}}_n}\xspace\|_2^2\|\ensuremath{\bm{V}^{\ast}}\xspace\ensuremath{\bm{{f}}_n}\xspace\|_2^2 \nonumber \\
	& \leq & N\|\ensuremath{\bm{Q}^{\ast}}\xspace\ensuremath{\bm{{z}}_n}\xspace\|_2^2 + \|\ensuremath{\bm{{z}}_n}\xspace\|_2^2\|\ensuremath{\bm{V}^{\ast}}\xspace\ensuremath{\bm{{f}}_n}\xspace\|_2^2.  \nonumber
\end{eqnarray*}

Using this calculation we can write
\begin{eqnarray*}
	\ensuremath{\left|\left| {\ensuremath{\mathbb{E}\left[{\sum_{n\in\Gamma_k} \ensuremath{\left|\left| {\ensuremath{\mathcal{P}_T\left( \ensuremath{\bm{{A}}_n}\xspace \right)}\xspace} \right|\right|}\xspace_F^2\mathcal{L}_n}\right]}\xspace} \right|\right|}\xspace & \leq & \ensuremath{\left|\left| {\sum_{n\in\Gamma_k}\ensuremath{\mathbb{E}\left[{ \left(N\|\ensuremath{\bm{Q}^{\ast}}\xspace\ensuremath{\bm{{z}}_n}\xspace\|_2^2 + \|\ensuremath{\bm{{z}}_n}\xspace\|_2^2\|\ensuremath{\bm{V}^{\ast}}\xspace\ensuremath{\bm{{f}}_n}\xspace\|_2^2 \right)\mathcal{L}_n }\right]}\xspace} \right|\right|}\xspace \nonumber \\
	& \leq & N\left|\left| \sum_{n\in\Gamma_k}\mathbb{E}\left[\|\bm{Q}^{\ast}\bm{{z}}_n\|_2^2\mathcal{L}_n\right] \right|\right| \nonumber \\
        & & \qquad\qquad\qquad + \sup\|\bm{V}^{\ast}\bm{{f}}_n\|_{\infty}\left|\left| \sum_{n\in\Gamma_k}\mathbb{E}\left[{ \|\bm{{z}}_n\|_2^2\mathcal{L}_n }\right] \right|\right| \nonumber \\
	& \leq & \ensuremath{N}\xspace \ensuremath{\left|\left| {\sum_{n\in\Gamma_k}\ensuremath{\mathbb{E}\left[{\|\ensuremath{\bm{Q}^{\ast}}\xspace\ensuremath{\bm{{z}}_n}\xspace\|_2^2\mathcal{L}_n}\right]}\xspace} \right|\right|}\xspace + \ensuremath{R}\xspace\mu_0^2\ensuremath{\left|\left| {\sum_{n\in\Gamma_k}\ensuremath{\mathbb{E}\left[{\|\ensuremath{\bm{{z}}_n}\xspace\|_2^2\mathcal{L}_n }\right]}\xspace} \right|\right|}\xspace \nonumber
\end{eqnarray*}

We now need to bound these two quantities. First we look to bound the first quantity
\begin{eqnarray*}
	\ensuremath{\left|\left| {\sum_{n\in\Gamma_k}\ensuremath{\mathbb{E}\left[{\|\ensuremath{\bm{Q}^{\ast}}\xspace\ensuremath{\bm{{z}}_n}\xspace\|_2^2(\ensuremath{\mathcal{P}_T}\xspace(\ensuremath{\bm{{A}}_n}\xspace)\otimes\ensuremath{\mathcal{P}_T}\xspace(\ensuremath{\bm{{A}}_n}\xspace))}\right]}\xspace} \right|\right|}\xspace & \leq & \|\ensuremath{\mathcal{P}_T}\xspace\|\ensuremath{\left|\left| {\ensuremath{\mathbb{E}\left[{\|\ensuremath{\bm{Q}^{\ast}}\xspace\ensuremath{\bm{{z}}_n}\xspace\|_2^2(\ensuremath{\bm{{A}}_n}\xspace\otimes\ensuremath{\bm{{A}}_n}\xspace)}\right]}\xspace} \right|\right|}\xspace\|\ensuremath{\mathcal{P}_T}\xspace\| \nonumber \\
	& \leq & \ensuremath{\left|\left| {\ensuremath{\mathbb{E}\left[{\|\ensuremath{\bm{Q}^{\ast}}\xspace\ensuremath{\bm{{z}}_n}\xspace\|_2^2\{z_n[\alpha]z^{\ast}_n[\beta]\ensuremath{\bm{{f}}_n}\xspace\ensuremath{\bm{{f}}^{\ast}_n}\xspace\}_{\alpha,\beta} }\right]}\xspace} \right|\right|}\xspace. \nonumber
\end{eqnarray*}
Expanding, we have:
\begin{eqnarray*}
	\ast & = & \ensuremath{\left|\left| {\ensuremath{\mathbb{E}\left[{\left(\sum_{l=1}^L\|\bm{q}_l\|_2^2|z_n[l]|^2 + 2\sum_{l \neq m}\mbox{Re}(\langle\bm{q}_l,\bm{q}_m\rangle z_n[l]z^{\ast}_n[m])\right) z_n[\alpha]z_n^{\ast}[\beta]\bm{I}_\ensuremath{N}\xspace }\right]}\xspace_{\alpha,\beta}} \right|\right|}\xspace \nonumber \\
	& = & \ensuremath{\left|\left| {\left\{\frac{1}{\ensuremath{M}\xspace^2}\sum_{l=1}^L\|\bm{q}_l\|_2^2\bm{I}_\ensuremath{N}\xspace\delta_{\alpha=\beta} + \frac{2}{\ensuremath{M}\xspace^2}\ensuremath{\langle {\bm{q}_\alpha},{\bm{q}_\beta}\rangle}\xspace \bm{I}_\ensuremath{N}\xspace\delta_{\alpha \neq \beta} \right\}_{\alpha,\beta} } \right|\right|}\xspace \nonumber \\
	& = & \frac{1}{\ensuremath{M}\xspace^2}\left\|\left\{\|\ensuremath{\bm{Q}}\xspace\|_F^2\bm{I}_\ensuremath{N}\xspace\delta_{\alpha=\beta} + 2\ensuremath{\langle {\bm{q}_\alpha},{\bm{q}_\beta}\rangle}\xspace \bm{I}_\ensuremath{N}\xspace\delta_{\alpha \neq \beta} \right\}_{\alpha,\beta}\right\|, \nonumber
\end{eqnarray*}
giving us 
\begin{gather*}
	\ensuremath{\left|\left| {\sum_{n\in\Gamma_k}\ensuremath{\mathbb{E}\left[{\|\ensuremath{\bm{Q}^{\ast}}\xspace\ensuremath{\bm{{z}}_n}\xspace\|_2^2(\ensuremath{\mathcal{P}_T}\xspace(\ensuremath{\bm{{A}}_n}\xspace)\otimes\ensuremath{\mathcal{P}_T}\xspace(\ensuremath{\bm{{A}}_n}\xspace))}\right]}\xspace} \right|\right|}\xspace \leq \frac{1}{\ensuremath{M}\xspace{\kappa}} \|\ensuremath{\bm{Q}}\xspace\|_F^2 + \|\ensuremath{\bm{Q}}\xspace\ensuremath{\bm{Q}^{\ast}}\xspace\| \leq \frac{\ensuremath{R}\xspace+1}{\ensuremath{M}\xspace{\kappa}}. \nonumber
\end{gather*}

Similarly, for the second term we can take $\ensuremath{\bm{Q}}\xspace = \bm{I}_\ensuremath{L}\xspace$ to get 
\begin{gather*}
	\ensuremath{\left|\left| {\sum_{n\in\Gamma_k}\ensuremath{\mathbb{E}\left[{\|\ensuremath{\bm{{z}}_n}\xspace\|_2^2(\ensuremath{\mathcal{P}_T}\xspace(\ensuremath{\bm{{A}}_n}\xspace)\otimes\ensuremath{\mathcal{P}_T}\xspace(\ensuremath{\bm{{A}}_n}\xspace))}\right]}\xspace} \right|\right|}\xspace \leq \frac{1}{\ensuremath{M}\xspace{\kappa}} \|\bm{I}\|_F^2  = \frac{\ensuremath{L}\xspace}{\ensuremath{M}\xspace{\kappa}}.  \nonumber 
\end{gather*}

Putting the pieces together, we get 
\begin{gather*}
	\sigma_X^2 = {\kappa} \ensuremath{R}\xspace\frac{\ensuremath{N}\xspace+\mu_0^2\ensuremath{L}\xspace}{\ensuremath{M}\xspace}. \nonumber
\end{gather*}

To use the matrix Bernstein inequality, it now remains to bound the following Orlicz-1 norm: ${\kappa}\ensuremath{\left|\left| {\mathcal{L}_n - \ensuremath{\mathbb{E}\left[{\mathcal{L}_n}\right]}\xspace} \right|\right|}\xspace_{\psi_1}$. By the PSD quality of $\mathcal{L}_n$ and its expectation, 
\begin{gather*}
	\ensuremath{\left|\left| {\mathcal{L}_n - \ensuremath{\mathbb{E}\left[{\mathcal{L}_n}\right]}\xspace} \right|\right|}\xspace_{\psi_1} \leq \max\left\{\ensuremath{\left|\left| {\mathcal{L}_n} \right|\right|}\xspace_{\psi_1} - \ensuremath{\left|\left| {\ensuremath{\mathbb{E}\left[{\mathcal{L}_n}\right]}\xspace} \right|\right|}\xspace_{\psi_1} \right\}. \nonumber
\end{gather*}
The norm of $\ensuremath{\left|\left| {\ensuremath{\mathbb{E}\left[{\mathcal{L}_n}\right]}\xspace} \right|\right|}\xspace$ can be calculated via
\begin{gather*}
	\ensuremath{\left|\left| {\ensuremath{\mathbb{E}\left[{\mathcal{L}_n}\right]}\xspace} \right|\right|}\xspace = \ensuremath{\left|\left| {\ensuremath{\mathbb{E}\left[{\ensuremath{\mathcal{P}_T\left( \ensuremath{\bm{{A}}_n}\xspace \right)}\xspace}\right]}\xspace} \right|\right|}\xspace_F^2 = \ensuremath{\mathbb{E}\left[{\sum_m|z_n[m]|^2|f_n[m]|^2}\right]}\xspace = \frac{1}{\ensuremath{M}\xspace}\|\ensuremath{\bm{{f}}_n}\xspace\|_2^2 = \frac{\ensuremath{N}\xspace}{\ensuremath{M}\xspace},  \nonumber
\end{gather*}
indicating that the second term is simply $\ensuremath{\left|\left| {\ensuremath{\mathbb{E}\left[{\mathcal{L}_n}\right]}\xspace} \right|\right|}\xspace_{\psi_1} = \ensuremath{N}\xspace/(\ensuremath{M}\xspace\log(2))$. To calculate $\ensuremath{\left|\left| {\ensuremath{\mathbb{E}\left[{\mathcal{L}_n}\right]}\xspace} \right|\right|}\xspace_{\psi_1}$, we use the definition of the Orlitcz-1 norm in Equation~\eqref{eqn:onormdef} to see that 
\begin{gather*}
        \ensuremath{\left|\left| {\mathcal{L}_n} \right|\right|}\xspace_{\psi_1} =  \ensuremath{\left|\left| {\ensuremath{\left|\left| {\ensuremath{\mathcal{P}_T\left( \ensuremath{\bm{{A}}_n}\xspace \right)}\xspace} \right|\right|}\xspace_2^2} \right|\right|}\xspace_{\psi_1} \leq \ensuremath{N}\xspace\ensuremath{\left|\left| {\|\ensuremath{\bm{Q}^{\ast}}\xspace\ensuremath{\bm{{z}}_n}\xspace\|_2^2} \right|\right|}\xspace_{\psi_1} + R\ensuremath{N}\xspace\mu_0^2\ensuremath{\left|\left| {\|\ensuremath{\bm{{z}}_n}\xspace\|_2^2} \right|\right|}\xspace_{\psi_1},  \nonumber 
\end{gather*}
where the inequality stems form the fact that $\|\ensuremath{\bm{V}^{\ast}}\xspace\ensuremath{\bm{{f}}_n}\xspace\|_2^2 \leq R\ensuremath{N}\xspace\mu_0^2$ and $\|\ensuremath{\bm{{f}}_n}\xspace\|_2^2 \leq \ensuremath{N}\xspace$. Using the result in Equation~\eqref{eqn:onormzvec} with $\sigma^2 = 1/\ensuremath{M}\xspace$ in the first term and in $\sigma^2 = \ensuremath{R}\xspace/\ensuremath{M}\xspace$ in the second term yields 
\begin{eqnarray*}
	\ensuremath{\left|\left| {\ensuremath{\mathbb{E}\left[{\mathcal{L}_n}\right]}\xspace} \right|\right|}\xspace_{\psi_1} & \leq & \ensuremath{N}\xspace\ensuremath{\left|\left| {\|\ensuremath{\bm{Q}^{\ast}}\xspace\ensuremath{\bm{{z}}_n}\xspace\|_2^2} \right|\right|}\xspace_{\psi_1} + R\ensuremath{N}\xspace\mu_0^2\ensuremath{\left|\left| {\|\ensuremath{\bm{{z}}_n}\xspace\|_2^2} \right|\right|}\xspace_{\psi_1} \nonumber \\
	& \leq & \frac{1}{\ensuremath{M}\xspace}\left(\frac{\ensuremath{N}\xspace}{\ensuremath{M}\xspace(1-4^{-\frac{1}{\ensuremath{R}\xspace}})} + \frac{\ensuremath{R}\xspace\mu_0^2}{1-4^{-\frac{1}{\ensuremath{L}\xspace}}}\right) \nonumber \\  
	& \leq & \frac{2\ensuremath{R}\xspace\left(\ensuremath{N}\xspace + \ensuremath{L}\xspace\mu_0^2\right)}{\log(2)\ensuremath{M}\xspace}. \nonumber 
\end{eqnarray*}

We now have appropriate bounds on both the variance and Orliscz norm, permitting a bound on the largest singular value via the Matrix Bernstein inequality. Specifically, we can see that the first term in Theorem~\ref{thm:matbern} with $t = \beta\log(\ensuremath{L}\xspace\ensuremath{N}\xspace) > \log(\ensuremath{N}\xspace + \ensuremath{L}\xspace)$ is bounded as
\begin{gather*}
	\sigma_X\sqrt{t+\log(\ensuremath{L}\xspace + \ensuremath{N}\xspace)} \leq \sqrt{ 2{\kappa}\ensuremath{R}\xspace\beta\frac{\ensuremath{N}\xspace + \mu_0^2\ensuremath{L}\xspace}{\ensuremath{M}\xspace}\log(\ensuremath{L}\xspace\ensuremath{N}\xspace)}. \nonumber 
\end{gather*}

Likewise we can bound the second term:
\begin{eqnarray*}
    U_1\log\left(\frac{\ensuremath{M}\xspace U_1^2}{\sigma_X^2}\right)\left(t + \log(\ensuremath{L}\xspace + \ensuremath{N}\xspace)\right) & \leq & 2\beta U_1\log\left(\frac{4\Delta{\kappa}\ensuremath{R}\xspace(\ensuremath{N}\xspace + \mu_0^2\ensuremath{L}\xspace)}{\log^2(2)\ensuremath{M}\xspace}\right)\log(\ensuremath{L}\xspace\ensuremath{N}\xspace) \nonumber \\
    & \leq & c\frac{\beta{\kappa}\ensuremath{R}\xspace(\ensuremath{N}\xspace + \mu_0^2\ensuremath{L}\xspace)}{\ensuremath{M}\xspace}\log^2(\ensuremath{L}\xspace\ensuremath{N}\xspace). \nonumber
\end{eqnarray*}

Thus to appropriately bound
\begin{gather*}
	\ensuremath{\left|\left| {{\kappa}\ensuremath{\mathcal{P}_T}\xspace\ensuremath{\mathcal{{A}}^{\ast}}\xspace_k\ensuremath{\mathcal{{A}}}\xspace_k\ensuremath{\mathcal{P}_T}\xspace - \ensuremath{\mathcal{P}_T}\xspace} \right|\right|}\xspace \leq c\max\left\{ \sqrt{\frac{{\kappa}\ensuremath{R}\xspace\beta(\ensuremath{N}\xspace + \mu_0^2\ensuremath{L}\xspace)}{\ensuremath{M}\xspace}\log(\ensuremath{L}\xspace\ensuremath{N}\xspace)}, \frac{\beta{\kappa}\ensuremath{R}\xspace(\ensuremath{N}\xspace + \mu_0^2\ensuremath{L}\xspace)}{\ensuremath{M}\xspace}\log^2(\ensuremath{L}\xspace\ensuremath{N}\xspace)\right\}, \nonumber
\end{gather*}
we can see that we would need
\begin{gather*}
	\ensuremath{M}\xspace \geq C\beta{\kappa}\ensuremath{R}\xspace(\ensuremath{N}\xspace + \mu_0^2\ensuremath{L}\xspace)\log^2(\ensuremath{L}\xspace\ensuremath{N}\xspace). \nonumber 
\end{gather*}
Taking the union bound over the ${\kappa}$ partitions completes the proof of the lemma. 
\end{proof}

\subsubsection{\texorpdfstring{Bound on $\|(\mathcal{A}^{\ast}\mathcal{A} - \mathcal{I})(\bm{G})\|$}{Proof of Lemma~\ref{lem:lemma2}}}

\begin{lemma}
	\label{lem:lemma2}
	Let $\ensuremath{\mathcal{{A}}}\xspace_k$ be defined as in Equation~\eqref{eqn:linopdef}, ${\kappa} < \ensuremath{M}\xspace$ be the number of steps in the golfing scheme and assume that $\ensuremath{M}\xspace \leq \ensuremath{L}\xspace\ensuremath{N}\xspace$. Then as long as 
	\begin{gather*}
		\ensuremath{M}\xspace \geq c\beta{\kappa}\max\left(\ensuremath{N}\xspace + \mu_0^2\ensuremath{L}\xspace\right)\log^2(\ensuremath{N}\xspace\ensuremath{L}\xspace), \nonumber
	\end{gather*}
	where $\mu_k^2$ is the coherence term defined by
	\begin{gather}
                \mu_k^2 = \ensuremath{R}\xspace^{-1}\sup_{ \omega\in[0,2\pi] }\ensuremath{\left|\left| { \widetilde{\bm{Y}}_k^{\ast}\ensuremath{\bm{{f}}}\xspace_{\omega} } \right|\right|}\xspace_2^2, \label{eqn:coherencek}
	\end{gather}
	then with probability at least $1 - O(\ensuremath{M}\xspace(\ensuremath{L}\xspace\ensuremath{N}\xspace)^{-\beta})$, we have
	\begin{gather*}
		\max_k\ensuremath{\left|\left| {{\kappa}\ensuremath{\mathcal{{A}}^{\ast}}\xspace_k\ensuremath{\mathcal{{A}}}\xspace_k(\ensuremath{\widetilde{\bm{Y}}}\xspace_{k-1}) - \ensuremath{\widetilde{\bm{Y}}}\xspace_{k-1}} \right|\right|}\xspace \leq 2^{-(k+1)}. \nonumber
	\end{gather*}
\end{lemma}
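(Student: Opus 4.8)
The plan is to establish the per-step estimate with the matrix Bernstein inequality (Theorem~\ref{thm:matbern}), exploiting the defining feature of the golfing scheme that the partitions $\Gamma_1,\dots,\Gamma_{\kappa}$ are disjoint, so the operators $\ensuremath{\mathcal{{A}}}\xspace_1,\dots,\ensuremath{\mathcal{{A}}}\xspace_{\kappa}$ are mutually independent. First I would condition on $\ensuremath{\mathcal{{A}}}\xspace_1,\dots,\ensuremath{\mathcal{{A}}}\xspace_{k-1}$, which freezes $\ensuremath{\widetilde{\bm{Y}}}\xspace_{k-1}$ into a deterministic matrix $\bm{G}\in T$ while leaving $\ensuremath{\mathcal{{A}}}\xspace_k$ fresh. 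Since $\ensuremath{\mathbb{E}\left[{\ensuremath{\mathcal{{A}}}\xspace_k^{\ast}\ensuremath{\mathcal{{A}}}\xspace_k}\right]}\xspace = \frac{1}{{\kappa}}\mathcal{I}$, the deviation $\kappa\ensuremath{\mathcal{{A}}}\xspace_k^{\ast}\ensuremath{\mathcal{{A}}}\xspace_k(\bm{G}) - \bm{G}$ is a sum of independent, zero-mean matrices $\bm{X}_n = {\kappa}(\ensuremath{\langle {\ensuremath{\bm{{A}}_n}\xspace},{\bm{G}}\rangle}\xspace\bm{A}_n - \ensuremath{\mathbb{E}\left[{\ensuremath{\langle {\ensuremath{\bm{{A}}_n}\xspace},{\bm{G}}\rangle}\xspace\bm{A}_n}\right]}\xspace)$ over $n\in\Gamma_k$, where $\bm{A}_n = \bm{z}_n\bm{f}_n^{\ast}$ and $\ensuremath{\langle {\ensuremath{\bm{{A}}_n}\xspace},{\bm{G}}\rangle}\xspace = \bm{z}_n^{\ast}\bm{G}\bm{f}_n$.

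To apply Theorem~\ref{thm:matbern} I need the variance parameter $\sigma_X$ and the Orlicz-1 norm $U_1$ of the $\bm{X}_n$, following the same bookkeeping as in Lemma~\ref{lem:lemma1}. For the variance I would evaluate $\ensuremath{\mathbb{E}\left[{\bm{X}_n\bm{X}_n^{\ast}}\right]}\xspace$ and $\ensuremath{\mathbb{E}\left[{\bm{X}_n^{\ast}\bm{X}_n}\right]}\xspace$ directly; the leading contribution is $\ensuremath{\mathbb{E}\left[{|\bm{z}_n^{\ast}\bm{G}\bm{f}_n|^2\bm{A}_n\bm{A}_n^{\ast}}\right]}\xspace$, and because $\bm{A}_n\bm{A}_n^{\ast} = \|\bm{f}_n\|_2^2\bm{z}_n\bm{z}_n^{\ast} = \ensuremath{N}\xspace\bm{z}_n\bm{z}_n^{\ast}$, this collapses into a fourth-moment Gaussian average over $\bm{z}_n$ (by Wick's rule) followed by an average over the random frequency carried by $\bm{f}_n$. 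The Fourier-alignment factor $\|\widetilde{\bm{Y}}_{k-1}^{\ast}\bm{f}_{\omega}\|_2^2 \le \ensuremath{R}\xspace\mu_{k-1}^2$ that appears is precisely what the coherence of Equation~\eqref{eqn:coherencek} controls, so I expect $\sigma_X^2$ of order ${\kappa}(\ensuremath{N}\xspace + \mu_{k-1}^2\ensuremath{L}\xspace)\ensuremath{\left|\left| {\ensuremath{\widetilde{\bm{Y}}}\xspace_{k-1}} \right|\right|}\xspace_F^2/\ensuremath{M}\xspace$, mirroring the variance computed in Lemma~\ref{lem:lemma1}.

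For the Orlicz-1 norm the summand is sub-exponential rather than bounded, because $\bm{z}_n$ is Gaussian; the plan is to factor $\ensuremath{\langle {\ensuremath{\bm{{A}}_n}\xspace},{\bm{G}}\rangle}\xspace\bm{A}_n$ into subgaussian pieces in the entries of $\bm{z}_n$ using Lemma~\ref{lem:onorm1} and Lemma~\ref{lem:onorm2}, and then apply the Gaussian-vector calculation of Equation~\eqref{eqn:onormzvec} to obtain a bound on $U_1$ of comparable order up to a logarithmic factor. Substituting $\sigma_X$ and $U_1$ into Theorem~\ref{thm:matbern} with $t = \beta\log(\ensuremath{L}\xspace\ensuremath{N}\xspace)$ gives, with probability $1 - O((\ensuremath{L}\xspace\ensuremath{N}\xspace)^{-\beta})$, a bound of order $\ensuremath{\left|\left| {\ensuremath{\widetilde{\bm{Y}}}\xspace_{k-1}} \right|\right|}\xspace_F\sqrt{{\kappa}(\ensuremath{N}\xspace + \mu_{k-1}^2\ensuremath{L}\xspace)\log^2(\ensuremath{L}\xspace\ensuremath{N}\xspace)/\ensuremath{M}\xspace}$ on $\ensuremath{\left|\left| {\kappa\ensuremath{\mathcal{{A}}}\xspace_k^{\ast}\ensuremath{\mathcal{{A}}}\xspace_k(\ensuremath{\widetilde{\bm{Y}}}\xspace_{k-1}) - \ensuremath{\widetilde{\bm{Y}}}\xspace_{k-1}} \right|\right|}\xspace$; imposing $\ensuremath{M}\xspace \geq c\beta{\kappa}\ensuremath{R}\xspace(\ensuremath{N}\xspace + \mu_{k-1}^2\ensuremath{L}\xspace)\log^2(\ensuremath{L}\xspace\ensuremath{N}\xspace)$ drives the square-root factor below $(4\sqrt{\ensuremath{R}\xspace})^{-1}$, so the bound becomes $(4\sqrt{\ensuremath{R}\xspace})^{-1}\ensuremath{\left|\left| {\ensuremath{\widetilde{\bm{Y}}}\xspace_{k-1}} \right|\right|}\xspace_F$. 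Combining this relative estimate with the geometric contraction $\ensuremath{\left|\left| {\ensuremath{\widetilde{\bm{Y}}}\xspace_{k-1}} \right|\right|}\xspace_F \le 2^{-(k-1)}\sqrt{\ensuremath{R}\xspace}$ already supplied by Lemma~\ref{lem:lemma1} produces the target bound $2^{-(k+1)}$, and a union bound over the ${\kappa} < \ensuremath{M}\xspace$ steps gives the stated $O(\ensuremath{M}\xspace(\ensuremath{L}\xspace\ensuremath{N}\xspace)^{-\beta})$ failure probability.

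The step I expect to be the main obstacle is the control of the coherence $\mu_{k-1}$. Because $\ensuremath{\widetilde{\bm{Y}}}\xspace_{k-1}$ is itself a random product of the earlier golfing iterations, its coherence is not a fixed quantity, and legitimately replacing $\mu_{k-1}$ by the base coherence $\mu_0$ of $\ensuremath{\bm{Q}}\xspace\ensuremath{\bm{V}^{\ast}}\xspace$ in the final condition on $\ensuremath{M}\xspace$ requires an auxiliary inductive argument showing that the iterate coherence never inflates beyond a constant multiple of $\mu_0$ as $k$ grows. This coherence propagation must be tracked in tandem with the Frobenius-norm contraction, and it is made more delicate by the unbounded Gaussian entries of $\bm{z}_n$, which preclude a direct bounded-variable Bernstein bound and force the sub-exponential Orlicz analysis throughout.
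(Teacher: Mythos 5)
Your proposal follows essentially the same route as the paper's own proof: the identical matrix Bernstein decomposition with summands $\bm{X}_n = \kappa\left(\langle \bm{G},\bm{A}_n\rangle \bm{A}_n - \mathbb{E}\left[\langle \bm{G},\bm{A}_n\rangle \bm{A}_n\right]\right)$, the same variance terms of order $N$ and $\mu_{k-1}^2 L$, the same sub-exponential Orlicz-$1$ treatment of the Gaussian factors, the choice $t=\beta\log(LN)$, and the same combination with the Frobenius contraction $\|\widetilde{\bm{Y}}_{k-1}\|_F \le 2^{-(k-1)}\sqrt{R}$ from Lemma~\ref{lem:lemma1} followed by a union bound over the $\kappa$ partitions. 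The coherence-propagation step you flag as the main remaining obstacle is exactly what the paper supplies as Lemma~\ref{lem:lemma4}, which shows $\mu_k^2 \le \mu_{k-1}^2/2$ under the same condition on $M$, so substituting $\mu_{k-1}^2 \le \mu_0^2$ is legitimate and your plan is complete once that auxiliary lemma is invoked.
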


\begin{proof}

Lemma~\ref{lem:lemma2} essentially bounds the operator norm of ${\kappa}\ensuremath{\mathcal{{A}}^{\ast}}\xspace\ensuremath{\mathcal{{A}}}\xspace - \mathcal{I}$. In particular, to prove Theorem 2, the reduced version with ${\kappa} = 1$ is needed. Lemma 2 essentially uses the matrix Bernstein inequality to accomplish this task, taking 
\[ X_n = {\kappa} (\ensuremath{\langle {\bm{G}},{\ensuremath{\bm{{A}}_n}\xspace}\rangle}\xspace\ensuremath{\bm{{A}}_n}\xspace - \ensuremath{\mathbb{E}\left[{\ensuremath{\langle {\bm{G}},{\ensuremath{\bm{{A}}_n}\xspace}\rangle}\xspace\ensuremath{\bm{{A}}_n}\xspace}\right]}\xspace), \]
and we just need to control $\ensuremath{\left|\left| {\sum \ensuremath{\mathbb{E}\left[{X_nX_n^{\ast}}\right]}\xspace} \right|\right|}\xspace$ and $\ensuremath{\left|\left| {\sum \ensuremath{\mathbb{E}\left[{X_n^{\ast}X_n}\right]}\xspace} \right|\right|}\xspace$.
To bound the second of these, we can calculate
\begin{eqnarray*}
	\ensuremath{\left|\left| {\sum_{n\in\Gamma_k} \ensuremath{\mathbb{E}\left[{X_n^{\ast}X_n}\right]}\xspace} \right|\right|}\xspace & \leq & {\kappa}^2\ensuremath{\left|\left| {\sum_{n\in\Gamma_k} \ensuremath{\mathbb{E}\left[{|\ensuremath{\langle {\bm{G}},{\ensuremath{\bm{{A}}_n}\xspace}\rangle}\xspace|^2\ensuremath{\bm{{A}}_n}\xspace\ensuremath{\bm{{A}}_n^{\ast}}\xspace}\right]}\xspace } \right|\right|}\xspace  \nonumber \\
	& = & {\kappa}^2\ensuremath{\left|\left| {\sum_{n\in\Gamma_k} \ensuremath{\mathbb{E}\left[{\ensuremath{\left|\left| {\ensuremath{\bm{{f}}_n}\xspace} \right|\right|}\xspace_2^2|\ensuremath{\langle {\bm{G}},{\ensuremath{\bm{{A}}_n}\xspace}\rangle}\xspace|^2\ensuremath{\bm{{z}}_n}\xspace\ensuremath{\bm{{z}}^{\ast}_n}\xspace}\right]}\xspace} \right|\right|}\xspace \nonumber\\
	& \leq & \frac{3\ensuremath{N}\xspace{\kappa}}{\ensuremath{M}\xspace}\ensuremath{\left|\left| {\bm{G}} \right|\right|}\xspace_F^2, \nonumber 
\end{eqnarray*}
where the second inequality is due to Lemma~\ref{lem:lemma3} and $\ensuremath{\left|\left| {\ensuremath{\bm{{f}}_n}\xspace} \right|\right|}\xspace_2^2\leq\ensuremath{N}\xspace$. For the other expectation
\begin{eqnarray*}
	\ensuremath{\left|\left| {\sum_{n\in\Gamma_k} \ensuremath{\mathbb{E}\left[{X_nX_n^{\ast}}\right]}\xspace} \right|\right|}\xspace & \leq & {\kappa}^2\ensuremath{\left|\left| {\sum_{n\in\Gamma_k} \ensuremath{\mathbb{E}\left[{|\ensuremath{\langle {\bm{G}},{\ensuremath{\bm{{A}}_n}\xspace}\rangle}\xspace|^2\ensuremath{\bm{{A}}_n^{\ast}}\xspace\ensuremath{\bm{{A}}_n}\xspace}\right]}\xspace} \right|\right|}\xspace  \nonumber \\
	& = & \frac{\ensuremath{L}\xspace{\kappa}^2}{\ensuremath{M}\xspace^2}\ensuremath{\left|\left| {\sum_{n\in\Gamma_k} \ensuremath{\mathbb{E}\left[{\|\bm{G}\ensuremath{\bm{{f}}_n}\xspace\|_2^2\ensuremath{\bm{{f}}_n}\xspace\ensuremath{\bm{{f}}^{\ast}_n}\xspace}\right]}\xspace} \right|\right|}\xspace \nonumber \\
        & \leq & \frac{\ensuremath{L}\xspace{\kappa}^2}{\ensuremath{M}\xspace^2}\sup_{\omega}(\|\bm{G}\bm{f}_{\omega}\|_2^2) \ensuremath{\left|\left| {\sum_{n\in\Gamma_k} \bm{1}_\ensuremath{N}\xspace} \right|\right|}\xspace \nonumber \\
        & \leq & \frac{\ensuremath{L}\xspace{\kappa}}{\ensuremath{M}\xspace}\sup_{\omega}(\|\bm{G}\bm{f}_\omega\|_2^2) \nonumber \\
	& = & \frac{\ensuremath{L}\xspace{\kappa}}{\ensuremath{M}\xspace}\mu^2\|\bm{G}\|_F^2. \nonumber 
\end{eqnarray*}

Using these bounds, we can write 
\begin{gather*}
	\sigma_X^2 \leq\frac{{\kappa}}{\ensuremath{M}\xspace}\|\bm{G}\|_F^2\max\left\{\mu_0\ensuremath{L}\xspace,3\ensuremath{N}\xspace \right\}, \nonumber
\end{gather*}
and to use Proposition 1 we just need to bound $\|X\|_{\psi_2}$. To start, we can see that
\begin{eqnarray*}
	U_1 & =    & \ensuremath{\left|\left| {X} \right|\right|}\xspace_{\psi_1} \leq 2{\kappa} \ensuremath{\left|\left| {\ensuremath{\langle {\bm{G}},{\ensuremath{\bm{{A}}_n}\xspace}\rangle}\xspace\ensuremath{\bm{{A}}_n}\xspace} \right|\right|}\xspace_{\psi_1} \nonumber \\
	& \leq & c{\kappa}\ensuremath{\left|\left| {\ensuremath{\langle {\bm{G}},{\ensuremath{\bm{{A}}_n}\xspace}\rangle}\xspace} \right|\right|}\xspace_{\psi_2} \ensuremath{\left|\left| {\ensuremath{\left|\left| {\ensuremath{\bm{{A}}_n}\xspace} \right|\right|}\xspace_F} \right|\right|}\xspace_{\psi_2} \nonumber \\
	& \leq & c{\kappa}\ensuremath{\left|\left| {\ensuremath{\langle {\bm{G}},{\ensuremath{\bm{{A}}_n}\xspace}\rangle}\xspace} \right|\right|}\xspace_{\psi_2} \sqrt{\ensuremath{\left|\left| {\ensuremath{\left|\left| {\ensuremath{\bm{{f}}_n}\xspace} \right|\right|}\xspace_2^2\ensuremath{\left|\left| {\ensuremath{\bm{{z}}_n}\xspace} \right|\right|}\xspace_2^2} \right|\right|}\xspace_{\psi_2}} \nonumber \\
	& = & c{\kappa}\sqrt{\frac{\ensuremath{N}\xspace}{\ensuremath{M}\xspace\left(1-4^{-1/\ensuremath{L}\xspace}\right)}}\ensuremath{\left|\left| {|\mbox{trace}(\ensuremath{\bm{{f}}_n}\xspace\ensuremath{\bm{{z}}^{\ast}_n}\xspace\bm{G})} \right|\right|}\xspace_{\psi_2} \nonumber \\
        & \leq & c{\kappa}\sqrt{\frac{\ensuremath{N}\xspace}{\ensuremath{M}\xspace^2\left(1-4^{-1/\ensuremath{L}\xspace}\right)}} \sum_{l=1}^L\ensuremath{\left|\left| {\langle\bm{g}_l,\bm{f}_{n}\rangle} \right|\right|}\xspace_{\psi_2} \nonumber \\
	& \leq & c{\kappa}\sqrt{\frac{\ensuremath{N}\xspace^2\ensuremath{L}\xspace\mu_0^2\|\bm{G}\|_F^2}{\ensuremath{M}\xspace^2}}. \nonumber 
\end{eqnarray*}

We can now apply the matrix Bernstein theorem with the calculated values of $U_1$ and $\sigma_X$. Again using $t = \beta\log(\ensuremath{L}\xspace\ensuremath{N}\xspace)$, the first portion of the bound is
\begin{gather*}
	\sigma_X\sqrt{t + \log(\ensuremath{L}\xspace + \ensuremath{N}\xspace)}  \leq  c\ensuremath{\left|\left| {\bm{G}} \right|\right|}\xspace_F\sqrt{\frac{{\kappa}\beta}{\ensuremath{M}\xspace}\max\{\mu_k^2\ensuremath{L}\xspace, \ensuremath{N}\xspace\}\log(\ensuremath{L}\xspace\ensuremath{N}\xspace)}, \nonumber
\end{gather*}
and the second portion of the bound is
\begin{eqnarray*}
        & & U_1 \log\left(\frac{\Delta U_1^2}{\sigma_X}\right)(t + \log(L + N)) \nonumber \\
        & & \qquad\qquad\qquad\qquad\leq c\left|\left| \bm{G} \right|\right|_F{\kappa}\sqrt{\frac{LN\mu_k^2}{M}}\log\left(\frac{LN\mu_k^2}{M^2}\frac{M c\left|\left|\bm{G}\right|\right|^2_F\kappa^2}{\kappa\left|\left|\bm{G}\right|\right|_F^2\max\{\mu_k^2L,N\}}\right)\beta\log(LN) \nonumber \\
	& & \qquad\qquad\qquad\qquad \leq c\left|\left| \bm{G} \right|\right|_F\kappa\sqrt{\frac{LN\mu_k^2}{M}}\log\left(\frac{c\Delta\kappa LN\mu_k^2}{M\max\{\mu_k^2L,N\}}\right)\beta\log(LN) \nonumber \\
	& & \qquad\qquad\qquad\qquad \leq c\beta\ensuremath{\left|\left| {\bm{G}} \right|\right|}\xspace_F{\kappa}\sqrt{\frac{\ensuremath{L}\xspace\ensuremath{N}\xspace\mu_k^2}{\ensuremath{M}\xspace}}\log\left(\min\{\mu_k^2\ensuremath{L}\xspace, \ensuremath{N}\xspace \}\right)\log(\ensuremath{L}\xspace\ensuremath{N}\xspace). \nonumber 
\end{eqnarray*}

We can now use Lemma~\ref{lem:lemma4} to bound $\mu_k^2 \leq \mu_0^2$ with probability $1-O(\ensuremath{M}\xspace(\ensuremath{L}\xspace\ensuremath{N}\xspace)^{-\beta})$ and Lemma~\ref{lem:lemma1} to bound $\ensuremath{\left|\left| {\bm{G}_k} \right|\right|}\xspace_F \leq 2^{-k}\sqrt{R}$, which gives us a bound of
\begin{gather*}
	\ensuremath{\left|\left| {(\ensuremath{\mathcal{{A}}^{\ast}}\xspace\ensuremath{\mathcal{{A}}}\xspace - \mathcal{I})\bm{G}} \right|\right|}\xspace \leq c2^{-k/2}\max\left\{\sqrt{\frac{{\kappa}\beta\ensuremath{R}\xspace\log(\ensuremath{L}\xspace\ensuremath{N}\xspace)}{\ensuremath{M}\xspace}\log(\max\{\mu_0^2\ensuremath{L}\xspace,\ensuremath{N}\xspace\})}, \right.\nonumber \\
	\qquad \qquad \left. \sqrt{\mu_0^2\ensuremath{L}\xspace\ensuremath{N}\xspace}\frac{\beta{\kappa}}{\ensuremath{M}\xspace}\log(\ensuremath{L}\xspace\ensuremath{N}\xspace)\log(\min\{\mu_0^2\ensuremath{L}\xspace,\ensuremath{N}\xspace \}) \right\}. \nonumber 
\end{gather*}
Simplifying the bound using $\ensuremath{R}\xspace \leq \min\{\ensuremath{L}\xspace,\ensuremath{N}\xspace\}$, 
\begin{gather*}
	\ensuremath{\left|\left| {(\ensuremath{\mathcal{{A}}^{\ast}}\xspace\ensuremath{\mathcal{{A}}}\xspace - \mathcal{I})\bm{G}} \right|\right|}\xspace \leq c2^{-k/2}\max\left\{\sqrt{\frac{{\kappa}\beta\ensuremath{R}\xspace\log(\ensuremath{L}\xspace\ensuremath{N}\xspace)}{\ensuremath{M}\xspace}\log(\ensuremath{L}\xspace\ensuremath{N}\xspace)}, \sqrt{\mu_0^2\ensuremath{L}\xspace\ensuremath{N}\xspace}\frac{\beta{\kappa}}{\ensuremath{M}\xspace}\log^2(\ensuremath{L}\xspace\ensuremath{N}\xspace)) \right\} \nonumber 
\end{gather*}
Taking 
\begin{gather*}
	\ensuremath{M}\xspace \geq c\beta{\kappa}\ensuremath{R}\xspace\max\{\ensuremath{N}\xspace,\ensuremath{L}\xspace\mu_0^2\}\log^2(\ensuremath{L}\xspace\ensuremath{N}\xspace), \nonumber
\end{gather*}
proves the lemma. To simplify the bound on the probability, we note that Lemma~\ref{lem:lemma4} holds with probability $1-O(\ensuremath{M}\xspace(\ensuremath{L}\xspace\ensuremath{N}\xspace)^{-\beta})$ and this lemma holds with probability $1-O({\kappa}(\ensuremath{L}\xspace\ensuremath{N}\xspace)^{-\beta})$. Since ${\kappa} < \ensuremath{M}\xspace$ and assuming that $\ensuremath{M}\xspace \leq \ensuremath{L}\xspace\ensuremath{N}\xspace$, we can write that the result holds with probability $1-O((\ensuremath{L}\xspace\ensuremath{N}\xspace)^{1-\beta})$. Additionally, since Lemma~\ref{lem:lemma4} holds when 
\begin{gather*}
	\ensuremath{M}\xspace \geq c\beta{\kappa}\ensuremath{R}\xspace\left(\ensuremath{N}\xspace + \ensuremath{L}\xspace\mu_0^2\right)\log^2(\ensuremath{L}\xspace\ensuremath{N}\xspace) \geq c\beta{\kappa}\ensuremath{R}\xspace\max\{\ensuremath{N}\xspace,\ensuremath{L}\xspace\mu_0^2\}\log^2(\ensuremath{L}\xspace\ensuremath{N}\xspace), \nonumber
\end{gather*}
Then both lemmas hold under the same condition.
\end{proof}

\subsubsection{\texorpdfstring{Bound on $\ensuremath{\mathbb{E}\left[{|\ensuremath{\langle {\bm{C}},{\ensuremath{\bm{{A}}_n}\xspace}\rangle}\xspace|^2\ensuremath{\bm{{z}}_n}\xspace\ensuremath{\bm{{z}}^{\ast}_n}\xspace}\right]}\xspace$}{Proof of Lemma~\ref{lem:lemma3}}}

Lemma~\ref{lem:lemma3} bounds the spectrum of the expected matrix 
\[ \ensuremath{\mathbb{E}\left[{|\ensuremath{\langle {\bm{G}},{\ensuremath{\bm{{A}}_n}\xspace}\rangle}\xspace|^2\ensuremath{\bm{{z}}_n}\xspace\ensuremath{\bm{{z}}^{\ast}_n}\xspace}\right]}\xspace: \]
\begin{lemma}
	\label{lem:lemma3}
	Suppose $\ensuremath{\bm{{A}}_n}\xspace=\ensuremath{\bm{{z}}_n}\xspace\ensuremath{\bm{{f}}^{\ast}_n}\xspace$ be defined as the outer product of an \emph{i.i.d.} random Gaussian vector $\ensuremath{\bm{{z}}_n}\xspace$ with zero mean and variance $1/M$ and a random Fourier vector $\ensuremath{\bm{{f}}_n}\xspace$. Then the operator  $|\ensuremath{\langle {\bm{C}},{\ensuremath{\bm{{A}}_n}\xspace}\rangle}\xspace|^2\ensuremath{\bm{{z}}_n}\xspace\ensuremath{\bm{{z}}^{\ast}_n}\xspace$ satisfies
	\begin{gather*}
		\ensuremath{\mathbb{E}_{z}\left[{|\ensuremath{\langle {\bm{C}},{\ensuremath{\bm{{A}}_n}\xspace}\rangle}\xspace|^2 \ensuremath{\bm{{z}}_n}\xspace\ensuremath{\bm{{z}}^{\ast}_n}\xspace}\right]}\xspace  \preceq \frac{3}{\ensuremath{M}\xspace^2}\|\bm{C}^{\ast}\ensuremath{\bm{{f}}_n}\xspace\|_2^2\bm{I}_\ensuremath{M}\xspace, \nonumber
	\end{gather*}
	and
	\begin{gather*}
		\ensuremath{\mathbb{E}_{z,f}\left[{|\ensuremath{\langle {\bm{C}},{\ensuremath{\bm{{A}}_n}\xspace}\rangle}\xspace|^2 \ensuremath{\bm{{z}}_n}\xspace\ensuremath{\bm{{z}}^{\ast}_n}\xspace}\right]}\xspace  \preceq \frac{3}{\ensuremath{M}\xspace^2}\|\bm{C}\|_F^2\bm{I}_\ensuremath{M}\xspace.  \nonumber
	\end{gather*}
\end{lemma}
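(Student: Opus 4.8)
The plan is to reduce everything to a single scalar Gaussian quadratic form and then apply Wick's theorem entrywise. Since $\bm{A}_n = \bm{z}_n\bm{f}_n^{*}$, the Hilbert--Schmidt inner product collapses to an ordinary vector inner product: $\langle \bm{C},\bm{A}_n\rangle = \mathrm{trace}(\bm{C}^{*}\bm{z}_n\bm{f}_n^{*}) = \bm{f}_n^{*}\bm{C}^{*}\bm{z}_n = \bm{u}^{*}\bm{z}_n$, where I set $\bm{u} := \bm{C}\bm{f}_n$. Hence the matrix I must control is $\mathbb{E}_z\!\left[\,|\bm{u}^{*}\bm{z}_n|^{2}\,\bm{z}_n\bm{z}_n^{*}\right]$, a quantity quartic in the Gaussian vector $\bm{z}_n$ with the deterministic (given $\bm{f}_n$) vector $\bm{u}$ as the only external data. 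I will compute this matrix \emph{exactly}, entry by entry, and only afterward pass to the stated positive-semidefinite bound; this keeps the source of the constant transparent.

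For the exact computation I invoke the Gaussian fourth-moment (Isserlis/Wick) identity, treating the coordinates of $\bm{z}_n$ as the real $\mathcal{N}(0,1/M)$ variables of the theorem hypothesis. Writing the $(a,b)$ entry of the target as $\sum_{i,j}\bar{u}_i u_j\,\mathbb{E}[z_i z_j z_a z_b]$ and substituting $\mathbb{E}[z_i z_j z_a z_b] = M^{-2}(\delta_{ij}\delta_{ab} + \delta_{ia}\delta_{jb} + \delta_{ib}\delta_{ja})$, the three pairings assemble into exactly three matrix terms, yielding
\[
\mathbb{E}_z\!\left[|\bm{u}^{*}\bm{z}_n|^{2}\bm{z}_n\bm{z}_n^{*}\right] = \frac{1}{M^{2}}\left(\|\bm{u}\|_2^{2}\,\bm{I} + \bm{u}\bm{u}^{*} + \overline{\bm{u}\bm{u}^{*}}\right).
\]
Each of the two rank-one pieces is positive semidefinite with operator norm $\|\bm{u}\|_2^{2}$, so $\bm{u}\bm{u}^{*}\preceq\|\bm{u}\|_2^{2}\bm{I}$ and $\overline{\bm{u}\bm{u}^{*}}\preceq\|\bm{u}\|_2^{2}\bm{I}$; summing the three contributions produces the factor of three and establishes the first claim, namely $\mathbb{E}_z[\,\cdot\,]\preceq 3M^{-2}\|\bm{C}\bm{f}_n\|_2^{2}\,\bm{I}$ (the norm $\|\bm{u}\|_2^2=\|\bm{C}\bm{f}_n\|_2^2$ being precisely the scalar appearing in the stated bound).

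The second bound then follows by averaging the first over the random Fourier frequency. Because the first inequality is a PSD inequality whose right-hand side depends on $\bm{f}_n$ only through the scalar $\|\bm{C}\bm{f}_n\|_2^{2}$, I take $\mathbb{E}_f$ of both sides and evaluate $\mathbb{E}_f\|\bm{C}\bm{f}_n\|_2^{2} = \mathrm{trace}\!\big(\bm{C}^{*}\bm{C}\,\mathbb{E}_\omega[\bm{f}_\omega\bm{f}_\omega^{*}]\big)$. Since $\omega$ is uniform on $[0,2\pi)$, orthogonality of the complex exponentials gives $\mathbb{E}_\omega[\bm{f}_\omega\bm{f}_\omega^{*}] = \bm{I}_N$, whence $\mathbb{E}_f\|\bm{C}\bm{f}_n\|_2^{2} = \|\bm{C}\|_F^{2}$ and the second bound $\mathbb{E}_{z,f}[\,\cdot\,]\preceq 3M^{-2}\|\bm{C}\|_F^{2}\bm{I}$ drops out immediately.

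The only genuinely delicate step is the fourth-moment bookkeeping: one must track all three Wick pairings, and it is precisely these three real pairings, rather than the two pairings of a circularly-symmetric complex Gaussian, that generate the constant $3$ instead of $2$. One must also keep $\bm{u}=\bm{C}\bm{f}_n$ complex throughout, which is why the two off-diagonal contributions combine as $\bm{u}\bm{u}^{*}+\overline{\bm{u}\bm{u}^{*}} = 2\,\mathrm{Re}(\bm{u}\bm{u}^{*})$ rather than collapsing to a single rank-one matrix. Everything else---the reduction to $\bm{u}^{*}\bm{z}_n$, the PSD ordering $\bm{u}\bm{u}^{*}\preceq\|\bm{u}\|_2^2\bm{I}$, and the Fourier averaging via $\mathbb{E}_\omega[\bm{f}_\omega\bm{f}_\omega^{*}]=\bm{I}_N$---is routine.
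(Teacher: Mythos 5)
Your proof is correct and follows essentially the same route as the paper's: both compute $\mathbb{E}_z$ of the quartic Gaussian form exactly (the paper by entrywise expansion, you by Isserlis/Wick), obtaining an identity term plus rank-one terms that are each dominated in the PSD order by $\|\bm{C}\bm{f}_n\|_2^2\,\bm{I}$ to yield the constant $3$, and then both pass to the second bound by averaging over the uniform frequency via $\mathbb{E}_\omega[\bm{f}_\omega\bm{f}_\omega^{\ast}]=\bm{I}_N$, giving $\mathbb{E}_f\|\bm{C}\bm{f}_n\|_2^2=\|\bm{C}\|_F^2$. If anything, your Wick bookkeeping is tidier than the paper's intermediate constants, and your use of $\|\bm{C}\bm{f}_n\|_2^2$ (rather than the statement's $\|\bm{C}^{\ast}\bm{f}_n\|_2^2$) matches what the paper's own proof actually derives and uses downstream.
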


\begin{proof}

To begin the proof, we look at the expectation of each element of the matrix. We first calculate the expectation with respect to $\ensuremath{\bm{{z}}_n}\xspace$,
\begin{eqnarray*}
	\ast  & = & \ensuremath{\mathbb{E}_{z}\left[{\left|\sum_{l=1}^Lz_{n}[l]\bm{c}_l^{\ast}\ensuremath{\bm{{f}}_n}\xspace \right|^2z_{n}[\alpha]z^{\ast}_n[\beta]}\right]}\xspace \nonumber  \\  
        & = &  \ensuremath{\mathbb{E}_{z}\left[{ \left(\sum_{l=1}^Lz_{n}[l]\bm{c}_l^{\ast}\ensuremath{\bm{{f}}_n}\xspace\right)^{\ast} \left(\sum_{l=1}^Lz_{n}[l]\bm{c}_l^{\ast}\ensuremath{\bm{{f}}_n}\xspace \right) z_{n}[\alpha]z^{\ast}_n[\beta] }\right]}\xspace \nonumber \\
	& = & \ensuremath{\mathbb{E}_{z}\left[{ \sum_{l=1}^L|z_{n}[l]|^2|\bm{c}_l^{\ast}\ensuremath{\bm{{f}}_n}\xspace|^2z_n[\alpha]z_n^{\ast}[\beta] + 2\sum_{k\neq l}\mbox{Re}\left(z_n^{\ast}[l]z_n[k] \ensuremath{\langle {\bm{c}_l^{\ast}\ensuremath{\bm{{f}}_n}\xspace},{\bm{c}_k^{\ast}\ensuremath{\bm{{f}}_n}\xspace}\rangle}\xspace \right) z_{n}[\alpha]z^{\ast}_n[\beta]}\right]}\xspace \nonumber \\
	& = & \left(\frac{3}{2\ensuremath{M}\xspace^2}|\bm{c}_\alpha^{\ast}\ensuremath{\bm{{f}}_n}\xspace|^2 + \frac{1}{\ensuremath{M}\xspace^2}\|\bm{C}\ensuremath{\bm{{f}}_n}\xspace\|_2^2\right)\delta_{\alpha = \beta} + \frac{2}{\ensuremath{M}\xspace^2} \ensuremath{\langle {\bm{c}_\alpha^{\ast}\ensuremath{\bm{{f}}_n}\xspace},{\bm{c}_\beta^{\ast}\ensuremath{\bm{{f}}_n}\xspace}\rangle}\xspace\delta_{\alpha\neq\beta}. \nonumber
\end{eqnarray*}

We can then use the matrix formulation
\begin{eqnarray*}
	\ensuremath{\mathbb{E}_{z}\left[{|\ensuremath{\langle {\bm{C}},{\ensuremath{\bm{{A}}_n}\xspace}\rangle}\xspace|^2 \ensuremath{\bm{{z}}_n}\xspace\ensuremath{\bm{{z}}^{\ast}_n}\xspace}\right]}\xspace & = & \frac{3}{2\ensuremath{M}\xspace^2}\mbox{diag}(\bm{C}\ensuremath{\bm{{f}}_n}\xspace\ensuremath{\bm{{f}}^{\ast}_n}\xspace\bm{C}^{\ast}) + \frac{1}{\ensuremath{M}\xspace^2}\|\bm{C}\ensuremath{\bm{{f}}_n}\xspace\|_2^2\bm{I}_\ensuremath{M}\xspace \nonumber \\
         & & \qquad \qquad + \frac{2}{\ensuremath{M}\xspace^2}\bm{C}\ensuremath{\bm{{f}}_n}\xspace\ensuremath{\bm{{f}}^{\ast}_n}\xspace\bm{C}^{\ast} + \frac{2}{\ensuremath{M}\xspace^2}\mbox{diag}(\bm{C}\ensuremath{\bm{{f}}_n}\xspace\ensuremath{\bm{{f}}^{\ast}_n}\xspace\bm{C}^{\ast}) \nonumber \\
	& = & \frac{1}{\ensuremath{M}\xspace^2}\|\bm{C}\ensuremath{\bm{{f}}_n}\xspace\|_2^2\bm{I}_\ensuremath{M}\xspace + 2\bm{C}\ensuremath{\bm{{f}}_n}\xspace\ensuremath{\bm{{f}}^{\ast}_n}\xspace\bm{C}^{\ast} - \frac{1}{2}\mbox{diag}(\bm{C}\ensuremath{\bm{{f}}_n}\xspace\ensuremath{\bm{{f}}^{\ast}_n}\xspace\bm{C}^{\ast}) \nonumber \\
	& \preceq & \frac{3}{\ensuremath{M}\xspace^2} \|\bm{C}\ensuremath{\bm{{f}}_n}\xspace\|_2^2\bm{I}_\ensuremath{M}\xspace,  \nonumber
\end{eqnarray*}
where to obtain the result we first use the linearity of the expectation along with the with the positive-semidefinite property of diag($\bm{C}\ensuremath{\bm{{f}}_n}\xspace\ensuremath{\bm{{f}}^{\ast}_n}\xspace\bm{C}^{\ast}$), proving the fist portion of the Lemma. To prove the second portion we simply take an expectation with respect to $\ensuremath{\bm{{f}}_n}\xspace$:
\begin{gather*}
	\ensuremath{\mathbb{E}_{z,f}\left[{|\ensuremath{\langle {\bm{C}},{\ensuremath{\bm{{A}}_n}\xspace}\rangle}\xspace|^2 \ensuremath{\bm{{z}}_n}\xspace\ensuremath{\bm{{z}}^{\ast}_n}\xspace}\right]}\xspace  \preceq  \ensuremath{\mathbb{E}_{f}\left[{\frac{3}{\ensuremath{M}\xspace^2}\|\bm{C}\ensuremath{\bm{{f}}_n}\xspace\|_2^2\bm{I}_\ensuremath{M}\xspace }\right]}\xspace \preceq \frac{3}{\ensuremath{M}\xspace^2} \|\bm{C}\|_F^2\bm{I},_\ensuremath{M}\xspace, \nonumber
\end{gather*}
completing the proof.
\end{proof}

\subsubsection{\texorpdfstring{Contractive Property of $\mu_k^2$}{Proof of Lemma~\ref{lem:lemma4}}}

\begin{lemma}
	\label{lem:lemma4}
	Let $\mu_k^2$ be the coherence factor as defined in Equation~\eqref{eqn:coherencek}, and additionally assume that $\ensuremath{L}\xspace > 1$ and that $\ensuremath{L}\xspace\ensuremath{N}\xspace > \ensuremath{R}\xspace\mu_0^4$. If 
	\begin{gather*}
		\ensuremath{M}\xspace \geq c\beta{\kappa}\ensuremath{R}\xspace\left(\ensuremath{N}\xspace + \ensuremath{L}\xspace\mu_0^2\right)\log^2(\ensuremath{L}\xspace\ensuremath{N}\xspace), \nonumber
	\end{gather*}
	then with probability at least $1-O( {\kappa}(\ensuremath{L}\xspace\ensuremath{N}\xspace)^{-\beta})$, 
	\begin{gather*}
		\mu_k^2 \leq 2^{-1}\mu_{k-1}^2, \nonumber 
	\end{gather*}
	for all $k \in [1,\cdots, {\kappa}]$.
\end{lemma}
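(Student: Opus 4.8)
The plan is to prove the coherence contraction by a conditional matrix--Bernstein argument carried out at each fixed frequency, followed by a union bound over a net of frequencies and an induction over the golfing step $k$. Throughout I condition on the randomness in the first $k-1$ partitions, so that $\bm{g} := \widetilde{\bm{Y}}_{k-1}$ is a fixed matrix in the tangent space $T$ that is independent of the measurements $\{\bm{A}_n\}_{n\in\Gamma_k}$ used at step $k$. Recalling the golfing recursion $\widetilde{\bm{Y}}_k = (\mathcal{P}_T - \kappa\mathcal{P}_T\mathcal{A}_k^{\ast}\mathcal{A}_k\mathcal{P}_T)\bm{g}$, together with $\mathbb{E}[\kappa\mathcal{A}_k^{\ast}\mathcal{A}_k] = \mathcal{I}$ and $\mathcal{P}_T\bm{g} = \bm{g}$, I first observe that $\mathbb{E}[\widetilde{\bm{Y}}_k \mid \bm{g}] = 0$, so that for any fixed $\omega$ the vector $\widetilde{\bm{Y}}_k\bm{f}_\omega$ is a centered sum of the independent terms $\bm{X}_n = \kappa\,\langle\bm{A}_n,\bm{g}\rangle\,[\mathcal{P}_T(\bm{A}_n)]\bm{f}_\omega$, i.e. $\widetilde{\bm{Y}}_k\bm{f}_\omega = -\sum_{n\in\Gamma_k}(\bm{X}_n - \mathbb{E}\bm{X}_n)$. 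This is exactly the form to which Theorem~\ref{thm:matbern} applies, viewing each $\bm{X}_n$ as an $L\times 1$ matrix.

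Second, I would compute the two Bernstein ingredients. For the variance $\sigma_X$, I expand $\langle\bm{A}_n,\bm{g}\rangle = \bm{z}_n^{\ast}(\bm{g}\bm{f}_n)$ and $\bm{A}_n\bm{f}_\omega = (\bm{f}_n^{\ast}\bm{f}_\omega)\bm{z}_n$ and invoke Lemma~\ref{lem:lemma3} to perform the Gaussian expectations over $\bm{z}_n$, while the random-frequency expectation uses $\mathbb{E}_{\omega_n}|\bm{f}_n^{\ast}\bm{f}_\omega|^2 = N$. The decisive structural point is that every resulting term carries a factor $\|\bm{g}\bm{f}_n\|_2^2 \le R\mu_{k-1}^2$, so that $\sigma_X^2 \lesssim \frac{\kappa}{M}R\mu_{k-1}^2(N + \mu_0^2 L)$ is \emph{proportional to} $\mu_{k-1}^2$; this proportionality is what makes the factor-$\tfrac12$ contraction possible. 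For the Orlicz parameter $U_1 = \|\bm{X}_n\|_{\psi_1}$, I would use that $\bm{X}_n$ is built from products of Gaussian (hence subgaussian) variables: Lemma~\ref{lem:onorm2} factors the subexponential norm into two subgaussian norms, Lemma~\ref{lem:onorm1} relates these to the $\psi_1$ norm of $\|\bm{z}_n\|_2^2$, and the explicit calculation of Equation~\eqref{eqn:onormzvec} supplies the constant. Feeding $\sigma_X$, $U_1$, and $t = \beta\log(LN)$ into Theorem~\ref{thm:matbern} yields, for each fixed $\omega$, the bound $\|\widetilde{\bm{Y}}_k\bm{f}_\omega\|_2^2 \le \tfrac{1}{2}R\mu_{k-1}^2$ with probability $1 - (LN)^{-\beta}$, provided $M \ge c\beta\kappa R(N + \mu_0^2 L)\log^2(LN)$.

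Third, to pass from a fixed $\omega$ to the supremum defining $\mu_k^2$ in Equation~\eqref{eqn:coherencek}, I would discretize $[0,2\pi]$ with a net of cardinality polynomial in $LN$ and union-bound over it; the variation of $\omega\mapsto\|\widetilde{\bm{Y}}_k\bm{f}_\omega\|_2$ between net points is controlled since each entry $e^{-jm\omega}$ of $\bm{f}_\omega$ has derivative bounded by $N$, so a spacing of order $(N\,\mathrm{poly}(LN))^{-1}$ makes the discretization error negligible and only inflates the logarithmic factors. This gives $\mu_k^2 \le \tfrac12\mu_{k-1}^2$ at step $k$ with probability $1 - O((LN)^{-\beta})$. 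I would then close the induction: Lemma~\ref{lem:lemma1} supplies $\|\widetilde{\bm{Y}}_{k-1}\|_F \le 2^{-(k-1)}\sqrt{R}$ and the induction hypothesis gives $\mu_{k-1}^2 \le \mu_0^2$, which justifies stating the sample-complexity condition in terms of $\mu_0$ rather than $\mu_{k-1}$, with the standing assumptions $L>1$ and $LN>R\mu_0^4$ used to tame the logarithms in the subexponential term so that the first Bernstein term dominates. A union bound over the $\kappa$ steps then yields the claimed $1 - O(\kappa(LN)^{-\beta})$.

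The main obstacle I anticipate is the interplay of the variance computation with the continuum supremum. The variance must be shown to scale \emph{linearly in} $\mu_{k-1}^2$ rather than merely in $\|\widetilde{\bm{Y}}_{k-1}\|_F^2$; this requires routing the fourth-order Gaussian moments through Lemma~\ref{lem:lemma3} with $\bm{C}=\bm{g}$, separating the $\bm{z}_n$-expectation from the random-frequency expectation, and handling the $\mathcal{P}_T$ projection without losing the $R\mu_{k-1}^2$ factor. Compounding this, the unbounded Gaussian entries render the per-frequency tail only subexponential rather than subgaussian, so the Orlicz bookkeeping and the net cardinality must be balanced carefully to keep the union bound over $\omega$ from degrading the $\log^2(LN)$ dependence.
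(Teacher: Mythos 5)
Your proposal is correct and follows essentially the same route as the paper's proof: both use the golfing recursion and $\mathbb{E}[\kappa\mathcal{A}_k^{\ast}\mathcal{A}_k]=\mathcal{I}$ to write $\widetilde{\bm{Y}}_k$ as a centered sum of independent terms, route the variance through Lemma~\ref{lem:lemma3} so that it scales linearly with $\mu_{k-1}^2$ (the decisive point you correctly identify), control the Orlicz-1 norm via the subgaussian product lemmas and Equation~\eqref{eqn:onormzvec}, apply Bernstein with $t=\beta\log(\ensuremath{L}\xspace\ensuremath{N}\xspace)$, and close with the induction $\mu_{k-1}^2\le\mu_0^2$, the assumptions $\ensuremath{L}\xspace>1$ and $\ensuremath{L}\xspace\ensuremath{N}\xspace>\ensuremath{R}\xspace\mu_0^4$ to tame the subexponential logarithms, and a union bound over the $\kappa$ partitions. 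The only differences are minor and, if anything, in your favor: you apply matrix Bernstein to the full $\ensuremath{L}\xspace$-dimensional vector $\widetilde{\bm{Y}}_k\bm{f}_{\omega}$ where the paper applies scalar Bernstein coordinate-by-coordinate (then sums over $l$ and divides by $\ensuremath{R}\xspace$), and your explicit net over $\omega$ makes rigorous the continuum supremum in $\mu_k^2$ that the paper's union bound (taken only over $l$ and the terms in $\Gamma_k$) leaves implicit.
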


\begin{proof}

        In Lemma~\ref{lem:lemma4} we show that the coherence term reduces at each golfing iteration. Observe that
        \begin{gather*}
	\mu_k^2 =  \frac{1}{\ensuremath{R}\xspace}\sup_{\omega} \sum_{l=1}^{\ensuremath{L}\xspace} \ensuremath{\langle {\ensuremath{\widetilde{\bm{Y}}}\xspace_k},{\bm{e}_l\ensuremath{\bm{{f}}^{\ast}}\xspace}\rangle}\xspace^2
        \end{gather*}
\begin{eqnarray*}
        & = & \frac{1}{\ensuremath{R}\xspace}\sup_{\omega} \sum_{l=1}^{\ensuremath{L}\xspace}\left(\sum_{n\in\Gamma_k} {\kappa}\ensuremath{\langle {\ensuremath{\mathcal{P}_T\left( \ensuremath{\bm{{A}}_n}\xspace \right)}\xspace},{\bm{e}_l\ensuremath{\bm{{f}}^{\ast}}\xspace}\rangle}\xspace\ensuremath{\langle {\ensuremath{\widetilde{\bm{Y}}}\xspace_{k-1}},{\ensuremath{\bm{{A}}_n}\xspace}\rangle}\xspace - \ensuremath{\langle {\ensuremath{\widetilde{\bm{Y}}}\xspace_{k-1}},{\bm{e}_l\ensuremath{\bm{{f}}^{\ast}}\xspace}\rangle}\xspace\right)^2 \nonumber \\
	& = & \frac{1}{\ensuremath{R}\xspace}\sup_{\omega} \sum_{l=1}^{\ensuremath{L}\xspace}\left(\sum_{n\in\Gamma_k} {\kappa}\ensuremath{\langle {\ensuremath{\mathcal{P}_T\left( \ensuremath{\bm{{A}}_n}\xspace \right)}\xspace},{\bm{e}_l\ensuremath{\bm{{f}}^{\ast}}\xspace}\rangle}\xspace\ensuremath{\langle {\ensuremath{\widetilde{\bm{Y}}}\xspace_{k-1}},{\ensuremath{\bm{{A}}_n}\xspace}\rangle}\xspace - \ensuremath{\mathbb{E}\left[{{\kappa}\ensuremath{\langle {\ensuremath{\mathcal{P}_T\left( \ensuremath{\bm{{A}}_n}\xspace \right)}\xspace},{\bm{e}_l\ensuremath{\bm{{f}}^{\ast}}\xspace}\rangle}\xspace\ensuremath{\langle {\ensuremath{\widetilde{\bm{Y}}}\xspace_{k-1}},{\ensuremath{\bm{{A}}_n}\xspace}\rangle}\xspace}\right]}\xspace \right)^2. \nonumber
\end{eqnarray*}
To bound this quantity we use the scalar Bernstein inequality on each of the inner quantities 
\begin{gather*}
	\sum_{n\in\Gamma_k}X_n = \sum_{n\in\Gamma_k} {\kappa}\ensuremath{\langle {\ensuremath{\mathcal{P}_T\left( \ensuremath{\bm{{A}}_n}\xspace \right)}\xspace},{\bm{e}_l\ensuremath{\bm{{f}}^{\ast}}\xspace}\rangle}\xspace\ensuremath{\langle {\ensuremath{\widetilde{\bm{Y}}}\xspace_{k-1}},{\ensuremath{\bm{{A}}_n}\xspace}\rangle}\xspace - \ensuremath{\mathbb{E}\left[{{\kappa}\ensuremath{\langle {\ensuremath{\mathcal{P}_T\left( \ensuremath{\bm{{A}}_n}\xspace \right)}\xspace},{\bm{e}_l\ensuremath{\bm{{f}}^{\ast}}\xspace}\rangle}\xspace\ensuremath{\langle {\ensuremath{\widetilde{\bm{Y}}}\xspace_{k-1}},{\ensuremath{\bm{{A}}_n}\xspace}\rangle}\xspace}\right]}\xspace. \nonumber
\end{gather*}
As in the matrix Bernstein formulation, we require both the variance and Orlicz norm. First we find the variance,
\begin{eqnarray*}
	\sum_{n\in\Gamma_k} \ensuremath{\mathbb{E}\left[{X_n X_n^{\ast}}\right]}\xspace & = & {\kappa}^2\sum_{n\in\Gamma_k} \ensuremath{\mathbb{E}\left[{|\ensuremath{\langle {\ensuremath{\mathcal{P}_T\left( \ensuremath{\bm{{A}}_n}\xspace \right)}\xspace},{\bm{e}_l\ensuremath{\bm{{f}}^{\ast}}\xspace}\rangle}\xspace|^2|\ensuremath{\langle {\ensuremath{\widetilde{\bm{Y}}}\xspace_{k-1}},{\ensuremath{\bm{{A}}_n}\xspace}\rangle}\xspace|^2}\right]}\xspace \nonumber \\
        & & \qquad\qquad\qquad  - |\ensuremath{\mathbb{E}\left[{\ensuremath{\langle {\ensuremath{\mathcal{P}_T\left( \ensuremath{\bm{{A}}_n}\xspace \right)}\xspace},{\bm{e}_l\ensuremath{\bm{{f}}^{\ast}}\xspace}\rangle}\xspace\ensuremath{\langle {\ensuremath{\widetilde{\bm{Y}}}\xspace_{k-1}},{\ensuremath{\bm{{A}}_n}\xspace}\rangle}\xspace }\right]}\xspace|^2 \nonumber \\
        & \leq & {\kappa}^2\sum_{n\in\Gamma_k} \mathbb{E}\left[|\ensuremath{\langle {\ensuremath{\bm{Q}}\xspace\ensuremath{\bm{Q}^{\ast}}\xspace\ensuremath{\bm{{z}}_n}\xspace\ensuremath{\bm{{f}}^{\ast}_n}\xspace},{\bm{e}_l\ensuremath{\bm{{f}}^{\ast}}\xspace}\rangle}\xspace + \ensuremath{\langle {\ensuremath{\bm{{z}}_n}\xspace\ensuremath{\bm{{f}}^{\ast}_n}\xspace\ensuremath{\bm{V}}\xspace\ensuremath{\bm{V}}\xspace},{\bm{e}_l\ensuremath{\bm{{f}}^{\ast}}\xspace}\rangle}\xspace\right. \nonumber \\
        & & \qquad\qquad\qquad \left.+ \ensuremath{\langle {\ensuremath{\bm{Q}}\xspace\ensuremath{\bm{Q}^{\ast}}\xspace\ensuremath{\bm{{z}}_n}\xspace\ensuremath{\bm{{f}}^{\ast}_n}\xspace\ensuremath{\bm{V}}\xspace\ensuremath{\bm{V}^{\ast}}\xspace},{\bm{e}_l\ensuremath{\bm{{f}}^{\ast}}\xspace}\rangle}\xspace|^2|\ensuremath{\langle {\ensuremath{\widetilde{\bm{Y}}}\xspace_{k-1}},{\ensuremath{\bm{{A}}_n}\xspace}\rangle}\xspace|^2\right] \nonumber \\
	& \leq & {\kappa}^2\sum_{n\in\Gamma_k} \ensuremath{\mathbb{E}}\xspace \left[\left(|\ensuremath{\langle {\ensuremath{\bm{Q}}\xspace\ensuremath{\bm{Q}^{\ast}}\xspace\ensuremath{\bm{{z}}_n}\xspace\ensuremath{\bm{{f}}^{\ast}_n}\xspace},{\bm{e}_l\ensuremath{\bm{{f}}^{\ast}}\xspace}\rangle}\xspace|^2 + |\ensuremath{\langle {\ensuremath{\bm{{f}}^{\ast}_n}\xspace\ensuremath{\bm{V}}\xspace\ensuremath{\bm{V}^{\ast}}\xspace},{\ensuremath{\bm{{f}}^{\ast}}\xspace}\rangle}\xspace z_n[l]|^2 \right.  \right. \nonumber \\
& &  \qquad\qquad\qquad \left. \left. + |\ensuremath{\langle {\ensuremath{\bm{Q}}\xspace\ensuremath{\bm{Q}^{\ast}}\xspace\ensuremath{\bm{{z}}_n}\xspace\ensuremath{\bm{{f}}^{\ast}_n}\xspace\ensuremath{\bm{V}}\xspace\ensuremath{\bm{V}^{\ast}}\xspace},{\bm{e}_l\ensuremath{\bm{{f}}^{\ast}}\xspace}\rangle}\xspace|^2\right)| \ensuremath{\langle {\ensuremath{\widetilde{\bm{Y}}}\xspace_{k-1}},{\ensuremath{\bm{{A}}_n}\xspace}\rangle}\xspace|^2 \right]. \nonumber
\end{eqnarray*}

This sum consists of three terms. The first of which can be bounded using Lemma~\ref{lem:lemma3},
\begin{eqnarray*}
	& & \sum_{n\in\Gamma_k}\ensuremath{\mathbb{E}\left[{|\ensuremath{\langle {\ensuremath{\bm{Q}}\xspace\ensuremath{\bm{Q}^{\ast}}\xspace\ensuremath{\bm{{z}}_n}\xspace\ensuremath{\bm{{f}}^{\ast}_n}\xspace},{\bm{e}_l\ensuremath{\bm{{f}}^{\ast}}\xspace}\rangle}\xspace|^2|\ensuremath{\langle {\ensuremath{\widetilde{\bm{Y}}}\xspace_{k-1}},{\ensuremath{\bm{{A}}_n}\xspace}\rangle}\xspace|^2}\right]}\xspace  \nonumber \\
        & & \qquad\qquad\qquad =  \sum_{n\in\Gamma_k}\ensuremath{\mathbb{E}\left[{|\ensuremath{\bm{{f}}^{\ast}_n}\xspace\ensuremath{\bm{{f}}}\xspace\ensuremath{\langle {\bm{q}_l},{\ensuremath{\bm{Q}^{\ast}}\xspace\ensuremath{\bm{{z}}_n}\xspace}\rangle}\xspace|^2|\ensuremath{\langle {\ensuremath{\widetilde{\bm{Y}}}\xspace_{k-1}},{\ensuremath{\bm{{A}}_n}\xspace}\rangle}\xspace|^2}\right]}\xspace \nonumber \\
	& & \qquad\qquad\qquad\leq  3\sum_{n\in\Gamma_k}\ensuremath{\mathbb{E}\left[{\ensuremath{\bm{{f}}^{\ast}}\xspace\ensuremath{\bm{{f}}_n}\xspace\ensuremath{\bm{{f}}^{\ast}_n}\xspace\ensuremath{\bm{{f}}}\xspace\bm{q}_l^{\ast}\ensuremath{\bm{Q}^{\ast}}\xspace\ensuremath{\left|\left| {\ensuremath{\widetilde{\bm{Y}}}\xspace_{k-1}\ensuremath{\bm{{f}}_n}\xspace} \right|\right|}\xspace_2^2\bm{I}_\ensuremath{L}\xspace\ensuremath{\bm{Q}}\xspace\bm{q}_l}\right]}\xspace \nonumber \\
	& & \qquad\qquad \qquad\leq  \frac{3\ensuremath{R}\xspace\mu_{k-1}^2}{\ensuremath{M}\xspace^3}\ensuremath{\left|\left| {\bm{q}_l} \right|\right|}\xspace_2^2\sum_{n\in\Gamma_k}\ensuremath{\bm{{f}}^{\ast}}\xspace\ensuremath{\mathbb{E}\left[{\ensuremath{\bm{{f}}_n}\xspace\ensuremath{\bm{{f}}^{\ast}_n}\xspace}\right]}\xspace\ensuremath{\bm{{f}}}\xspace \nonumber \\
	& & \qquad\qquad\qquad = \frac{3\ensuremath{N}\xspace\ensuremath{R}\xspace\mu_{k-1}^2}{{\kappa}\ensuremath{M}\xspace^2}\ensuremath{\left|\left| {\bm{q}_l} \right|\right|}\xspace_2^2. \nonumber
\end{eqnarray*}

For the second term we have 
\begin{eqnarray*}
	& & \sum_{n\in\Gamma_k}\ensuremath{\mathbb{E}\left[{|\ensuremath{\langle {\ensuremath{\bm{{f}}^{\ast}_n}\xspace\ensuremath{\bm{V}}\xspace\ensuremath{\bm{V}^{\ast}}\xspace},{\ensuremath{\bm{{f}}^{\ast}}\xspace}\rangle}\xspace|^2|z_n[l]|^2|\ensuremath{\langle {\ensuremath{\widetilde{\bm{Y}}}\xspace_{k-1}},{\ensuremath{\bm{{A}}_n}\xspace}\rangle}\xspace|^2}\right]}\xspace  \nonumber \\
        & & \qquad\qquad\qquad = \sum_{n\in\Gamma_k}\ensuremath{\mathbb{E}\left[{|\ensuremath{\langle {\ensuremath{\bm{V}^{\ast}}\xspace\ensuremath{\bm{{f}}_n}\xspace},{\ensuremath{\bm{V}^{\ast}}\xspace\ensuremath{\bm{{f}}}\xspace}\rangle}\xspace|^2|z_n[l]|^2|\ensuremath{\langle {\ensuremath{\widetilde{\bm{Y}}}\xspace_{k-1}},{\ensuremath{\bm{{A}}_n}\xspace}\rangle}\xspace|^2}\right]}\xspace \nonumber \\
	& & \qquad\qquad\qquad = \sum_{n\in\Gamma_k}\ensuremath{\mathbb{E}\left[{\ensuremath{\bm{{f}}^{\ast}}\xspace\ensuremath{\bm{V}}\xspace\ensuremath{\bm{V}^{\ast}}\xspace\ensuremath{\bm{{f}}_n}\xspace\ensuremath{\bm{{f}}^{\ast}_n}\xspace\ensuremath{\bm{V}}\xspace\ensuremath{\bm{V}^{\ast}}\xspace\ensuremath{\bm{{f}}}\xspace|\ensuremath{\langle {\ensuremath{\widetilde{\bm{Y}}}\xspace_{k-1}},{z_n[l]\ensuremath{\bm{{z}}_n}\xspace\ensuremath{\bm{{f}}^{\ast}_n}\xspace}\rangle}\xspace|^2}\right]}\xspace. \nonumber 
\end{eqnarray*}
Using the fact that $|z_n[l]|^2 = \bm{e}_l^{\ast}\ensuremath{\bm{{z}}_n}\xspace\ensuremath{\bm{{z}}^{\ast}_n}\xspace\bm{e}_l$ and Lemma~\ref{lem:lemma3}, we obtain
\begin{eqnarray*}
	& & \sum_{n\in\Gamma_k}\ensuremath{\mathbb{E}\left[{|\ensuremath{\langle {\ensuremath{\bm{{f}}^{\ast}_n}\xspace\ensuremath{\bm{V}}\xspace\ensuremath{\bm{V}^{\ast}}\xspace},{\ensuremath{\bm{{f}}^{\ast}}\xspace}\rangle}\xspace|^2|z_n[l]|^2|\ensuremath{\langle {\ensuremath{\widetilde{\bm{Y}}}\xspace_{k-1}},{\ensuremath{\bm{{A}}_n}\xspace}\rangle}\xspace|^2}\right]}\xspace  \nonumber \\
        & & \qquad\qquad\qquad \leq \frac{3}{\ensuremath{M}\xspace^2}\sum_{n\in\Gamma_k}\ensuremath{\mathbb{E}\left[{\ensuremath{\bm{{f}}^{\ast}}\xspace\ensuremath{\bm{V}}\xspace\ensuremath{\bm{V}^{\ast}}\xspace\ensuremath{\bm{{f}}_n}\xspace\ensuremath{\bm{{f}}^{\ast}_n}\xspace\ensuremath{\bm{V}}\xspace\ensuremath{\bm{V}^{\ast}}\xspace\ensuremath{\bm{{f}}}\xspace\bm{e}_l^{\ast}\ensuremath{\left|\left| {\ensuremath{\widetilde{\bm{Y}}}\xspace_{k-1}\ensuremath{\bm{{f}}_n}\xspace} \right|\right|}\xspace_2^2\bm{I}_\ensuremath{L}\xspace\bm{e}_l}\right]}\xspace \nonumber \\
	& & \qquad\qquad\qquad \leq \frac{3\ensuremath{R}\xspace\mu_{k-1}^2}{\ensuremath{M}\xspace^2}\sum_{n\in\Gamma_k}\ensuremath{\mathbb{E}\left[{\ensuremath{\bm{{f}}^{\ast}}\xspace\ensuremath{\bm{V}}\xspace\ensuremath{\bm{V}^{\ast}}\xspace\ensuremath{\bm{{f}}_n}\xspace\ensuremath{\bm{{f}}^{\ast}_n}\xspace\ensuremath{\bm{V}}\xspace\ensuremath{\bm{V}^{\ast}}\xspace\ensuremath{\bm{{f}}}\xspace}\right]}\xspace \nonumber \\
	& & \qquad\qquad\qquad \leq \frac{3\ensuremath{R}\xspace\mu_{k-1}^2}{\ensuremath{M}\xspace^2}|\Gamma_k|\ensuremath{\left|\left| {\ensuremath{\bm{V}^{\ast}}\xspace\ensuremath{\bm{{f}}}\xspace} \right|\right|}\xspace_2^2 \nonumber \\
	& & \qquad\qquad\qquad \leq \frac{3\ensuremath{R}\xspace^2\mu_{k-1}^2\mu_0^2}{{\kappa}\ensuremath{M}\xspace^1}. \nonumber \\
\end{eqnarray*}

Finally, for the third term, we have
\begin{eqnarray*}
	& & \sum_{n\in\Gamma_k}\ensuremath{\mathbb{E}\left[{|\ensuremath{\langle {\ensuremath{\bm{Q}}\xspace\ensuremath{\bm{Q}^{\ast}}\xspace\ensuremath{\bm{{z}}_n}\xspace\ensuremath{\bm{{f}}^{\ast}_n}\xspace\ensuremath{\bm{V}}\xspace\ensuremath{\bm{V}^{\ast}}\xspace},{\bm{e}_l\ensuremath{\bm{{f}}^{\ast}}\xspace}\rangle}\xspace|^2|\ensuremath{\langle {\ensuremath{\widetilde{\bm{Y}}}\xspace_{k-1}},{\ensuremath{\bm{{A}}_n}\xspace}\rangle}\xspace|^2}\right]}\xspace  \nonumber \\
        & & \qquad\qquad\qquad = \sum_{n\in\Gamma_k}\ensuremath{\mathbb{E}\left[{|\ensuremath{\langle {\ensuremath{\bm{V}^{\ast}}\xspace\ensuremath{\bm{{f}}_n}\xspace},{\ensuremath{\bm{V}^{\ast}}\xspace\ensuremath{\bm{{f}}}\xspace}\rangle}\xspace|^2|\ensuremath{\langle {\bm{q}_l},{\ensuremath{\bm{Q}^{\ast}}\xspace\ensuremath{\bm{{z}}_n}\xspace}\rangle}\xspace|^2|\ensuremath{\langle {\ensuremath{\widetilde{\bm{Y}}}\xspace_{k-1}},{\ensuremath{\bm{{A}}_n}\xspace}\rangle}\xspace|^2}\right]}\xspace \nonumber \\
	& & \qquad\qquad\qquad = \sum_{n\in\Gamma_k}\ensuremath{\mathbb{E}\left[{\ensuremath{\bm{{f}}^{\ast}}\xspace\ensuremath{\bm{V}}\xspace\ensuremath{\bm{V}^{\ast}}\xspace\ensuremath{\bm{{f}}_n}\xspace\ensuremath{\bm{{f}}^{\ast}_n}\xspace\ensuremath{\bm{V}}\xspace\ensuremath{\bm{V}^{\ast}}\xspace\ensuremath{\bm{{f}}}\xspace|\ensuremath{\langle {\bm{q}_l},{\ensuremath{\bm{Q}^{\ast}}\xspace\ensuremath{\bm{{z}}_n}\xspace}\rangle}\xspace|^2|\ensuremath{\langle {\ensuremath{\widetilde{\bm{Y}}}\xspace_{k-1}},{\ensuremath{\bm{{A}}_n}\xspace}\rangle}\xspace|^2}\right]}\xspace \nonumber \\
	& & \qquad\qquad\qquad \leq \sum_{n\in\Gamma_k}\ensuremath{\mathbb{E}\left[{\ensuremath{\left|\left| {\ensuremath{\bm{V}^{\ast}}\xspace\ensuremath{\bm{{f}}}\xspace} \right|\right|}\xspace_2^2\bm{q}_l^{\ast}\ensuremath{\bm{Q}^{\ast}}\xspace\ensuremath{\bm{{z}}_n}\xspace\ensuremath{\bm{{z}}^{\ast}_n}\xspace|\ensuremath{\langle {\ensuremath{\widetilde{\bm{Y}}}\xspace_{k-1}},{\ensuremath{\bm{{A}}_n}\xspace}\rangle}\xspace|^2\ensuremath{\bm{Q}}\xspace\bm{q}_l  }\right]}\xspace \nonumber \\
	& & \qquad\qquad\qquad \leq \frac{3}{\ensuremath{M}\xspace^2}\ensuremath{\left|\left| {\bm{q}_l} \right|\right|}\xspace_2^2\ensuremath{\left|\left| {\ensuremath{\bm{V}^{\ast}}\xspace\ensuremath{\bm{{f}}}\xspace} \right|\right|}\xspace_2^2\sum_{n\in\Gamma_k}\ensuremath{\mathbb{E}\left[{\ensuremath{\left|\left| {\ensuremath{\widetilde{\bm{Y}}}\xspace_{k-1}\ensuremath{\bm{{f}}_n}\xspace} \right|\right|}\xspace_2^2  }\right]}\xspace \nonumber \\
	& & \qquad\qquad\qquad\leq \frac{3\ensuremath{R}\xspace^2\mu_0^2\mu_{k-1}^2}{{\kappa}\ensuremath{M}\xspace}\ensuremath{\left|\left| {\bm{q}_l} \right|\right|}\xspace_2^2. \nonumber
\end{eqnarray*}

Summing the three bounds and using $\ensuremath{\left|\left| {\bm{q}_l} \right|\right|}\xspace \leq 1$ yields
\begin{gather*}
	\sigma_X^2 \leq 9{\kappa}\left(\frac{\ensuremath{N}\xspace\ensuremath{R}\xspace\mu_{k-1}^2}{\ensuremath{M}\xspace^2}\ensuremath{\left|\left| {\bm{q}_l} \right|\right|}\xspace_2^2 + 2\frac{\ensuremath{R}\xspace^2\mu_0^2\mu_{k-1}^2}{\ensuremath{M}\xspace}\right). \nonumber
\end{gather*}

To use the Bernstein inequality it remains to find the Orlicz-1 norm of $X_n$. From Lemma~\ref{lem:lemma2} we have
\begin{gather*}
	\ensuremath{\left|\left| {\ensuremath{\langle {\ensuremath{\widetilde{\bm{Y}}}\xspace_{k-1}},{\ensuremath{\bm{{A}}_n}\xspace}\rangle}\xspace} \right|\right|}\xspace_{\psi_2}^2 = \ensuremath{\left|\left| {\ensuremath{\bm{{z}}^{\ast}_n}\xspace\ensuremath{\widetilde{\bm{Y}}}\xspace_{k-1}\ensuremath{\bm{{f}}_n}\xspace} \right|\right|}\xspace_{\psi_2}^2  \leq  \ensuremath{\left|\left| {\ensuremath{\left|\left| {\ensuremath{\bm{Q}^{\ast}}\xspace\ensuremath{\bm{{z}}_n}\xspace} \right|\right|}\xspace_2\ensuremath{\left|\left| {\bm{\Lambda_{k-1}}\ensuremath{\bm{V}^{\ast}}\xspace\ensuremath{\bm{{f}}_n}\xspace} \right|\right|}\xspace_2} \right|\right|}\xspace_{\psi_2}^2  \leq c\frac{\ensuremath{R}\xspace\mu^2_{k-1}}{\ensuremath{M}\xspace}. \nonumber
\end{gather*}

For the first term we have 
\begin{eqnarray*}
	\ensuremath{\left|\left| {\ensuremath{\bm{{f}}^{\ast}_n}\xspace\ensuremath{\bm{{f}}}\xspace\ensuremath{\langle {\bm{q}_l},{\ensuremath{\bm{Q}^{\ast}}\xspace\ensuremath{\bm{{z}}_n}\xspace}\rangle}\xspace} \right|\right|}\xspace_{\psi_2}^2 & \leq & \ensuremath{\left|\left| {\ensuremath{\left|\left| {\ensuremath{\bm{{f}}^{\ast}_n}\xspace} \right|\right|}\xspace_2\ensuremath{\left|\left| {\ensuremath{\bm{{f}}}\xspace} \right|\right|}\xspace_2|\ensuremath{\langle {\bm{q}_l},{\ensuremath{\bm{Q}^{\ast}}\xspace\ensuremath{\bm{{z}}_n}\xspace}\rangle}\xspace} \right|\right|}\xspace_{\psi_2}^2 \nonumber \\
	& \leq & \ensuremath{N}\xspace^2\ensuremath{\left|\left| {\bm{q}_l^{\ast}\ensuremath{\bm{Q}^{\ast}}\xspace\ensuremath{\bm{{z}}_n}\xspace} \right|\right|}\xspace_{\psi_2}^2 \nonumber \\
	& \leq & c\frac{\ensuremath{N}\xspace^2\ensuremath{\left|\left| {\bm{q}_l} \right|\right|}\xspace_2^2}{\ensuremath{M}\xspace^2}. \nonumber
\end{eqnarray*}

For the second term we have 
\begin{gather*}
	\ensuremath{\left|\left| {\ensuremath{\langle {\ensuremath{\bm{V}^{\ast}}\xspace\ensuremath{\bm{{f}}_n}\xspace},{\ensuremath{\bm{V}^{\ast}}\xspace\ensuremath{\bm{{f}}}\xspace}\rangle}\xspace z_n[l]} \right|\right|}\xspace_{\psi_2}^2 \leq \ensuremath{\left|\left| {\ensuremath{\left|\left| {\ensuremath{\bm{V}^{\ast}}\xspace\ensuremath{\bm{{f}}_n}\xspace} \right|\right|}\xspace_2\ensuremath{\left|\left| {\ensuremath{\bm{V}^{\ast}}\xspace\ensuremath{\bm{{f}}}\xspace} \right|\right|}\xspace_2z_n[l]} \right|\right|}\xspace_{\psi_2}^2 \leq c\frac{\ensuremath{R}\xspace^2}{\ensuremath{M}\xspace}\mu_0^4. \nonumber 
\end{gather*}
Similarly, for the final term we have 
\begin{gather*}
	\ensuremath{\left|\left| {\ensuremath{\langle {\ensuremath{\bm{V}^{\ast}}\xspace\ensuremath{\bm{{f}}_n}\xspace},{\ensuremath{\bm{V}^{\ast}}\xspace\ensuremath{\bm{{f}}}\xspace}\rangle}\xspace\ensuremath{\langle {\bm{q}_l},{\ensuremath{\bm{Q}^{\ast}}\xspace\ensuremath{\bm{{z}}_n}\xspace}\rangle}\xspace} \right|\right|}\xspace_{\psi_2}^2 \leq \ensuremath{\left|\left| {\ensuremath{\left|\left| {\ensuremath{\bm{V}^{\ast}}\xspace\ensuremath{\bm{{f}}_n}\xspace} \right|\right|}\xspace_2\ensuremath{\left|\left| {\ensuremath{\bm{V}^{\ast}}\xspace\ensuremath{\bm{{f}}}\xspace} \right|\right|}\xspace_2 \bm{q}_l^{\ast}\ensuremath{\bm{Q}^{\ast}}\xspace\ensuremath{\bm{{z}}_n}\xspace} \right|\right|}\xspace_{\psi_2}^2 \leq c\frac{\ensuremath{R}\xspace^2}{\ensuremath{M}\xspace}\mu_0^4\ensuremath{\left|\left| {\bm{q}_l} \right|\right|}\xspace_2^2. \nonumber 
\end{gather*}

Now we can calculate the total Orlicz norm as
\begin{eqnarray*}
	\ensuremath{\left|\left| {X_n} \right|\right|}\xspace_{\psi_1}^2 & \leq & c{\kappa}^2\frac{\ensuremath{R}\xspace\mu^2_{k-1}}{\ensuremath{M}\xspace}\left(\frac{\ensuremath{N}\xspace^2\ensuremath{\left|\left| {\bm{q}_l} \right|\right|}\xspace_2^2}{\ensuremath{M}\xspace^2} + \frac{\ensuremath{R}\xspace^2}{\ensuremath{M}\xspace}\mu_0^4 + \frac{\ensuremath{R}\xspace^2}{\ensuremath{M}\xspace}\mu_0^4\ensuremath{\left|\left| {\bm{q}_l} \right|\right|}\xspace_2^2\right)  \nonumber \\
	 & \leq & c\frac{{\kappa}^2\ensuremath{R}\xspace\ensuremath{N}\xspace^2}{\ensuremath{M}\xspace^3}\ensuremath{\left|\left| {\bm{q}_l} \right|\right|}\xspace_2^2\mu^2_{k-1} + c\frac{2{\kappa}^2\ensuremath{R}\xspace^3}{\ensuremath{M}\xspace^2}\mu_0^4\mu_{k-1}^2.  \nonumber 
\end{eqnarray*}

Since we wish to bound the square of the sum of terms, we calculate the square values of the two terms in the Bernstein inequality. The first term is bounded by 
\begin{gather*}
        t\sigma_X^2 \leq c\beta{\kappa}\frac{\ensuremath{R}\xspace}{\ensuremath{M}\xspace}\mu_{k-1}^2\left(\frac{\ensuremath{N}\xspace}{\ensuremath{M}\xspace}\ensuremath{\left|\left| {\bm{q}_l} \right|\right|}\xspace_2^2 + 2\ensuremath{R}\xspace\mu_0^2\right)\log(\ensuremath{L}\xspace\ensuremath{N}\xspace), \nonumber
\end{gather*}
and the second term is bounded by
\begin{gather*}
	t^2U_{\alpha}^2\log^2\left(\frac{|\Gamma_k|U_{\alpha}^2}{\sigma_X^2}\right) \leq t^2U_{\alpha}^2\log^2\left(c\frac{|\Gamma_k|\ensuremath{M}\xspace{\kappa}^2\ensuremath{R}\xspace\mu^2_{k-1}\left(\frac{\ensuremath{N}\xspace^2}{\ensuremath{M}\xspace}\ensuremath{\left|\left| {\bm{q}_l} \right|\right|}\xspace_2^2 + 2\ensuremath{R}\xspace^2\mu_0^4 \right)}{\ensuremath{M}\xspace^2{\kappa}\ensuremath{R}\xspace\mu_{k-1}^2\left(\frac{\ensuremath{N}\xspace}{\ensuremath{M}\xspace}\ensuremath{\left|\left| {\bm{q}_l} \right|\right|}\xspace_2^2 + 2\ensuremath{R}\xspace\mu_0^2\right) }\right) 
\end{gather*}
\begin{eqnarray*}
	& \leq & t^2c{\kappa}^2\frac{\ensuremath{R}\xspace}{\ensuremath{M}\xspace^2}\mu_{k-1}^2\left(\frac{\ensuremath{N}\xspace^2}{\ensuremath{M}\xspace}\ensuremath{\left|\left| {\bm{q}_l} \right|\right|}\xspace_2^2 + 2\ensuremath{R}\xspace^2\mu_0^4 \right)\log^2\left(c\frac{\ensuremath{N}\xspace^2\ensuremath{\left|\left| {\bm{q}_l} \right|\right|}\xspace_2^2 + 2\ensuremath{M}\xspace\ensuremath{R}\xspace^2\mu_0^4 }{\ensuremath{N}\xspace\ensuremath{\left|\left| {\bm{q}_l} \right|\right|}\xspace_2^2 + 2\ensuremath{R}\xspace\ensuremath{M}\xspace\mu_0^2 }\right) \nonumber \\
	& \leq & c\beta^2{\kappa}^2\frac{\ensuremath{R}\xspace}{\ensuremath{M}\xspace^2}\mu_{k-1}^2\left(\frac{\ensuremath{N}\xspace^2}{\ensuremath{M}\xspace}\ensuremath{\left|\left| {\bm{q}_l} \right|\right|}\xspace_2^2 + 2\ensuremath{R}\xspace^2\mu_0^4 \right)\log^4\left(\ensuremath{L}\xspace\ensuremath{N}\xspace \right), \nonumber
\end{eqnarray*}
where the last step assumes $\ensuremath{L}\xspace > 1$ and $\ensuremath{L}\xspace\ensuremath{N}\xspace > \ensuremath{R}\xspace\mu_0^4$.
Each summand is then bounded by the maximum of these two quantities with probability $1-O(|\Gamma_k|(\ensuremath{L}\xspace\ensuremath{N}\xspace)^{-\beta})$, the $|\Gamma_k|$ term coming from the union bound over all terms in each inner sum. 

Using this bound on each summand, we obtain the total bound by taking a union bound, summing over $l \in [1,\cdots,\ensuremath{L}\xspace]$, and dividing by $\ensuremath{R}\xspace$, yielding a bound of the maximum of
\begin{gather*}
	t\sigma_X^2 \leq c\beta{\kappa}\frac{\ensuremath{R}\xspace}{\ensuremath{M}\xspace}\mu_{k-1}^2\left(\frac{\ensuremath{N}\xspace}{\ensuremath{M}\xspace} + 2\ensuremath{L}\xspace\mu_0^4\right)\log(\ensuremath{L}\xspace\ensuremath{N}\xspace), \nonumber
\end{gather*}
and
\begin{gather*}
	 t^2U_{\alpha}^2\log^2\left(\frac{|\Gamma_k|U_{\alpha}^2}{\sigma_X^2}\right) \leq  \beta^2c{\kappa}^2\frac{\ensuremath{R}\xspace}{\ensuremath{M}\xspace^2}\mu_{k-1}^2\left(\frac{\ensuremath{N}\xspace}{\ensuremath{M}\xspace} + 2\ensuremath{R}\xspace\ensuremath{L}\xspace\mu_0^2 \right)\log^4\left(\ensuremath{L}\xspace\ensuremath{N}\xspace \right), \nonumber
\end{gather*}
with probability $1 - O(\ensuremath{M}\xspace(\ensuremath{N}\xspace\ensuremath{L}\xspace)^{-\beta})$. 
To complete the proof, we note that if we have
\begin{gather*}
	\ensuremath{M}\xspace \geq c\beta{\kappa}\ensuremath{R}\xspace\left(\ensuremath{N}\xspace + \ensuremath{L}\xspace\mu_0^2\right)\log^2(\ensuremath{L}\xspace\ensuremath{N}\xspace), \nonumber
\end{gather*}
then both terms in this bound are less than $\mu_{k-1}^2$. 
\end{proof}



\bibliographystyle{natbib}           

\begin{thebibliography}{80}
\providecommand{\natexlab}[1]{#1}
\providecommand{\url}[1]{\texttt{#1}}
\expandafter\ifx\csname urlstyle\endcsname\relax
  \providecommand{\doi}[1]{doi: #1}\else
  \providecommand{\doi}{doi: \begingroup \urlstyle{rm}\Url}\fi

\bibitem[Aharon et~al.()Aharon, Elad, Bruckstein, and Katz]{ELA:KSVD}
M.~Aharon, M.~Elad, A.~Bruckstein, and Y.~Katz.
\newblock {K-SVD}: {A}n algorithm for designing of overcomplete dictionaries
  for sparse representations.
\newblock \emph{IEEE Transactions on Signal Processing}, 54\penalty0
(11):\penalty0 4311--4322, 2006.

\bibitem[Ahmed and Romberg(2015)]{ahmed2013compressive}
A.~Ahmed and J.~Romberg.
\newblock Compressive multiplexing of correlated signals.
\newblock \emph{IEEE Transactions on Information Theory}, 61\penalty0
  (1):\penalty0 479--498, 2015.

\bibitem[Ahmed et~al.(2014)Ahmed, Recht, and Romberg]{ahmed2014blind}
A.~Ahmed, B.~Recht, and J.~Romberg.
\newblock Blind deconvolution using convex programming.
\newblock \emph{IEEE Transactions on Information Theory}, 60\penalty0
  (3):\penalty0 1711--1732, 2014.

\bibitem[Amari(1972)]{amari1972characteristics}
S.-I. Amari.
\newblock Characteristics of random nets of analog neuron-like elements.
\newblock \emph{IEEE Transactions on Systems, Man and Cybernetics}, \penalty0
  (5):\penalty0 643--657, 1972.

\bibitem[Apthorpe et~al.(2016)Apthorpe, Riordan, Aguilar, Homann, Gu, Tank, and
  Seung]{apthorpe2016automatic}
N.~J. Apthorpe, A.~J. Riordan, R.~E. Aguilar, J.~Homann, Y.~Gu, D.~W. Tank, and
  S.~H. Seung.
\newblock Automatic neuron detection in calcium imaging data using
  convolutional networks.
\newblock \emph{Advances in neural information processing systems (NIPS)},
  pages 3270--3278, Dec. 2016.

\bibitem[Balavoine et~al.(2012)Balavoine, Romberg, and Rozell]{BAL:2012}
A.~Balavoine, J.~Romberg, and C.~J. Rozell.
\newblock Convergence and rate analysis of neural networks for sparse
  approximation.
\newblock \emph{IEEE Transactions on Neural Networks and Learning Systems},
  23:\penalty0 1377--1389, Sept. 2012.

\bibitem[Bashivan et~al.(2016)Bashivan, Rish, Yeasin, and
  Codella]{bashivan2016learningeeg}
P.~Bashivan, I.~Rish, M.~Yeasin, and N.~Codella.
\newblock Learning representations from eeg with deep recurrent-convolutional
  neural networks.
\newblock \emph{International Conference on Learning Representations}, 2016.

\bibitem[Ber{\'e}nyi et~al.(2014)Ber{\'e}nyi, Somogyv{\'a}ri, Nagy, Roux, Long,
  Fujisawa, Stark, Leonardo, Harris, and Buzs{\'a}ki]{berenyi2014large}
A.~Ber{\'e}nyi, Z.~Somogyv{\'a}ri, A.~J. Nagy, L.~Roux, J.~D. Long,
  S.~Fujisawa, E.~Stark, A.~Leonardo, T.~D. Harris, and G.~Buzs{\'a}ki.
\newblock Large-scale, high-density (up to 512 channels) recording of local
  circuits in behaving animals.
\newblock \emph{Journal of neurophysiology}, 111\penalty0 (5):\penalty0
  1132--1149, 2014.

\bibitem[Buehner and Young(2006)]{buehner2006tighter}
M.~Buehner and P.~Young.
\newblock A tighter bound for the echo state property.
\newblock \emph{IEEE Transactions on Neural Networks}, 17\penalty0
  (3):\penalty0 820--824, 2006.

\bibitem[Buonomano and Maass(2009)]{buonomano2009state}
D.~V Buonomano and W.~Maass.
\newblock State-dependent computations: spatiotemporal processing in cortical
  networks.
\newblock \emph{Nature Reviews Neuroscience}, 10\penalty0 (2):\penalty0
  113--125, 2009.

\bibitem[B{\"u}sing et~al.(2010)B{\"u}sing, Schrauwen, and
  Legenstein]{busing2010connectivity}
L.~B{\"u}sing, B.~Schrauwen, and R.~Legenstein.
\newblock Connectivity, dynamics, and memory in reservoir computing with binary
  and analog neurons.
\newblock \emph{Neural computation}, 22\penalty0 (5):\penalty0 1272--1311,
  2010.

\bibitem[Cand{\`e}s and Plan(2010)]{candes2010matrix}
E.~Cand{\`e}s and Y.~Plan.
\newblock Matrix completion with noise.
\newblock \emph{Proceedings of the IEEE}, 98\penalty0 (6):\penalty0 925--936,
  2010.

\bibitem[Cand{\`e}s and Plan(2011{\natexlab{a}})]{candes2011probabilistic}
E.~J. Cand{\`e}s and Y.~Plan.
\newblock A probabilistic and ripless theory of compressed sensing.
\newblock \emph{IEEE Transactions on Information Theory}, 57\penalty0
  (11):\penalty0 7235--7254, 2011{\natexlab{a}}.

\bibitem[Cand{\`e}s and Plan(2011{\natexlab{b}})]{candes2011tight}
E.~J. Cand{\`e}s and Y.~Plan.
\newblock Tight oracle inequalities for low-rank matrix recovery from a minimal
  number of noisy random measurements.
\newblock \emph{IEEE Transactions on Information Theory}, 57\penalty0
  (4):\penalty0 2342--2359, 2011{\natexlab{b}}.

\bibitem[Cand{\`e}s and Tao(2010)]{candes2010power}
E.~J. Cand{\`e}s and T.~Tao.
\newblock The power of convex relaxation: Near-optimal matrix completion.
\newblock \emph{IEEE Transactions on Information Theory}, 56\penalty0
  (5):\penalty0 2053--2080, 2010.

\bibitem[Cand{\`e}s et~al.(2006)Cand{\`e}s, Romberg, and T.]{CAN:2006b}
E.~J. Cand{\`e}s, J.~Romberg, and Tao T.
\newblock Robust uncertainty principles: exact signal reconstruction from
  highly incomplete frequency information.
\newblock \emph{IEEE Transactions on Information Theory}, 52\penalty0
  (2):\penalty0 489--509, 2006.

\bibitem[Charles et~al.(2012)Charles, Garrigues, and Rozell]{charles2012lca}
A.~S. Charles, P.~Garrigues, and C.~J. Rozell.
\newblock A common network architecture efficiently implements a variety of
  sparsity-based inference problems.
\newblock \emph{Neural Computation}, 24\penalty0 (12):\penalty0 3317--3339,
  2012.

\bibitem[Charles et~al.(2014)Charles, Yap, and Rozell]{charles2014short}
A.~S. Charles, H.~L. Yap, and C.~J. Rozell.
\newblock Short-term memory capacity in networks via the restricted isometry
  property.
\newblock \emph{Neural computation}, 26\penalty0 (6):\penalty0 1198--1235,
  2014.

\bibitem[Chen and Suter(2004)]{chen2004recovering}
P.~Chen and D.~Suter.
\newblock Recovering the missing components in a large noisy low-rank matrix:
  Application to sfm.
\newblock \emph{IEEE Transactions on Pattern Analysis and Machine
  Intelligence}, 26\penalty0 (8):\penalty0 1051--1063, 2004.

\bibitem[Chen et~al.(2014)Chen, Lin, Zhao, Wang, and Gu]{chen2014deep}
Y.~Chen, Z.~Lin, X.~Zhao, G.~Wang, and Y.~Gu.
\newblock Deep learning-based classification of hyperspectral data.
\newblock \emph{IEEE Journal of Selected Topics in Applied Earth Observations
  and Remote Sensing}, 7\penalty0 (6):\penalty0 2094--2107, 2014.

\bibitem[Chen et~al.(2016)Chen, Jiang, Li, Jia, and Ghamisi]{chen2016deep}
Y.~Chen, H.~Jiang, C.~Li, X.~Jia, and P.~Ghamisi.
\newblock Deep feature extraction and classification of hyperspectral images
  based on convolutional neural networks.
\newblock \emph{IEEE Transactions on Geoscience and Remote Sensing},
  54\penalty0 (10):\penalty0 6232--6251, 2016.

\bibitem[Chiang et~al.(2013)Chiang, Liu, Duan, and Lai]{chiang2013learning}
C.-K. Chiang, C.-H. Liu, C.-H. Duan, and S.-H. Lai.
\newblock Learning component-level sparse representation for image and video
  categorization.
\newblock \emph{IEEE Transactions on Image Processing}, 22\penalty0
  (12):\penalty0 4775--4787, 2013.

\bibitem[Christopoulos et~al.(2000)Christopoulos, Skodras, and
  Ebrahimi]{christopoulos2000jpeg2000}
C.~Christopoulos, A.~Skodras, and T.~Ebrahimi.
\newblock The jpeg2000 still image coding system: an overview.
\newblock \emph{IEEE transactions on consumer electronics}, 46\penalty0
  (4):\penalty0 1103--1127, 2000.

\bibitem[Davies and Daudet(2006)]{DAV:2006}
M.~E. Davies and L.~Daudet.
\newblock Sparse audio representations using the mclt.
\newblock \emph{Signal processing}, 86\penalty0 (3):\penalty0 457--470, 2006.

\bibitem[Denk et~al.(1990)Denk, Strickler, and Webb]{RN2}
W.~Denk, J.~H. Strickler, and W.~W. Webb.
\newblock Two-photon laser scanning fluorescence microscopy.
\newblock \emph{Science}, 248\penalty0 (4951):\penalty0 73--6, 1990.

\bibitem[Donahue et~al.(2015)Donahue, Hendricks, Guadarrama, Rohrbach,
  Venugopalan, Saenko, and Darrell]{donahue2015long}
J.~Donahue, L.~A. Hendricks, S.~Guadarrama, M.~Rohrbach, S.~Venugopalan,
  K.~Saenko, and T.~Darrell.
\newblock Long-term recurrent convolutional networks for visual recognition and
  description.
\newblock In \emph{IEEE Conference on Computer Vision and
  Pattern Recognition}, pages 2625--2634, 2015.

\bibitem[Elad et~al.(2010)Elad, Figueiredo, and Ma]{ELA:2008}
M.~Elad, M.~A.~T. Figueiredo, and Y.~Ma.
\newblock On the role of sparse and redundant representations in image
  processing.
  \newblock \emph{Proceedings of the IEEE}, 98\penalty0
  (6)\penalty0 972--982, 2010.

\bibitem[Faugeras et~al.(2009)Faugeras, Touboul, and
  Cessac]{faugeras2009constructive}
O.~Faugeras, J.~Touboul, and B.~Cessac.
\newblock A constructive mean-field analysis of multi-population neural
  networks with random synaptic weights and stochastic inputs.
\newblock \emph{Frontiers in computational neuroscience}, 3\penalty0 
(1):\penalty0 1--26, Feb. 2009.

\bibitem[Fazel(2002)]{fazel2002thesis}
M.~Fazel.
\newblock Matrix rank minimization with applications.
\newblock \emph{PhD Thesis, Stanford University}, 2002.

\bibitem[Galtier and Wainrib(2016)]{galtier2014local}
M.~Galtier and G.~Wainrib.
\newblock A local echo state property through the largest Lyapunov exponent. 
\newblock \emph{Neural Networks}, 76\penalty0 :39--45, 2016.

\bibitem[Ganguli and Sompolinsky(2010)]{GAN:2010}
S.~Ganguli and H.~Sompolinsky.
\newblock Short-term memory in neuronal networks through dynamical compressed
  sensing.
\newblock \emph{Advances in  Neural Information Processing Systems (NIPS)}, 
pages 667--675, 2010.

\bibitem[Ganguli et~al.(2008)Ganguli, Huh, and Sompolinsky]{SOM:2008}
S.~Ganguli, D.~Huh, and H.~Sompolinsky.
\newblock Memory traces in dynamical systems.
\newblock \emph{Proceedings of the National Academy of Sciences}, 105\penalty0
  (48):\penalty0 18970, 2008.

\bibitem[Gao et~al.(2016)Gao, Archer, Paninski, and Cunningham]{gao2016linear}
Y.~Gao, E.~Archer, L.~Paninski, and J.~P. Cunningham.
\newblock Linear dynamical neural population models through nonlinear
  embeddings.
\newblock \emph{Advances in Neural Information Processing Systems (NIPS)},
pages 163--171,  Dec. 2016.

\bibitem[Gobet et~al.(2001)Gobet, Lane, Croker, Cheng, Jones, Oliver, and
  Pine]{gobet2001chunking}
F.~Gobet, P.~C.~R. Lane, S.~Croker, P.~C.~H. Cheng, G.~Jones, I.~Oliver, and
  J.~M. Pine.
\newblock Chunking mechanisms in human learning.
\newblock \emph{Trends in Cognitive Sciences}, 5\penalty0 (6):\penalty0
  236--243, 2001.

\bibitem[Graves et~al.(2013)Graves, Mohamed, and Hinton]{graves2013speech}
A.~Graves, A.-R. Mohamed, and G.~Hinton.
\newblock Speech recognition with deep recurrent neural networks.
\newblock In \emph{IEEE International Conference on Acoustics, Speech and Signal 
  Processing (ICASSP)}, pages 6645--6649, 2013.

\bibitem[Gregor et~al.(2015)Gregor, Danihelka, Graves, and
  Wierstra]{gregor2015draw}
K.~Gregor, I.~Danihelka, A.~Graves, and D.~Wierstra.
\newblock Draw: A recurrent neural network for image generation.
\newblock \emph{arXiv preprint arXiv:1502.04623}, 2015.

\bibitem[Gross(2011)]{gross2011recovering}
D.~Gross.
\newblock Recovering low-rank matrices from few coefficients in any basis.
\newblock \emph{IEEE Transactions on Information Theory}, 57\penalty0
  (3):\penalty0 1548--1566, 2011.

\bibitem[Guha and Ward(2012)]{guha2012learning}
T.~Guha and R.~K. Ward.
\newblock Learning sparse representations for human action recognition.
\newblock \emph{IEEE Transactions on Pattern Analysis and Machine
  Intelligence}, 34\penalty0 (8):\penalty0 1576--1588, 2012.

\bibitem[Hinaut et~al.(2014)Hinaut, Petit, Pointeau, and
  Dominey]{hinaut2014exploring}
X.~Hinaut, M.~Petit, G.~Pointeau, and P.~F. Dominey.
\newblock Exploring the acquisition and production of grammatical constructions
  through human-robot interaction with echo state networks.
\newblock \emph{Frontiers in Neurorobotics}, 8\penalty0
(16):\penalty0 1--17, 2014.

\bibitem[Jaeger(2001)]{JAE:2001}
H.~Jaeger.
\newblock Short term memory in echo state networks.
\newblock \emph{GMD Report 152 German National Research Center for Information
  Technology}, 5:\penalty0 1--60, 2001.

\bibitem[Jaeger and Haas(2004)]{JAE:2004}
H.~Jaeger and H.~Haas.
\newblock Harnessing nonlinearity: predicting chaotic systems and saving energy
  in wireless communication.
\newblock \emph{Science}, 304\penalty0 (5667):\penalty0 78--80, 2004.

\bibitem[Jaggi et~al.(2010)Jaggi, Sulovsk, et~al.]{jaggi2010simple}
M.~Jaggi, M.~Sulovsk, et~al.
\newblock A simple algorithm for nuclear norm regularized problems.
\newblock \emph{International Conference on Machine
  Learning (ICML)}, pages 471--478, 2010.

\bibitem[Kalchbrenner et~al.(2016)Kalchbrenner, Graves, and
  Danihelka]{kalchbrenner2016gridlstm}
N.~Kalchbrenner, A.~Graves, and I.~Danihelka.
\newblock Grid long short-term memory.
\newblock \emph{International Conference on Learning Representations}, 2016.

\bibitem[Kavukcuoglu et~al.(2010)Kavukcuoglu, Sermanet, Boureau, Gregor,
  Mathieu, and LeCun]{kavukcuoglu2010learning}
K.~Kavukcuoglu, P.~Sermanet, Y.~L. Boureau, K.~Gregor, M.~Mathieu, and
  Y.~LeCun.
\newblock Learning convolutional feature hierarchies for visual recognition.
\newblock \emph{Advances in Neural Information Processing Systems (NIPS)}, pages
  1090--1098, 2010.

\bibitem[LeCun et~al.(2010)LeCun, Kavukcuoglu, Farabet,
  et~al.]{lecun2010convolutional}
Y.~LeCun, K.~Kavukcuoglu, C.~Farabet, et~al.
\newblock Convolutional networks and applications in vision.
\newblock IEEE International Symposium on Circuits and Systems \emph{ISCAS}, pages 253--256, 2010.

\bibitem[Legenstein and Maass(2007)]{legenstein2007edge}
R.~Legenstein and W.~Maass.
\newblock Edge of chaos and prediction of computational performance for neural
  circuit models.
\newblock \emph{Neural Networks}, 20\penalty0 (3):\penalty0 323--334, 2007.

\bibitem[Lipton et~al.(2016)Lipton, Kale, Elkan, and Wetzel]{lipton2016lstm}
Z.~Lipton, D.~Kale, C.~Elkan, and R.~Wetzel.
\newblock Learning to diagnose with lstm recurrent neural networks.
\newblock \emph{International Conference on Learning Representations}, 2016.

\bibitem[Liu and Vandenberghe(2009)]{liu2009interior}
Z.~Liu and L.~Vandenberghe.
\newblock Interior-point method for nuclear norm approximation with application
  to system identification.
\newblock \emph{SIAM Journal on Matrix Analysis and Applications}, 31\penalty0
  (3):\penalty0 1235--1256, 2009.

\bibitem[Luko{\v{s}}evi{\v{c}}ius(2012)]{lukovsevivcius2012practical}
M.~Luko{\v{s}}evi{\v{c}}ius.
\newblock A practical guide to applying echo state networks.
\newblock \emph{Neural Networks: Tricks of the Trade}, pages 659--686. 2012.

\bibitem[Luko{\v{s}}evi{\v{c}}ius and
  Jaeger(2009)]{lukovsevivcius2009reservoir}
M.~Luko{\v{s}}evi{\v{c}}ius and H.~Jaeger.
\newblock Reservoir computing approaches to recurrent neural network training.
\newblock \emph{Computer Science Review}, 3\penalty0 (3):\penalty0 127--149,
  2009.

\bibitem[Maass et~al.(2002)Maass, Natschl{\"a}ger, and Markram]{MAA:2002}
W.~Maass, T.~Natschl{\"a}ger, and H.~Markram.
\newblock Real-time computing without stable states: A new framework for neural
  computation based on perturbations.
\newblock \emph{Neural Computation}, 14\penalty0 (11):\penalty0 2531--2560,
  2002.

\bibitem[Mallat(2016)]{mallat2016understanding}
S.~Mallat.
\newblock Understanding deep convolutional networks.
\newblock \emph{Philosophical Transactions of the  Royal Society A}, 374\penalty0 (2065):\penalty0
  20150203, 2016.

\bibitem[Manjunath and Jaeger(2013)]{manjunath2013echo}
G.~Manjunath and H.~Jaeger.
\newblock Echo state property linked to an input: Exploring a fundamental
  characteristic of recurrent neural networks.
\newblock \emph{Neural Computation}, 25\penalty0 (3):\penalty0 671--696, 2013.

\bibitem[Maruyama et~al.(2014)Maruyama, Maeda, Moroda, Kato, Inoue, Miyakawa,
  and Aonishi]{maruyama2014detecting}
R.~Maruyama, K.~Maeda, H.~Moroda, I.~Kato, M.~Inoue, H.~Miyakawa, and
  T.~Aonishi.
\newblock Detecting cells using non-negative matrix factorization on calcium
  imaging data.
\newblock \emph{Neural Networks}, 55:\penalty0 11--19, 2014.

\bibitem[Massar and Massar(2013)]{massar2013mean}
M.~Massar and S.~Massar.
\newblock Mean-field theory of echo state networks.
\newblock \emph{Physical Review E}, 87\penalty0 (4):\penalty0 042809, 2013.

\bibitem[Olshausen and Field(1996)]{OLS:1996}
B.~A. Olshausen and D.~Field.
\newblock Emergence of simple-cell receptive field properties by learning a
  sparse code for natural images.
\newblock \emph{Nature}, 381\penalty0 (13):\penalty0 607--609, Jun 1996.

\bibitem[Patel et~al.(2015)Patel, Nguyen, and Baraniuk]{patel2015probabilistic}
A.~B. Patel, T.~Nguyen, and R.~G. Baraniuk.
\newblock A probabilistic theory of deep learning.
\newblock \emph{arXiv preprint arXiv:1504.00641}, 2015.

\bibitem[Rajan et~al.(2010)Rajan, Abbott, and Sompolinsky]{rajan2010stimulus}
K.~Rajan, L.~F. Abbott, and H.~Sompolinsky.
\newblock Stimulus-dependent suppression of chaos in recurrent neural networks.
\newblock \emph{Physical Review E}, 82\penalty0 (1):\penalty0 011903, 2010.

\bibitem[Rauhut(2010)]{RAH:2010}
H.~Rauhut.
\newblock Compressive sensing and structured random matrices.
\newblock \emph{Theoretical Found. and Numerical Methods for Sparse Recovery},
  9:\penalty0 1--92, 2010.

\bibitem[Recht et~al.(2010)Recht, Fazel, and Parrilo]{recht2010guaranteed}
B.~Recht, M.~Fazel, and P.~A. Parrilo.
\newblock Guaranteed minimum-rank solutions of linear matrix equations via
  nuclear norm minimization.
\newblock \emph{SIAM review}, 52\penalty0 (3):\penalty0 471--501, 2010.

\bibitem[Rozell et~al.(2010)Rozell, Johnson, Baraniuk, and Olshausen]{ROZ:2008}
C.~J. Rozell, D.~H. Johnson, R.~G. Baraniuk, and B.~A. Olshausen.
\newblock Sparse coding via thresholding and local competition in neural
  circuits.
\newblock \emph{Neural Computation}, 20\penalty0 (10):\penalty0 2526--2563, Oct
  2010.

\bibitem[Sak et~al.(2014)Sak, Senior, and Beaufays]{sak2014long}
H.~Sak, A.~Senior, and F.~Beaufays.
\newblock Long short-term memory recurrent neural network architectures for
  large scale acoustic modeling.
\newblock \emph{Conference of International Speech
  Communication Association (INTERSPEECH)}, 2014.

\bibitem[Shapero et~al.(2014)Shapero, Zhu, Hasler, and Rozell]{shapero.13}
S.~Shapero, M.~Zhu, P.~Hasler, and C.~J. Rozell.
\newblock Optimal sparse approximation with integrate and fire neurons.
\newblock \emph{International Journal of Neural Systems}, 24\penalty0
  (05):\penalty0 1440001, Aaug 2014.

\bibitem[Singer and Cucuringu(2010)]{singer2010uniqueness}
A.~Singer and M.~Cucuringu.
\newblock Uniqueness of low-rank matrix completion by rigidity theory.
\newblock \emph{SIAM Journal on Matrix Analysis and Applications}, 31\penalty0
  (4):\penalty0 1621--1641, 2010.

\bibitem[Sompolinsky et~al.(1988)Sompolinsky, Crisanti, and
  Sommers]{sompolinsky1988chaos}
H.~Sompolinsky, A.~Crisanti, and H.~J. Sommers.
\newblock Chaos in random neural networks.
\newblock \emph{Physical Review Letters}, 61\penalty0 (3):\penalty0 259, 1988.

\bibitem[Sukhbaatar et~al.()Sukhbaatar, Weston, Fergus,
  et~al.]{sukhbaatar2015end}
S.~Sukhbaatar, J.~Weston, R.~Fergus, et~al.
\newblock End-to-end memory networks.
\newblock \emph{Advances in Neural Information Processing Systems (NIPS)}, pages
  2431--2439.

\bibitem[Toh and Yun(2010)]{toh2010accelerated}
K.-C. Toh and S.~Yun.
\newblock An accelerated proximal gradient algorithm for nuclear norm
  regularized linear least squares problems.
\newblock \emph{Pacific Journal of Optimization}, 6\penalty0
  (615-640):\penalty0 15, 2010.

\bibitem[Tropp(2012)]{tropp2012user}
J.~A. Tropp.
\newblock User-friendly tail bounds for sums of random matrices.
\newblock \emph{Foundations of Computational Mathematics}, 12\penalty0
  (4):\penalty0 389--434, 2012.

\bibitem[Tropp et~al.(2009)Tropp, Laska, Duarte, Romberg, and
  Baraniuk]{Tropp2009a}
J.A. Tropp, J.N. Laska, M.F. Duarte, J.K. Romberg, and R.G. Baraniuk.
\newblock {Beyond Nyquist: efficient sampling of sparse bandlimited signals}.
\newblock \emph{IEEE Transactions on Information Theory}, 56\penalty0 (1), Jan. 2009.
\newblock ISSN 0018-9448.

\bibitem[Vapnik and Chervonenkis(1971)]{vapnik1971uniform}
V.~N. Vapnik and A.~Y. Chervonenkis.
\newblock On the uniform convergence of relative frequencies of events to their
  probabilities.
\newblock \emph{Theory of Probability \& Its Applications}, 16\penalty0
  (2):\penalty0 264--280, 1971.

\bibitem[Vardan et~al.(2016)Vardan, Romano, and Elad]{vardan2016convolutional}
P.~Vardan, Y.~Romano, and M.~Elad.
\newblock Convolutional neural networks analyzed via convolutional sparse
  coding.
\newblock \emph{arXiv preprint arXiv:1607.08194}, 2016.

\bibitem[Veganzones et~al.(2016)Veganzones, Simoes, Licciardi, Yokoya,
  Bioucas-Dias, and Chanussot]{veganzones2016hyperspectral}
M.~A. Veganzones, M.~Simoes, G.~Licciardi, N.~Yokoya, J.M. Bioucas-Dias, and
  J.~Chanussot.
\newblock Hyperspectral super-resolution of locally low rank images from
  complementary multisource data.
\newblock \emph{IEEE Transactions on Image Processing}, 25\penalty0
  (1):\penalty0 274--288, 2016.

\bibitem[Verstraeten et~al.(2007)Verstraeten, Schrauwen, d’Haene, and
  Stroobandt]{verstraeten2007experimental}
D.~Verstraeten, B.~Schrauwen, M.~d’Haene, and D.~Stroobandt.
\newblock An experimental unification of reservoir computing methods.
\newblock \emph{Neural Networks}, 20\penalty0 (3):\penalty0 391--403, 2007.

\bibitem[Verstraeten et~al.(2010)Verstraeten, Dambre, Dutoit, and
  Schrauwen]{verstraeten2010memory}
D.~Verstraeten, J.~Dambre, X.~Dutoit, and B.~Schrauwen.
\newblock Memory versus non-linearity in reservoirs.
\newblock In \emph{The 2010 International Joint Conference on Neural Networks
  (IJCNN)}, pages 1--8, 2010.

\bibitem[Wainrib(2015)]{wainrib2015context}
G.~Wainrib.
\newblock Context dependent representation in recurrent neural networks.
\newblock \emph{arXiv preprint http://arxiv.org/pdf/1506.06602.pdf}, 2015.

\bibitem[Wallace et~al.(2013)Wallace, Hamid, and Latham]{latham2013}
E.~Wallace, R.~M. Hamid, and P.~E. Latham.
\newblock Randomly connected networks have short temporal memory.
\newblock \emph{Neural Computation}, 25:\penalty0 1408--1439, 2013.

\bibitem[White et~al.(2004)White, Lee, and Sompolinsky]{WHI:2004}
O.~L. White, D.~D. Lee, and H.~Sompolinsky.
\newblock Short-term memory in orthogonal neural networks.
\newblock \emph{Physical Review Letters}, 92\penalty0 (14):\penalty0 148102,
  2004.

\bibitem[Wilson and Cowan(1972)]{wilson1972excitatory}
H.~R. Wilson and J.~D. Cowan.
\newblock Excitatory and inhibitory interactions in localized populations of
  model neurons.
\newblock \emph{Biophysical Journal}, 12\penalty0 (1):\penalty0 1, 1972.

\bibitem[Yildiz et~al.(2012)Yildiz, Jaeger, and Kiebel]{yildiz2012re}
I.~B. Yildiz, H.~Jaeger, and S.~J. Kiebel.
\newblock Re-visiting the echo state property.
\newblock \emph{Neural networks}, 35:\penalty0 1--9, 2012.

\bibitem[Zhang et~al.(2014)Zhang, He, Zhang, Shen, and
  Yuan]{zhang2014hyperspectral}
H.~Zhang, W.~He, L.~Zhang, H.~Shen, and Q.~Yuan.
\newblock Hyperspectral image restoration using low-rank matrix recovery.
\newblock \emph{IEEE Transactions on Geoscience and Remote Sensing},
  52\penalty0 (8):\penalty0 4729--4743, 2014.

\end{thebibliography}


\end{document}